\theoremstyle{plain} \textwidth=430pt \textheight=660pt
\newcommand{\Rmnum}[1]{\expandafter\@slowroman #1@}
\newtheorem{theorem}{Theorem}[section]
\newtheorem{lemma}[theorem]{Lemma}
\newtheorem{proposition}[theorem]{Proposition}
\newtheorem{corollary}[theorem]{Corollary}
\newtheorem{example}[theorem]{Example}
\title{Critical Temperature of Periodic Ising Models}
\author{Zhongyang Li\footnote{Department of Pure Mathematics and Mathematical Statistics, University of Cambridge, Cambridge, GB30WA, UK, zl296@statslab.cam.ac.uk}}
\date{}
\begin{document}

\maketitle
\begin{abstract}
A periodic Ising model has interactions which are
invariant under translations of a full-rank sublattice
$\mathfrak{L}$ of $\mathbb{Z}^2$. We prove an exact, quantitative
characterization of the critical temperature, defined as the supremum of
temperatures for which the spontaneous magnetization is strictly positive. 
For the ferromagnetic model, the critical temperature is the solution of a
certain algebraic equation, resulting from the condition that the spectral
curve of the corresponding dimer model on the Fisher graph has a
real zero on the unit torus. With our technique we provide a simple proof
for the exponential decay of spin-spin correlations above the critical temperature, as well as the
exponential decay of the edge-edge correlations for all non-critical
edge weights of the corresponding dimer model on periodic Fisher graphs.
\end{abstract}

\section{Introduction}

Phase transition is a central topic
of statistical mechanics. A system undergoes a phase transition
whenever, for some value of the temperature and other relevant
thermodynamic parameters, two or more phases can coexist in
equilibrium. Different properties of the pure phases manifest
themselves also as discontinuities in certain observables as a
function of the appropriate thermodynamic variables, e.g.
discontinuity of the magnetization as a function of the magnetic
field in a ferromagnet. Of special interest is the temperature at
which the phase transition occurs, that is, the
\textbf{critical temperature}.

Lebowitz and Martin-L\"{o}f \cite{lm} defined the critical
temperature, $T_c$, for Ising ferromagnets as the supremum of
temperatures such that the spontaneous magnetization is strictly positive.  Lebowitz \cite{le} identified $T_c$ with the self-dual point for an isotropic, two-dimensional
square grid Ising model.  Based on various
correlation inequalities and differential inequalities,
 Aizenman, Bursky and Fern\'{a}ndez \cite{abf} characterize the
phase transition by the exponential decay of two-point spin
correlations above $T_c$, for ferromagnetic Ising models with dimension $D\geq 2$. 

For the two-dimensional Ising model, another approach is to
apply the Fisher correspondence \cite{fi}, which is a
measure-preserving bijection between the even spin-spin correlation
functions of the Ising model on a graph $G$, and the edge
probabilities of a dimer model on a decorated graph, the
\textbf{Fisher graph}. Since then, dimer techniques have been a
powerful tool in solving the two-dimensional Ising model,
see for example the paper of Kasteleyn \cite{ka2}, and the book of
McCoy and Wu \cite{mw}. However, due to the complexity related to
large matrices, the two-dimensional Ising model has acquired a
notorious reputation for difficulty. On the mathematical side,
Kenyon \cite{ke2,kos,ko} has proven numerous spectacular results about the dimer model
on bipartite graphs in recent years. Those results make it possible
to look at the two-dimensional Ising model from a new perspective.
In our paper, we follow the dimer approach to the Ising model, and
explicitly characterize the critical temperature, defined as in
\cite{lm} and \cite{abf}, with the zero of an algebraic equation.

\begin{theorem}Let $T_c$ be the critical temperature of the ferromagnetic, two-dimensional periodic Ising model,
defined by the largest temperature such that the spontaneous
magnetization is strictly positive, then $0<T_c<\infty$. $T_c$ is determined by the condition that the spectral curve of the corresponding
dimer model on the Fisher graph has a real zero on $\mathbb{T}^2$.
\end{theorem}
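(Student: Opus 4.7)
The plan is to translate the Ising problem into a dimer problem via the Fisher correspondence and then read off phase information from the spectral curve $P(z,w)=\det K(z,w)$ of the Kasteleyn operator. First, I would fix a fundamental domain for the sublattice $\mathfrak{L}$, build the periodic Fisher graph $G_F$, and assign to each decoration edge the usual weights coming from the ferromagnetic couplings (with weights that are monotone in the inverse temperature $\beta=1/T$). Under the Fisher bijection, even spin-spin correlations equal signed sums of dimer edge-statistics on $G_F$, so the spontaneous magnetization is controlled by the long-distance behavior of certain edge-edge correlations in the infinite-volume Gibbs measure of the dimer model.

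Second, I would express those edge correlations as Pfaffians of finite minors of the inverse Kasteleyn operator $K^{-1}$, and realize $K^{-1}(x,y)$ as a double contour integral over $\mathbb{T}^2$ of the cofactor of $K(z,w)$ divided by $P(z,w)$. Steepest-descent/saddle-point analysis then shows: if $P(z,w)\neq 0$ for all $(z,w)\in\mathbb{T}^2$, the entries of $K^{-1}$ decay exponentially, and hence so do all edge correlations and, by Fisher, all spin correlations; if $P$ has only complex (non-real) zeros on $\mathbb{T}^2$, the contributions to the integral are oscillatory and still produce exponential decay after cancellation, whereas a real zero on $\mathbb{T}^2$ prevents contour deformation and is precisely the obstruction to exponential decay. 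Consequently, exponential decay of spin-spin correlations (and in particular vanishing magnetization) holds at every temperature for which $P$ has no real zero on $\mathbb{T}^2$.

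Third, to obtain $0<T_c<\infty$, I would examine the two limits. At $T=\infty$ all couplings are effectively zero, the Fisher weights degenerate to a configuration where $P$ is strictly bounded away from $0$ on $\mathbb{T}^2$, so by the previous step the magnetization vanishes in a neighborhood of $T=\infty$, giving $T_c<\infty$. At $T\downarrow 0$ the standard Peierls contour argument gives positive magnetization, so $T_c>0$. The spectral curve is an algebraic curve whose coefficients are real-analytic, monotone functions of $\beta$, and it is real on $\mathbb{T}^2$ for ferromagnetic weights; a continuity/monotonicity argument in $\beta$ then shows that there is a unique $\beta_c$ at which $P_\beta$ first develops a real zero on $\mathbb{T}^2$, and by the dichotomy above $T_c=1/\beta_c$.

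The main obstacle I anticipate is the last equivalence: showing that the critical temperature in the Lebowitz--Martin-L\"of sense coincides with the spectral-curve crossing rather than lying on one side of it. The easy direction, that $T>T_c^{spec}\Rightarrow$ exponential decay $\Rightarrow$ zero magnetization, follows from the contour integral analysis above. The hard direction is to rule out that the magnetization remains zero for an interval of temperatures below $T_c^{spec}$; this will require using the ferromagnetic structure (FKG/GKS inequalities, so that magnetization is monotone in $\beta$) together with a lower bound on correlations produced directly from the non-exponential behaviour of $K^{-1}$ caused by the real zero of $P$ on $\mathbb{T}^2$. A secondary technical difficulty is that the Fisher graph is non-bipartite, so the bipartite Kenyon--Okounkov--Sheffield machinery does not apply verbatim; I expect to need Pfaffian analogues of those asymptotic results, and care to control the signs introduced by the Kasteleyn orientation when extracting real zeros of $P$ on $\mathbb{T}^2$.
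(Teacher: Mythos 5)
Your overall strategy (pass to the dimer model on the Fisher graph, express correlations via Pfaffians and $K^{-1}$ Fourier integrals, deduce exponential decay away from zeros of $P$ on $\mathbb{T}^2$, and then locate a unique crossing in $\beta$) is in the right ballpark, and there is substantial overlap with the paper's Lemmas 4.8--4.10 and 4.13. However, there are several gaps that the paper fills with nontrivial lemmas you have not anticipated, and one of your steps is actually wrong as stated.

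First, your dichotomy in Step 2 is incorrect. You claim that non-real zeros of $P$ on $\mathbb{T}^2$ still give exponential decay ``after cancellation.'' They do not: if $P$ has any zero on $\mathbb{T}^2$, the integrand $\mathrm{Cof}[K(z,w)]/P(z,w)$ is singular on $\mathbb{T}^2$ and cannot extend holomorphically to a neighborhood, so its Fourier coefficients cannot decay exponentially, regardless of whether the zero lies at a real point. What actually saves the argument is a structural fact from \cite{li} (Lemma 4.3 of the paper): for this Fisher graph, $P$ either has no zero on $\mathbb{T}^2$ or has a single real node. This result is not something you can skip; without it your dichotomy is false.

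Second, and more seriously, your ``hard direction'' (ruling out an interval of zero magnetization below the spectral-curve crossing) is left entirely open. The paper handles this via an architecture you do not mention: (i) the uniqueness of translation-invariant Gibbs measures for even spin correlations (Section 2, Theorem 2.7), which lets one identify the periodic-boundary-conditions correlations (which are what the dimer model computes) with the $+$-boundary-condition correlations (which control $m^*$); (ii) the representation of $\lim_N\langle\sigma_{00}\sigma_{0N}\rangle^2_p$ as a block Toeplitz determinant, the proof that the symbol has unit determinant (Lemma 4.5), Widom's theorem, and the Gohberg--Krein holomorphy criterion to prove the limit is analytic off the spectral curve (Lemma 4.7); and (iii) citing Aizenman et al.\ \cite{abf,ai2} for the exponential decay of spin correlations for $\beta<\beta_c$, which combined with the Fisher correspondence forces the spectral curve to avoid $\mathbb{T}^2$ for $\beta<\beta_c$. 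Your steepest-descent sketch addresses only one inequality ($\beta_c^{\mathrm{spec}}\le\beta_c$ in effect), and your suggestion of extracting a positive lower bound on correlations directly from the non-exponential behaviour of $K^{-1}$ is not what the paper does; there is no obvious route from ``$K^{-1}$ decays slowly'' to ``$m^*>0$,'' since the magnetization is a boundary-condition-sensitive quantity while the dimer computation uses periodic conditions.

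Third, your claim of a ``continuity/monotonicity argument in $\beta$'' for the uniqueness of $\beta_c$ is much too optimistic. The spectral curve coefficients are analytic in $\beta$, but it is not obvious that $P(\pm1,\pm1)$ is monotone in $\beta$. The paper's Lemma 4.13 establishes uniqueness of the crossing via a careful duality argument: one computes the derivative of $g(t)=\mathrm{Pf}K(1,1)$ (after the duality transform) using correlation inequalities and shows $g'(t)>0$ whenever $g(t)\ge 0$, which forces a unique sign change. This is a substantive piece of work, not a routine monotonicity observation. Your Peierls argument for $T_c>0$ is fine (though the paper obtains finiteness and positivity of $\beta_c$ simultaneously from the sign-change argument rather than from Peierls), and your $T_c<\infty$ argument is broadly compatible with the paper, but the uniqueness step needs the duality machinery.
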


Theorem 1.1 allows one to compute the critical temperature
of arbitrary periodic Ising ferromagnets accurately, by solving an
algebraic equation, see Example~6.14. The proof of Theorem 1.1 consists of
3 steps:

\textbf{Step 1}: Applying Lebowitz's technique \cite{le2} and the
FKG inequality, we prove that the weak
limit of the even spin-spin correlation functions is independent of
the boundary conditions as the size of the graph goes to infinity.
For the uniqueness theorem about the Gibbs measure of dimer models on a more general class of non-bipartite graphs,
we refer to Corollary B.1.

\textbf{Step 2}: Using an $n\times n$ torus to approximate the
infinite bi-periodic graph, we express the two-point spin
correlation functions as the determinant of certain block Toeplitz
matrices. For basics of Toeplitz and Hankel matrices, we refer to the appendix.  We prove that the determinant of the symbol of such block
Toeplitz matrices is identically~1. Using the FKG inequality, we
prove that there is a unique $T_{c,p}$, defined to be the lowest
temperature such that the limit of the two-point spin correlation
function is 0. The analytic property for the limit of two-point spin
correlation function in case the entries of the block Toeplitz matrices
are analytic with respect to the reciprocal temperature follows from
Widom's theorem \cite{wi} and a result (Lemma 4.6) from operator analysis
{\cite{gk}}. Hence $T_{c,p}$ has to satisfy the condition that the
spectral curve has a real zero on $\mathbb{T}^2$.

\textbf{Step 3}: Characterization of $T_c$ and the condition that the
spectral curve has a real zero on $\mathbb{T}^2$ (critical dimer
weights). Since the two-point spin correlation functions are independent of 
different translation invariant Gibbs measures, we derive
$T_{c,p}=T_{c}$, and at the critical temperature of the Ising model, the corresponding dimer model is also critical. We also prove
that as $\beta$ (reciprocal temperature) increases from 0 to
$\infty$, there is a unique $\beta_0$ ($0<\beta_0<\infty$), such that
the spectral curve has a real zero on $\mathbb{T}^2$. The uniqueness
of the critical dimer weights implies its identification with the criticality of the Ising model.

Here is a description of the structure of the paper.
Section 2 proves the uniqueness of even spin-spin correlation functions for periodic ferromagnetic 2D Ising models. In section 3 we discuss the correspondence between Ising models on a square grid and dimer models on the Fisher graph. We also prove that the spectral curve is invariant under the duality transformation. In Section 4, we complete Step 2 and Step 3 to prove the main theorem.
The block Toeplitz and the transfer matrix technique also
provides a simple proof of exponential decay of the spin-spin
correlations at high temperature for Ising models with interactions satisfying certain symmetry properties, see
Section 5. For a proof of exponential decay for general
interactions, we refer to \cite{abf,ai2}. In the appendix, we discuss several basic facts of Toeplitz and Hankel matrices for the readers' reference.
\\
\\
\noindent\textbf{Acknowledgements} The author would like to thank Richard Kenyon
for many stimulating discussions. The author is also grateful to the anonymous reviewer for the detailed and highly valuable suggestions. The author acknowledges support from the EPSRC under grant EP/103372X/1.

\section{Translation Invariant Gibbs Measure}

\subsection{F.K.G Inequality for Periodic Interactions}

We consider a Ising spin system with a ferromagnetic pair
interaction in a finite box $\Lambda$ of $\mathbb{Z}^2$, embedded on the whole plane,
i.e, at each point $p$ of the lattice there is a spin
$\sigma_p=\pm1$, and the conditional probability of a spin
configuration in the box $\Lambda$ given a configuration outside it
is proportional to
\begin{eqnarray}
e^{-E_{\Lambda}(\sigma)}=e^{\beta(\frac{1}{2}\sum_{p\neq
q\in\Lambda,|p-q|=1}J_{pq}\sigma_p\sigma_q+\sum_{p\in\Lambda}h\sigma_p+\sum_{p\in\Lambda,q\in\Lambda^c}J_{pq}\sigma_p\sigma_q)},\label{probability}
\end{eqnarray}
where $J_{pq}>0$ is the pair interaction. Assume $J_{pq}$ has period
$(m,n)$, i.e, for any $i,j\in\mathbb{Z}$,
$J_{pq}=J_{p+(im,jn),q+(im,jn)}$. $h$ is the uniform external
magnetic field, and $\beta$ is the reciprocal temperature. A
boundary condition for the box $\Lambda$, is specified by giving a
probability distribution $b_{\Lambda}$ for the configurations
outside $\Lambda$. Let $\Gamma$ be the set of all Ising spin
configurations on $\Lambda$. Under the partial order relation
``$-$''$<$``$+$'', the partial order set $\Gamma$ is a
\textbf{lattice}, since any two configurations $x$ and $y$ in
$\Gamma$ have a least upper bound $x\vee y$ and a greatest lower
bound $x\wedge y$. Moreover, $\Gamma$ is a \textbf{finite
distributive lattice}, see \cite{fkg} for a definition.  We have the
following theorem:
\begin{theorem}
Let $f$ be an increasing function with respect to the partial order
relation on $\Gamma$, $\mu_1,\mu_2$ be two probability measures, if
all the following conditions are satisfied
\begin{itemize}
\item the boundary condition $\delta_1\geq\delta_2$;
\item the external magnetic field $h_1\geq h_2$;
\item the pair interaction $J_{pq}^{1}\geq J_{pq}^2$;
\end{itemize}
Then the expected values of f satisfy
\begin{eqnarray*}
\langle f\rangle_{\mu_1}\geq \langle f\rangle_{\mu_2}.
\end{eqnarray*}
\end{theorem}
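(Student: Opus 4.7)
The plan is to derive the stochastic domination $\mu_1\succeq\mu_2$ on the finite distributive lattice $\Gamma$ by checking the hypotheses of Holley's inequality: if the FKG lattice condition holds for $\mu_1$ and $\mu_2$ individually, and the cross inequality
\[
\mu_1(\sigma\vee\tau)\,\mu_2(\sigma\wedge\tau)\;\ge\;\mu_1(\sigma)\,\mu_2(\tau)\qquad(\sigma,\tau\in\Gamma)
\]
holds, then $\mu_1$ dominates $\mu_2$, and in particular $\langle f\rangle_{\mu_1}\ge\langle f\rangle_{\mu_2}$ for every increasing $f$. Since the Ising specification in (\ref{probability}) is strictly positive, all measures in sight inherit strict positivity after averaging against the boundary distribution $\delta_i$, so Holley's theorem in the form of \cite{fkg} is applicable once the two lattice conditions are established.

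First I would verify the single-measure FKG lattice condition $\mu_i(\sigma\vee\tau)\mu_i(\sigma\wedge\tau)\ge\mu_i(\sigma)\mu_i(\tau)$. Taking logarithms, this reduces to the energetic inequality $E_i(\sigma)+E_i(\tau)\ge E_i(\sigma\vee\tau)+E_i(\sigma\wedge\tau)$, which I would check term by term. The external-field contribution is an exact equality because $\sigma_p+\tau_p=(\sigma\vee\tau)_p+(\sigma\wedge\tau)_p$ for each $p$. Each bulk and boundary pair interaction reduces to the elementary four-value identity
\[
ab+cd\;\le\;\max(a,c)\max(b,d)+\min(a,c)\min(b,d)\qquad(a,b,c,d\in\{-1,+1\}),
\]
verified by a brief case analysis and then multiplied by the non-negative coupling $\beta J_{pq}\ge 0$.

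Next I would turn to the cross Holley inequality. Taking logarithms, the normalizations $Z_1$ and $Z_2$ cancel and the required inequality reads $E_1(\sigma\vee\tau)+E_2(\sigma\wedge\tau)\le E_1(\sigma)+E_2(\tau)$. The external-field term contributes $\beta(h_1-h_2)\bigl((\sigma\vee\tau)_p-\sigma_p\bigr)\ge 0$, combining $h_1\ge h_2$ with $(\sigma\vee\tau)_p\ge\sigma_p$. For the pair interactions I would combine the same four-value identity with $J^1_{pq}\ge J^2_{pq}\ge 0$; for the boundary-to-bulk couplings I would additionally use $\delta_1\ge\delta_2$, first for deterministic boundary configurations and then by inserting a monotone coupling of $\delta_1,\delta_2$ and averaging. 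With both lattice conditions in hand, Holley's theorem yields $\mu_1\succeq\mu_2$ and hence the stated monotonicity of $\langle f\rangle$.

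The main obstacle is this last verification of the cross Holley inequality: the three orderings on boundary, external field, and pair couplings each enter through a different type of term in the energy, so one must arrange the term-by-term decomposition of $E_1(\sigma)+E_2(\tau)-E_1(\sigma\vee\tau)-E_2(\sigma\wedge\tau)$ so that every resulting piece is non-negative after invoking exactly one of the three hypotheses, and one must justify the passage from deterministic to general boundary distributions through a monotone coupling step.
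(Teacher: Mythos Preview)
Your overall approach is the paper's: reduce to Holley's criterion
\[
\mu_1(\sigma\vee\tau)\,\mu_2(\sigma\wedge\tau)\;\ge\;\mu_1(\sigma)\,\mu_2(\tau)
\]
and verify it term by term in the energy. The separate single-measure FKG check you perform first is correct but redundant, being the special case $\mu_1=\mu_2$ of the cross condition; the paper omits it.

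The genuine gap is in the pair-interaction step of the cross inequality. You assert that the four-value inequality together with $J^1_{pq}\ge J^2_{pq}\ge 0$ yields the required bound, but what is actually needed on a single bond $pq$ is
\[
J^1\bigl(\max(\sigma_p,\tau_p)\max(\sigma_q,\tau_q)-\sigma_p\sigma_q\bigr)+J^2\bigl(\min(\sigma_p,\tau_p)\min(\sigma_q,\tau_q)-\tau_p\tau_q\bigr)\;\ge\;0,
\]
and with $(\sigma_p,\sigma_q)=(-1,-1)$, $(\tau_p,\tau_q)=(-1,+1)$ this equals $-2J^1+2J^2<0$ whenever $J^1>J^2$. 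Concretely, for two spins with $h=0$ and no boundary, the Holley condition at $\sigma=(-,-)$, $\tau=(-,+)$ reads $e^{-\beta J^1}e^{\beta J^2}\ge e^{\beta J^1}e^{-\beta J^2}$, i.e.\ $J^2\ge J^1$. Stochastic domination itself fails here: the up-set $\{(+,+),(+,-),(-,+)\}$ has probability $1-\tfrac{e^{\beta J}}{2(e^{\beta J}+e^{-\beta J})}$, strictly decreasing in $J$. So no rearrangement of the term-by-term decomposition can rescue the $J^1\ge J^2$ clause; the monotonicity in the couplings (equivalently in $\beta$) used later in the paper really comes from Griffiths' second inequality for spin correlations, not from Holley. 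Your boundary and external-field verifications are fine and match the paper; the paper's own proof simply asserts ``the same method can be applied'' for the coupling comparison without carrying it out, so you are inheriting its defect rather than introducing a new one.
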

\begin{proof}
Since $\Gamma$ is a finite distributive lattice, according to
\cite{ho}, if any spin configuration $A,B\in\Gamma$ satisfy
\begin{eqnarray}
\mu_1(A\vee B)\mu_2(A\wedge
B)\geq\mu_1(A)\mu_2(B)\label{convexcondition},
\end{eqnarray}
then
\begin{eqnarray*}
\langle f\rangle_{\mu_1}\geq \langle f\rangle_{\mu_2}.
\end{eqnarray*}
for any increasing functions $f$ on $\Gamma$.

Let us prove Condition (\ref{convexcondition}) for
$\delta_1\geq\delta_2$, while $h_1=h_2$ and $J_{pq}^1=J_{pq}^2$. If $A$ does not have boundary configuration
$\delta_1$ or $B$ does not have boundary configuration $\delta_2$,
Condition (\ref{convexcondition}) is satisfied automatically since
the right hand side is zero while the left hand side is nonnegative.
Hence to prove Condition (\ref{convexcondition}), it suffices to
prove that for any A with boundary configuration $\delta_1$ and B
with boundary configuration $\delta_2$,
\begin{eqnarray}
\mu(A\vee B)\mu(A\wedge B)\geq\mu(A)\mu(B),\label{convexcondition2}
\end{eqnarray}
where $\mu$ is the probability measure on $\Gamma$ with free
boundary condition. Condition (\ref{convexcondition2}) is true if
energy for all bonds $J_e\geq 0$. Notice that $J_e$ may be
different from edge to edge. It is trivial to check Condition
(\ref{convexcondition2}) for each single bond with 16 different
combinations of spin configurations for A,B. If the inequality is true for the contribution of each edge in the probability, the inequality is true for the probability, since the probability is the product of contributions of each edge, and the partial order relation is defined bond by bond.

The same method can be applied to check Condition (\ref{convexcondition}) spin by
spin or bond by bond for the cases $h_1\geq h_2$ and $J_{pq}^1\geq J_{pq}^2$.
\end{proof}

\noindent\textbf{Remark.} Condition (2) is called the F.K.G inequality.

\subsection{Free Energy} As before, let $\Lambda$ be a finite box of
an infinite periodic square grid. The equilibrium state of the
system in $\Lambda$ is the probability distribution for
configurations in $\Lambda$ defined by (\ref{probability}) together
with boundary condition $b_{\Lambda}$, or equivalently, by the
family of correlation functions
\begin{eqnarray*}
\langle\sigma_A\rangle_{h,\Lambda,b_{\Lambda}}=\langle\prod_{p\in
A}\sigma_p\rangle_{h,\Lambda,b_{\Lambda}},\ \ \ for\ any\
A\subset\Lambda.
\end{eqnarray*}
An equilibrium state of the infinite system is defined to be a
family of correlation functions obtained as the limit of correlation
functions for a sequence of finite boxes with some boundary
conditions
\begin{eqnarray*}
\langle\sigma_A\rangle_{h,b}=\lim_{\Lambda\rightarrow\infty}\langle\sigma_A\rangle_{h,\Lambda,b_{\Lambda}}.
\end{eqnarray*}
We say a sequence of boxes $\{\Lambda_n\}_{n=1}^{\infty}$ tends to
infinity if
\begin{enumerate}
\item each $\Lambda_n$ is connected and $\Lambda_n\subset\Lambda_{n+1}$;
\item $\bigcup_{n}\Lambda_n=\textbf{T}$;
\item for any $l\geq 0$, let $\Lambda^l$ be the set of points in
$\Lambda$ with distance at most $l$ from its boundary, and
$|\Lambda|$ denote the number of points in $\Lambda$, then
\begin{eqnarray*}
\lim_{n\rightarrow\infty}\frac{|\Lambda_n^l|}{|\Lambda_n|}=0\ \
\ (\mathrm{Van\ Hove\ convergence} )
\end{eqnarray*}
\end{enumerate}
The letter $b$ indicates some arbitrary boundary condition, and
$b_{\Lambda}=``+"$ or $b_{\Lambda}=``-"$ will indicate the boundary
conditions defined by putting all spins outside $\Lambda$ $``+"$ or
$``-"$. The statement $\langle\sigma_{A}\rangle_{h,b}$ does not
depend on the boundary condition is equivalent to the statement that
the equilibrium is unique.

For a finite system, the free energy is defined as follows
\begin{eqnarray*}
F(h,\Lambda,b_{\Lambda})=\frac{1}{|\Lambda|}\log(\sum_{\sigma}e^{-E_{
\Lambda}(\sigma)}),
\end{eqnarray*}
where $E_{\Lambda}(\sigma)$ is defined in (\ref{probability}). We
have the following theorem about the free energy
\begin{theorem}
As $\Lambda\rightarrow\infty$, the limit of
$F(h,\Lambda,b_{\Lambda})$ exists and is independent of the boundary
condition. Assume
\begin{eqnarray*}
\lim_{\Lambda\rightarrow\infty}F(h,\Lambda,b_{\Lambda})=F(h).
\end{eqnarray*}
Then $F(h)$ is convex, and analytic for $h\neq 0$
\end{theorem}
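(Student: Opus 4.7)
The plan is to handle the three claims (existence and boundary-independence of the limit, convexity, analyticity for $h\ne 0$) as separate modules, in that order.

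First I would establish existence of $\lim_\Lambda F(h,\Lambda,b_\Lambda)$ for a convenient boundary condition, say free boundary. The classical route is subadditivity: if we partition a large box into congruent smaller boxes $B_1,\dots,B_N$, then removing the couplings that cross the internal walls only subtracts finitely many positive terms from $E_\Lambda$, and by the ferromagnetic assumption this produces a controlled inequality between $\log Z(\Lambda)$ and $\sum_j \log Z(B_j)$. The error is $O(|\partial \text{walls}|)$, which is $o(|\Lambda|)$ along a Van Hove sequence, so Fekete-style subadditivity gives the existence of $F(h)$ for free boundary.

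Next I would show independence of $b_\Lambda$. The key observation is that changing the boundary configuration only alters the boundary term $\sum_{p\in\Lambda,q\in\Lambda^c}J_{pq}\sigma_p\sigma_q$ in (\ref{probability}). Since $J_{pq}$ is periodic and only acts between nearest neighbours, this contributes at most $C|\partial\Lambda|$ to $E_\Lambda$ uniformly in the configuration, where $C$ depends on $\beta$ and $\sup_{pq}|J_{pq}|$. Consequently
\begin{equation*}
\bigl|\log Z(h,\Lambda,b_\Lambda)-\log Z(h,\Lambda,\text{free})\bigr|\le C|\Lambda^1|,
\end{equation*}
and dividing by $|\Lambda|$ and using $|\Lambda_n^1|/|\Lambda_n|\to 0$ (Van Hove) forces the two limits to coincide. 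The same bound handles any $b_\Lambda$, giving a common limit $F(h)$.

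Convexity is easiest. For each finite $\Lambda$, $F(h,\Lambda,b_\Lambda)=\frac{1}{|\Lambda|}\log\sum_\sigma e^{-E_\Lambda(\sigma)}$ is convex in $h$ because $E_\Lambda$ depends linearly on $h$ and the log-sum-exp of affine functions is convex; a pointwise limit of convex functions is convex, so $F(h)$ is convex.

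Analyticity for $h\ne 0$ is where the real work lies, and is the main obstacle. I would invoke the Lee--Yang circle theorem: for ferromagnetic pair interactions $J_{pq}>0$, the finite-volume partition function $Z(h,\Lambda,b_\Lambda)$, viewed as a polynomial in $z=e^{-2\beta h}$, has all its zeros on $|z|=1$, for every choice of $\pm$ boundary condition (and also free, via a standard limiting argument). Hence $\log Z(h,\Lambda,b_\Lambda)/|\Lambda|$ is holomorphic in $h$ on each open half-plane $\{\Re h>0\}$ and $\{\Re h<0\}$. The sequence is also locally uniformly bounded there, because in any strip $|\Re h|\ge \delta>0$ one has $|Z|\le e^{C|\Lambda|}$ trivially and $|Z|$ is bounded below using the Lee--Yang location of the zeros (e.g.\ factor $Z$ as $c\prod_k(z-z_k)$ with $|z_k|=1$ and bound $|z-z_k|$ from below when $|z|$ is bounded away from $1$). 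Therefore $\{F(h,\Lambda,b_\Lambda)\}$ forms a normal family on each half-plane $\{\Re h>0\}$ and $\{\Re h<0\}$, and since it converges pointwise on the real subset by Step~1--2, Vitali's theorem yields locally uniform convergence to a holomorphic function on each half-plane, proving analyticity of $F(h)$ for $h\ne 0$.
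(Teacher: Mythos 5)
Your proposal is correct and covers the same ground as the paper, but the treatment of analyticity and of the existence of the limit is genuinely different, so a brief comparison is in order.

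The paper proves boundary-independence and convexity exactly as you do (the boundary-term estimate $|F(h,\Lambda,b_\Lambda)-G(h,\Lambda)|\le \frac{|\partial\Lambda|}{|\Lambda|}\beta\max J_{pq}$ with Van Hove, and convexity from a finite-volume second-derivative computation rather than the log-sum-exp observation, but these are equivalent). For analyticity, the paper also invokes Lee--Yang, but then takes a more classical route: it assumes the existence of a limiting root-counting measure $d\mu(\theta)$ on the circle, writes the explicit integral representation $F(h)=h\beta+\int_0^\pi\log(z^2-2z\cos\theta+1)\,d\mu(\theta)$ with $z=e^{-2\beta h}$, and observes that for real $z>0$ the integrand is singular only if $z=1$, i.e.\ $h=0$. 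This gives an explicit formula for $F(h)$ but leaves the existence and tightness of $d\mu$ and the interchange of logarithm and limit implicit. Your approach replaces this by a normal-families argument: Lee--Yang gives zero-free half-planes, the lower bound $|Z(z)|\ge|Q_0|\,\bigl||z|-1\bigr|^{|\Lambda|}$ gives local uniform boundedness of $\{F(h,\Lambda,\cdot)\}$ on $\{|\Re h|\ge\delta\}$, and Vitali converts pointwise convergence on $\mathbb{R}\setminus\{0\}$ (already established) into holomorphy of the limit. This is cleaner in that it avoids the unstated compactness step, at the cost of losing the explicit integral formula for $F(h)$. You also make the existence of the limit an explicit step via subadditivity and Fekete, whereas the paper's proof is conditional (``We claim that if $\lim_\Lambda F(h,\Lambda,b_\Lambda)$ exists\ldots'') and never returns to verify existence for any boundary condition; your version is more complete on this point.

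One spot to tighten: you assert that Lee--Yang localizes the zeros for ``every choice of $\pm$ boundary condition.'' The $+$ boundary is equivalent to adding a nonnegative local field on boundary sites, so the multivariable Lee--Yang theorem gives a zero-free region $\{\Re h>0\}$ for that case; but for the $-$ boundary on $\{\Re h>0\}$ the induced boundary fields have the wrong sign and the argument does not go through directly. This is harmless because you only need to run Vitali for one boundary condition (free, or $+$ on $\Re h>0$ and $-$ on $\Re h<0$) and then appeal to the boundary-independence you have already proved, but you should say so rather than claim the zero-free statement uniformly in $b_\Lambda$.
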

\begin{proof}
First of all, for each fixed $\Lambda$ and $b_{\Lambda}$,
$F(h,\Lambda,b_{\Lambda})$ is a convex function of $h$. Assume
\begin{eqnarray*}
A_{\sigma}&=&e^{-E_{\Lambda}(\sigma)}\geq 0,\\
B_{\sigma}&=&\sum_{p\in\Lambda}\beta\sigma_p.
\end{eqnarray*}
Then
\begin{eqnarray*}
\frac{\partial^2F(h,\Lambda,b_{\Lambda})}{\partial h^2}
=\frac{(\sum_{\sigma}A_{\sigma})(\sum_{\sigma}A_{\sigma}B_{\sigma}^2)-(\sum_{\sigma}A_{\sigma}B_{\sigma})^2}{|\Lambda|(\sum_{\sigma}
A_{\sigma})^2}\geq 0.
\end{eqnarray*}

The inequality follows from applying the Cauchy-Schwartz to $\{\sqrt{A_{\sigma}}\}_{\sigma}$ and $\{\sqrt{A_{\sigma}}B_{\sigma}\}_{\sigma}$. Equality  holds only if all the $B_\sigma$ are equal, which is
impossible, hence $\frac{\partial^2F(h,\Lambda,b_{\Lambda})}{\partial h^2}>0$,
$F(h,\Lambda,b_{\Lambda})$ is convex.

We claim that if
$\lim_{\Lambda\rightarrow\infty}F(h,\Lambda,b_{\Lambda})$ exists, it
is independent of the boundary condition $b_{\Lambda}$. Assume 
$G(h,\Lambda)$ is the free energy for the free boundary conditions, namely
\begin{eqnarray*}
G(h,\Lambda)=\frac{1}{|\Lambda|}\log\sum_{\sigma}e^{\beta(\frac{1}{2}\sum_{p\neq
q\in\Lambda,|p-q|=1}J_{pq}\sigma_p\sigma_q+\sum_{p\in\Lambda}h\sigma_p)},
\end{eqnarray*}
then
\begin{eqnarray*}
G(h,\Lambda)-\frac{|\partial\Lambda|}{|\Lambda|}\beta\max_{p,q}J_{pq}\leq
F(h,\Lambda,b_{\Lambda})\leq
G(h,\Lambda)+\frac{|\partial\Lambda|}{|\Lambda|}\beta\max_{p,q}J_{pq}.
\end{eqnarray*}
Under the assumption that $\Lambda\rightarrow\infty$ in the sense of
Van Hove,
\begin{eqnarray*}
\lim_{\Lambda\rightarrow\infty}\frac{|\partial\Lambda|}{|\Lambda|}=0.
\end{eqnarray*}
Since $J_{pq}$ has finite period, we have
\begin{eqnarray*}
\lim_{\Lambda\rightarrow\infty}F(h,\Lambda,b_{\Lambda})=\lim_{\Lambda\rightarrow\infty}G(h,\Lambda),
\end{eqnarray*}
which is independent of boundary conditions.

It is convenient to introduce a new variable
\begin{eqnarray*}
z=e^{-2\beta h},
\end{eqnarray*}
then
\begin{eqnarray*}
e^{-|\Lambda|G}=Q(z)e^{|\Lambda|h\beta}\label{relation},
\end{eqnarray*}
where
\begin{eqnarray}
Q(z)=\sum Q_nz^n.\ \ \ \ (n=0,1,...,|\Lambda|)\label{polynomial}
\end{eqnarray}
The coefficients $Q_n$ are the contribution to the partition
function of the Ising lattice in zero external field from
configurations with the number of $"-"$ spins equal to $n$. The
following lemma is proved by Lee and Yang \cite{ly}
\begin{lemma}
Let $x_{\alpha\beta}=x_{\beta\alpha}(\alpha\neq\beta,
\alpha,\beta=1,2,...,n)$ be real numbers whose absolute values are
less than or equal to 1. Let $V=\{1,2,\cdots,n\}$. Divide the integers $1,2,...,n$ into 2
groups $a$ and $b$ so that there are $\gamma$ integers in group $a$
and $(n-\gamma)$ integers in group $b$. Consider the product of all
$x_{\alpha\beta}$ where $\alpha$ belongs to group $a$ and $\beta$
belongs to group $b$. We shall denote by $Q_{\gamma}$ the sum of all
such products over all the $\frac{n!}{\gamma!(n-\gamma)!}$ possible
ways of dividing the $n$ integers, in other words
\begin{eqnarray*}
Q_{\gamma}=\sum_{a\subset V, |a|=\gamma}\prod_{\alpha\in a}\prod_{\beta\in V\setminus a}x_{\alpha\beta}.
\end{eqnarray*}
Consider the polynomial
\begin{eqnarray*}
Q(z)=1+Q_1z+...+Q_nz^n.
\end{eqnarray*}
Then all the roots of the equation $Q=0$ are on the unit circle.
\end{lemma}
Apply the lemma to (\ref{polynomial}), we obtain that the roots of
$Q(z)=0$ are on the unit circle. The distribution of roots of
$Q(z)=0$ as $\Lambda\rightarrow\infty$ may be described by a measure $d\mu(\theta)$, so that $|\Lambda|d\mu(\theta)$ is the
number of roots between $e^{i\theta}$ and $e^{i(\theta+d\theta)}$.
Since $Q(z)$ has real coefficients, we have
\begin{eqnarray*}
d\mu(\theta)=d\mu(-\theta).
\end{eqnarray*}
Taking the logarithm of (\ref{relation}), and taking the limit as
$\Lambda\rightarrow\infty$, we have
\begin{eqnarray*}
F(h)&=&h\beta+\int_{0}^{2\pi}\log(z-e^{i\theta})d\mu(\theta)\\
&=&h\beta+\int_{0}^{\pi}\log(z^2-2z\cos\theta+1)d\mu(\theta).
\end{eqnarray*}
The singularities of $F(h)$ corresponds to zeros of $z^2-2z+1$.
Since $z=e^{-2\beta h}>0$, the only possible singularity of $F(h)$
happens at $h=0,z=1$. The convexity of $F(h)$ follows from the
convexity of $F(h,\Lambda,b_{\Lambda})$.
\end{proof}

\subsection{High Temperature}

Define the average magnetization as follows
\begin{eqnarray*}
m(h,\Lambda,b_{\Lambda})=\langle\frac{1}{|\Lambda|}\sum_{p\in\Lambda}\sigma_p\rangle_{h,\Lambda,b_{\Lambda}}=\frac{1}{\beta}\frac{\partial
F(h,\Lambda,b_{\Lambda})}{\partial h}.
\end{eqnarray*}
We have the following lemma about the average magnetization
\begin{lemma}
As $\Lambda\rightarrow\infty$, the limit of $m(h,\Lambda,b_{\Lambda})$ exists and is independent of boundary conditions. Moreover,
\begin{eqnarray*}
\lim_{\Lambda\rightarrow\infty}m(h,\Lambda,b_{\Lambda})=\frac{dF(h)}{dh},\
\ \ \ \ \forall h\neq 0.
\end{eqnarray*}
\end{lemma}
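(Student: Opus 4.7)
The plan is to deduce this lemma entirely from the convexity and convergence properties of the free energy already established in Theorem 2.2, by invoking a standard convex-analysis fact.

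First I would record that, by definition, $m(h,\Lambda,b_{\Lambda}) = \beta^{-1}\partial_h F(h,\Lambda,b_{\Lambda})$, and that Theorem 2.2 has already proved two crucial facts: (i) for each fixed $\Lambda,b_{\Lambda}$, the map $h\mapsto F(h,\Lambda,b_{\Lambda})$ is convex in $h$, and (ii) $F(h,\Lambda,b_{\Lambda})\to F(h)$ pointwise as $\Lambda\to\infty$ in the Van Hove sense, with $F$ convex on $\mathbb{R}$ and analytic on $\mathbb{R}\setminus\{0\}$.

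Next I would invoke the classical fact from convex analysis that if a sequence of convex functions $f_n$ on an interval $I\subset\mathbb{R}$ converges pointwise to a function $f$, then $f$ is convex, and at every point $h\in I$ where $f$ is differentiable one has $f_n'(h)\to f'(h)$ (more precisely, both the left and right derivatives of $f_n$ at $h$ converge to $f'(h)$). A short proof sketch for the paper can simply note that for $h_1<h<h_2$ the convexity of $f_n$ gives
\begin{eqnarray*}
\frac{f_n(h)-f_n(h_1)}{h-h_1}\leq f_n'(h)\leq \frac{f_n(h_2)-f_n(h)}{h_2-h},
\end{eqnarray*}
and passing to the limit in $n$ and then letting $h_1\uparrow h$, $h_2\downarrow h$ squeezes $f_n'(h)$ to $f'(h)$ whenever $f$ is differentiable at $h$.

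Applying this with $f_n = F(h,\Lambda_n,b_{\Lambda_n})$ and $f = F(h)$, at any $h\neq 0$ the analyticity of $F$ guarantees differentiability, so $\partial_h F(h,\Lambda,b_{\Lambda})\to F'(h)$. Dividing by $\beta$ yields the existence of the limit of $m(h,\Lambda,b_{\Lambda})$ together with the identity $\lim m(h,\Lambda,b_{\Lambda})=\beta^{-1}F'(h)=dF(h)/dh$ (recall $F(h)$ in this paper is defined with $\beta$ absorbed so the factor of $\beta^{-1}$ matches the statement, which I would double-check against the exact normalization in the proof). Independence of boundary conditions is automatic since the limit equals $F'(h)$, which does not depend on $b_{\Lambda}$.

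The only mild obstacle is making sure the convex-analysis lemma is applied with a uniform domain of convergence so that one can genuinely send $h_1,h_2\to h$ after sending $\Lambda\to\infty$; this is harmless because pointwise convergence of convex functions on an open interval is automatically locally uniform, so no additional work is required. The restriction $h\neq 0$ is essential since $F$ may fail to be differentiable at $h=0$, which is precisely where spontaneous magnetization can appear.
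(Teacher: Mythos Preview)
Your argument is correct and self-contained: once Theorem~2.2 supplies convexity of each $F(h,\Lambda,b_\Lambda)$, pointwise convergence to $F(h)$, and analyticity of $F$ away from $h=0$, the standard squeeze on difference quotients of convex functions immediately gives $\partial_h F(h,\Lambda,b_\Lambda)\to F'(h)$ for $h\neq 0$, independently of $b_\Lambda$. The normalization caveat you flag is real (the paper drops the $\beta^{-1}$ in the statement) but harmless.

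The paper's own proof takes a different route. Instead of the convex-analysis lemma, it uses the FKG inequality twice: first to sandwich $m(h,\Lambda,-)\le m(h,\Lambda,b_\Lambda)\le m(h,\Lambda,+)$, and then to show that $m(h,\Lambda,+)$ is monotone decreasing in $\Lambda$ (by interpreting the $+$ boundary on a larger box as a weaker external field). Monotonicity gives existence of $m(h,+)$ for \emph{all} $h$; dominated convergence then identifies $\int m(\cdot,+)$ with $F$, and analyticity at $h\neq 0$ pins down $m(h,+)=F'(h)$, with the same for $m(h,-)$, closing the sandwich. Your approach is cleaner and uses no FKG at all for this lemma, but the paper's argument yields extra information it needs downstream: the limit $m(h,+)$ exists even at $h=0$ and satisfies $m(h,+)\le m(h,\Lambda,+)$ for every $\Lambda$, facts that feed into the definition of $m^*=\lim_{h\to 0^+}F'(h)$ and Lemma~2.5. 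If you adopt your proof, you would still need to supply that monotonicity separately when it is invoked later.
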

\begin{proof}
$\frac{1}{|\Lambda|}\sum_{p\in\Lambda}\sigma_p$ is an increasing
function on $\Gamma_\Lambda$, and the F.K.G inequality says that
\begin{eqnarray}\label{boundaryrelation}
m(h,\Lambda,-)\leq m(h,\Lambda,b_\Lambda)\leq m(h,\Lambda,+).
\end{eqnarray}
For any $\Lambda'\supseteq\Lambda$, $m(h,\Lambda,+)$ can be obtained
from $m(h,\Lambda',+)$ by adding an infinite positive external
magnetic field on $\Lambda'\setminus \Lambda$. By the F.K.G inequality we
have
\begin{eqnarray*}
m(h,\Lambda',+)\leq m(h,\Lambda,+).
\end{eqnarray*}
Hence $m(h,\Lambda,+)$ is decreasing as $\Lambda$ increases.
Therefore $\lim_{\Lambda\rightarrow\infty}m(h,\Lambda,+)$ exists,
denoted by $m(h,+)$, for all $h$. Moreover, since
$|m(h,\Lambda,b_{\Lambda})|\leq 1$, according to the Dominated
Convergence Theorem, we have
\begin{eqnarray*}
\int_{h_0}^{h}m(\hbar,+)d\hbar=\lim_{\Lambda\rightarrow\infty}\int_{h_0}^{h}m(\hbar,\Lambda,+)d\hbar=F(h)-F(h_0),
\end{eqnarray*}
where $h>h_0>0$ or $0>h>h_0$. Since $F(h)$ is analytic in h when
$h\neq 0$, we have
\begin{eqnarray*}
m(h,+)=\frac{dF(h)}{dh}.
\end{eqnarray*}
Similar process shows that
\begin{eqnarray*}
m(h,-)=\frac{dF(h)}{dh}.
\end{eqnarray*}
And the lemma follows from \ref{boundaryrelation}.
\end{proof}

Since $F(h)$ is analytic in $h$ when $h\neq 0$, we define
\begin{eqnarray*}
m^{*}=\lim_{h\rightarrow 0+}\frac{d F(h)}{d h}.
\end{eqnarray*}
Applying F.K.G inequality and follow exactly the same process by
Lebowitz and Martin-L\"{o}f \cite{lm}, we have the following lemma:
\begin{lemma}
When the external magnetic field is zero there is a unique
equilibrium state for the periodic, ferromagnetic infinite system if
and only if $m^*=0$.
\end{lemma}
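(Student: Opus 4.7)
The plan is to adapt the argument of Lebowitz--Martin-L\"of, whose engine is the F.K.G. inequality together with the $\pm$-symmetry of the Hamiltonian at $h=0$. First I would construct the two extremal translation-invariant Gibbs states $\mu_+$ and $\mu_-$ at $h=0$ as the weak limits of finite-volume measures with uniform $+$ and $-$ boundary conditions, respectively; monotonicity in $\Lambda$ (already used to define $m(h,\Lambda,+)$ in the previous lemma) guarantees these limits exist, and $\mu_-$ is the image of $\mu_+$ under the global spin-flip by the $h=0$ symmetry. In particular, by translation invariance, $\langle\sigma_p\rangle_+=m(0,+)$ and $\langle\sigma_p\rangle_-=-m(0,+)$ for every $p$.

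Next I would identify $m(0,+)$ with $m^{*}$. Using the convexity of $F(h,\Lambda,b_\Lambda)$ in $h$ and the F.K.G. inequality, $m(h,+)$ is a monotone increasing function of $h$, right-continuous at $0$, whose values for $h\neq 0$ equal $dF/dh$ by Lemma 2.4. Letting $h\downarrow 0$ and using $m(h,+)\geq m(0,+)\geq \lim_{h\downarrow 0}m(h,+)$ (the second inequality coming from adding an infinitesimal positive field via F.K.G.), one obtains $m(0,+)=\lim_{h\downarrow 0}dF/dh=m^{*}$, and symmetrically $m(0,-)=-m^{*}$.

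For the ``if'' direction, suppose $m^{*}=0$. Then $\langle\sigma_p\rangle_+=\langle\sigma_p\rangle_-=0$ for all $p$. F.K.G. gives the stochastic domination $\mu_-\preceq\mu_+$, and by Strassen's theorem there is a coupling $(\sigma^-,\sigma^+)$ with $\sigma_p^-\leq \sigma_p^+$ almost surely for every $p$. The expectation of the nonnegative quantity $\sigma_p^+-\sigma_p^-$ is zero, so $\sigma_p^+=\sigma_p^-$ a.s.\ for each $p$, and hence the coupled configurations coincide almost surely; therefore $\mu_+=\mu_-$. Since every translation-invariant Gibbs state $\mu$ is sandwiched $\mu_-\preceq\mu\preceq\mu_+$ by F.K.G.\ (approximate $\mu$ in finite volume with arbitrary boundary and use Theorem 2.1), uniqueness of the Gibbs state follows. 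The converse is immediate: if $m^{*}>0$, then $\mu_+$ and $\mu_-$ have opposite single-site expectations and so are distinct translation-invariant Gibbs measures.

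The routine parts are the construction of $\mu_\pm$ and the identification $m(0,+)=m^{*}$, which are direct consequences of Lemma 2.4 and convexity. The main obstacle is upgrading the equality of one-point functions $\langle\sigma_p\rangle_+=\langle\sigma_p\rangle_-$ to the equality of the full measures $\mu_+=\mu_-$; this is exactly the step where the stochastic-domination coupling is essential, because equality of one observable under F.K.G.\ domination forces a.s.\ coincidence in the coupling, and this in turn forces every other Gibbs state to coincide with the two extremes by the sandwich inequality.
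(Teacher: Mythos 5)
Your proposal is essentially the Lebowitz--Martin-L\"of argument the paper invokes, and it is correct. The paper itself gives no written proof for this lemma, deferring entirely to [lm] and the F.K.G.\ inequality; your write-up supplies the standard modern version of that argument, with the Strassen coupling making explicit the step from equality of one-point functions under stochastic domination to equality of the full measures, and the $\mu_-\preceq\mu\preceq\mu_+$ sandwich giving uniqueness of all equilibrium states. One small imprecision: since the interactions are only $\mathfrak{L}$-periodic rather than fully translation invariant, $\langle\sigma_p\rangle_+$ need not equal $m(0,+)$ for every $p$; rather, $m(0,+)$ is the average of $\langle\sigma_p\rangle_+$ over a fundamental domain. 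The fix is immediate, though: spin-flip symmetry together with F.K.G.\ gives $\langle\sigma_p\rangle_+=-\langle\sigma_p\rangle_-\geq 0$ for every $p$, so if the average $m(0,+)=m^{*}$ vanishes, each individual $\langle\sigma_p\rangle_+$ vanishes, and your coupling argument then applies verbatim. Conversely, if $m^{*}>0$ then some $\langle\sigma_p\rangle_+>0>\langle\sigma_p\rangle_-$, giving two distinct equilibrium states.
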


$\frac{1}{|\Lambda|}\sum_{p\in\Lambda}\sigma_p$ is an increasing
function on $\Gamma$, according to the F.K.G inequality,
$m(h,\Lambda,b_{\Lambda})=\langle\frac{1}{|\Lambda|}\sum_{p\in\Lambda}\sigma_p\rangle_{\Lambda,b_{\Lambda}}$
is an increasing function in $J_{i,j}$ and $h$, hence if we fix
$J_{i,j}$ and $h$, it is an increasing function in $\beta$. When
$\beta=0$, we have a uniform distribution for all configurations,
therefore
\begin{eqnarray*}
m(h,\Lambda,b_{\Lambda})|_{\beta=0}=0.
\end{eqnarray*}
As a result $m(h,\Lambda,b_{\Lambda})\geq0$ for any $\beta$. As its
limit $m^{*}$ is nonnegative and increasing in $\beta$. The critical
temperature $\beta_c$ is uniquely defined by the conditions
$m^{*}(\beta)=0$ for $\beta<\beta_c$ and $m^{*}(T)>0$ for
$\beta>\beta_c$. Then we have the following theorem
\begin{theorem}
When $\beta<\beta_c$, there exists a unique probability measure of
ferromagnetic, periodic Ising spin system without external magnetic
field.
\end{theorem}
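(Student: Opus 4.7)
The statement follows almost immediately by chaining the definition of $\beta_c$ with Lemma 2.5, so my plan is to make this chain explicit and then address the one fussy point, namely the identification of ``unique equilibrium state'' (a statement about limits of correlation functions) with ``unique probability measure'' (a statement about Gibbs measures on the infinite configuration space).

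First I would invoke Lemma 2.5, which asserts that at zero external magnetic field the periodic ferromagnetic infinite system has a unique equilibrium state if and only if $m^{*}=0$. Hence it suffices to verify $m^{*}(\beta)=0$ for every $\beta<\beta_c$. But this is built into the definition of $\beta_c$ stated in the paragraph preceding the theorem: the monotonicity argument there (based on FKG and the fact that $m(h,\Lambda,b_{\Lambda})|_{\beta=0}=0$) produces a nonnegative, non-decreasing function $\beta \mapsto m^{*}(\beta)$, and $\beta_c$ is precisely the threshold below which $m^{*}$ vanishes. So for $\beta<\beta_c$ we immediately have $m^{*}(\beta)=0$, and Lemma 2.5 delivers uniqueness of the equilibrium state.

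The only thing left to argue is that the uniqueness of the equilibrium state (a family of correlation functions) really upgrades to uniqueness of the Gibbs probability measure on the infinite periodic lattice. For this I would use the FKG inequality inherited from Theorem 2.1: for any sequence of boxes $\Lambda_n\to\infty$ in the Van Hove sense and any boundary condition $b_{\Lambda_n}$, the finite-volume state is sandwiched between the $-$ state and the $+$ state in the FKG sense, so every subsequential weak limit is sandwiched between the limits obtained with $+$ and $-$ boundary conditions. When $m^{*}=0$, these extremal $\pm$ limits agree on all single-site expectations and, by the standard extension using GKS/FKG monotonicity applied cylinder by cylinder, they agree as measures. Consequently every subsequential limit coincides with them, giving a unique translation-invariant Gibbs measure.

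The argument is therefore essentially a bookkeeping step, and I do not expect any genuine obstacle; the substantive content has already been absorbed into Theorem 2.1, Lemma 2.4, and Lemma 2.5. If a subtlety arises it will be the upgrade from equilibrium state to Gibbs measure just discussed, which is where I would spend the most care in writing the proof.
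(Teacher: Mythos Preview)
Your proposal is correct and matches the paper's approach: the paper does not even give a separate proof of this theorem, treating it as an immediate corollary of Lemma~2.5 together with the definition of $\beta_c$ in the preceding paragraph. Your additional care about upgrading ``unique equilibrium state'' to ``unique Gibbs measure'' is more than the paper itself provides, but it is in the spirit of the Lebowitz--Martin-L\"of argument the paper is citing.
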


\subsection{Uniqueness of Spin-Spin Even Correlations}

This section is devoted to the proof of Theorem 2.7. The proof is divided into proving several lemmas, as specified below.

\begin{theorem} Given a bi-periodic, ferromagnetic Ising model with fixed reciprocal temperature $\beta$, periodic interactions $J$, and external magnetic field $h=0$, the spin-spin even correlation functions are unique under any translation invariant Gibbs measure. That is, fix a finite subset $A$ of the vertices of the graph such that $|A|$ is even, for any translation invariant Gibbs measure $\mu$, $\langle\sigma_A\rangle_{\mu}$ is independent of $\mu$.
\end{theorem}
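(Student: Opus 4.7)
The plan is to follow the Lebowitz technique \cite{le2}, exploiting the FKG inequality of Section~2.1 together with the global spin-flip symmetry $\sigma\mapsto -\sigma$ that holds precisely because $h=0$. The argument naturally splits into three steps, corresponding to constructing the extremal Gibbs measures, using spin-flip symmetry on them, and then reducing an arbitrary translation-invariant Gibbs measure to a convex combination of the extremal ones.

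First I would construct the two extremal Gibbs measures $\mu^{+}$ and $\mu^{-}$ as weak limits of the finite-volume measures $\mu^{\pm}_{\Lambda}$ with $\pm$ boundary conditions. FKG monotonicity (Theorem~2.1) makes $\mu^{+}_{\Lambda}$ stochastically nonincreasing in $\Lambda$ and $\mu^{-}_{\Lambda}$ stochastically nondecreasing, since enlarging $\Lambda$ relaxes the constraint ``boundary spins $=+1$'' on the sites added to $\Lambda$. Hence the limits exist along Van~Hove sequences, are translation-invariant Gibbs measures at temperature $\beta$ and field $h=0$, and sandwich every other infinite-volume Gibbs measure $\mu$ in stochastic order:
\[
\mu^{-}\;\preceq\;\mu\;\preceq\;\mu^{+}.
\]

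Second, I would invoke spin-flip symmetry. Because $h=0$, the energy $E_\Lambda(\sigma)$ is invariant under the involution $\sigma\mapsto-\sigma$, and the same involution interchanges the $+$ and $-$ boundary conditions. Thus $\mu^{-}$ is the pushforward of $\mu^{+}$ by spin-flip, so
\[
\langle\sigma_A\rangle_{\mu^{-}}=(-1)^{|A|}\langle\sigma_A\rangle_{\mu^{+}},
\]
and for $|A|$ even this forces $\langle\sigma_A\rangle_{\mu^{+}}=\langle\sigma_A\rangle_{\mu^{-}}$.

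Third, for an arbitrary translation-invariant Gibbs measure $\mu$, I would identify $\mu$ as a convex combination $\mu=\lambda\mu^{+}+(1-\lambda)\mu^{-}$, from which
\[
\langle\sigma_A\rangle_\mu=\lambda\langle\sigma_A\rangle_{\mu^{+}}+(1-\lambda)\langle\sigma_A\rangle_{\mu^{-}}=\langle\sigma_A\rangle_{\mu^{+}}
\]
follows at once for even $|A|$ by Step~2. When $m^{*}=0$, Lemma~2.5 already delivers uniqueness of the Gibbs measure, so $\mu=\mu^{+}=\mu^{-}$ and there is nothing to prove. When $m^{*}>0$, translation invariance makes the magnetization $m_\mu=\langle\sigma_{0}\rangle_\mu$ independent of the reference site, and the FKG sandwich places it in $[-m^{*},m^{*}]$; the Lebowitz--Martin-L\"{o}f argument of \cite{lm}, applied via monotone couplings between $\mu$ and the extremal pair $(\mu^{-},\mu^{+})$ together with spin-flip symmetry, then pins down $\lambda=(m^{*}+m_\mu)/(2m^{*})$.

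The hard part is the third step: extracting the full convex decomposition from FKG monotonicity alone is essentially the two-dimensional Aizenman--Higuchi phenomenon for the ferromagnetic Ising model, and the argument of \cite{lm} was originally written for the homogeneous square lattice. The reduction to our bi-periodic setting is done by passing to the sublattice $\mathfrak{L}\subset\mathbb{Z}^2$ on which the couplings $J_{pq}$ are translation invariant, and viewing an $\mathfrak{L}$-coset of spins as a single ``super-spin'' in a finite product state space; the resulting system is homogeneous under $\mathfrak{L}$-translations and ferromagnetic, so the Lebowitz--Martin-L\"{o}f decomposition applies directly and yields $\mu=\lambda\mu^{+}+(1-\lambda)\mu^{-}$. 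Combining with Step~2 finishes the proof.
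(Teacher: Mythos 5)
Your outline and the paper's actual proof both descend from the Lebowitz--Martin-L\"{o}f circle of ideas, but they diverge at the crucial point, and the divergence leaves a real gap in your Step~3. The paper does \emph{not} attempt to prove the full structural decomposition $\mu=\lambda\mu^{+}+(1-\lambda)\mu^{-}$. Instead it establishes (Lemma~2.11) that $\partial F/\partial\beta$ is continuous --- this is the right variable to differentiate in, because the Lee--Yang argument of Theorem~2.2 only yields analyticity in $h$ for $h\neq0$, and $\partial F/\partial h$ can genuinely be discontinuous at $h=0$ when $m^*>0$ --- and then (Lemma~2.12) uses Lanford's tangent-functional variational principle with respect to $\beta$ (equivalently, a perturbation $\mathbf{J}\mapsto\mathbf{J}+t\mathbf{J}$) to conclude that the nearest-neighbour correlations $\langle\sigma_i\sigma_j\rangle_\mu$ are the \emph{same} for every translation-invariant Gibbs measure. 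Finally it bootstraps via the Lebowitz inequality (\ref{spininequality}) and induction (Lemma~2.13) to all even $|A|$. At no point does it need to know that $\mu^{+}$ and $\mu^{-}$ exhaust the extremal translation-invariant Gibbs measures.

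Your Step~3 asserts the much stronger statement that every translation-invariant Gibbs measure is a convex combination of $\mu^{+}$ and $\mu^{-}$; this is the translation-invariant Aizenman--Higuchi/LML phenomenon and is not a consequence of FKG monotonicity alone. Two specific problems. First, the computation $\lambda=(m^*+m_\mu)/(2m^*)$ only tells you what $\lambda$ \emph{would} have to be if such a decomposition held for one-point functions; it does not establish the decomposition as measures (or even at the level of two-point functions), which is exactly what you need to transfer the equality $\langle\sigma_A\rangle_{\mu^+}=\langle\sigma_A\rangle_{\mu^-}$ to an arbitrary $\mu$. Second, the proposed reduction to the homogeneous case by treating $\mathfrak{L}$-cosets as super-spins does not obviously preserve the ingredients LML relies on: the super-spins live in $\{\pm1\}^{|\Lambda_1|}$, the induced interaction between neighbouring super-spins is not a simple ferromagnetic pair interaction of $\pm1$-valued variables, and the Lee--Yang circle theorem, the Lebowitz inequality, and the GKS/FKG machinery in the form used by LML are all stated for $\pm1$ spins with pair interactions. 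The paper sidesteps this entirely by working with the original $\pm1$ spins and the original periodic couplings throughout. If you want to salvage a decomposition-based route you must either prove a periodic version of LML's structure theorem from scratch, or, more economically, follow the paper: get uniqueness of $\langle\sigma_i\sigma_j\rangle$ from the $\beta$-variational argument (where $F$ is always differentiable), then push up to all even $A$ with the Lebowitz inequality.
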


Define
\begin{eqnarray*}
\rho_A=\prod_{v\in A}\frac{\sigma_v+1}{2}.
\end{eqnarray*}

\begin{lemma}$\langle\rho_A\rangle_{\pm}=\lim_{\Lambda\rightarrow\infty}\langle\rho_A\rangle_{\Lambda,\pm}$
exists and is translation invariant. That is
$\langle\rho_{A+g}\rangle_{\pm}=\langle\rho_{A}\rangle_{\pm}$, where
$g$ is a translation vector.
\end{lemma}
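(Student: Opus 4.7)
The plan is to split the proof into two parts: existence of the thermodynamic limit of $\langle\rho_A\rangle_{\Lambda,\pm}$ along any Van Hove sequence, and translation invariance of that limit. Both follow from the F.K.G.\ inequality (Theorem 2.1) applied to the increasing function $\rho_A$, combined with the periodicity of the interactions.

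First I would prove existence by a monotonicity argument. Note that $\rho_A$ is the indicator of the event $\{\sigma_v=+1\ \text{for all}\ v\in A\}$, hence an increasing function on $\Gamma$. Given two boxes $\Lambda\subseteq\Lambda'$ both containing $A$, I would view the state $\langle\cdot\rangle_{\Lambda,+}$ as the $\Lambda'$-state with $+$ boundary outside $\Lambda'$ together with an external field $h\to+\infty$ on the sites of $\Lambda'\setminus\Lambda$: for each large finite $h$ the comparison is supplied by Theorem 2.1, and a standard exponential-weight argument shows that the finite-$h$ measure converges to the state in which $\Lambda'\setminus\Lambda$ is pinned to $+1$. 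This gives
$$\langle\rho_A\rangle_{\Lambda',+}\le\langle\rho_A\rangle_{\Lambda,+}.$$
Hence the net $\{\langle\rho_A\rangle_{\Lambda,+}\}_{\Lambda\supseteq A}$ is decreasing in $\Lambda$ under inclusion and bounded below by $0$, so it converges to its infimum. Since any Van Hove sequence eventually contains every fixed box, the limit along such a sequence equals the same infimum. The reversed argument with $h\to-\infty$ on $\Lambda'\setminus\Lambda$ shows that $\langle\rho_A\rangle_{\Lambda,-}$ is increasing in $\Lambda$ and bounded above by $1$, hence also converges. This establishes existence of $\langle\rho_A\rangle_{\pm}$.

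For translation invariance, let $g\in\mathfrak{L}$ be a period vector of the interactions. The relabeling of sites $p\mapsto p-g$ is an isomorphism from the finite Ising system on $\Lambda$ with $\pm$ boundary onto the one on $\Lambda-g$ with $\pm$ boundary, and under this isomorphism $\rho_{A+g}$ corresponds to $\rho_A$. Hence
$$\langle\rho_{A+g}\rangle_{\Lambda,\pm}=\langle\rho_A\rangle_{\Lambda-g,\pm}.$$
If $\{\Lambda_n\}$ tends to infinity in the Van Hove sense then so does $\{\Lambda_n-g\}$, since translation is an isometry and preserves the ratio $|\Lambda_n^\ell|/|\Lambda_n|$ exactly. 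Taking $n\to\infty$ on both sides yields $\langle\rho_{A+g}\rangle_{\pm}=\langle\rho_A\rangle_{\pm}$.

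The main delicate point I anticipate is the first step: one must justify carefully that sending the external field to $\pm\infty$ on $\Lambda'\setminus\Lambda$ in the finite-volume Gibbs measure genuinely recovers the fixed $\pm$ boundary state on $\Lambda$, so that the finite-field F.K.G.\ comparison from Theorem 2.1 passes to the limit. Everything else is bookkeeping with the definitions of Van Hove convergence and periodic interactions.
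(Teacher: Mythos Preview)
Your proposal is correct and follows essentially the same approach as the paper: the paper likewise observes that $\rho_A$ is increasing, invokes the F.K.G.\ inequality to obtain monotonicity $\langle\rho_A\rangle_{\Lambda',+}\le\langle\rho_A\rangle_{\Lambda,+}$ for $\Lambda'\supseteq\Lambda$, and then says translation invariance ``follows from the uniqueness of taking the limit $\Lambda\rightarrow\infty$.'' Your version is in fact more explicit than the paper's on both points---the infinite-field trick you spell out is exactly the mechanism used earlier in the paper (proof of Lemma~2.4) but only cited here, and your translation-invariance argument via $\langle\rho_{A+g}\rangle_{\Lambda,\pm}=\langle\rho_A\rangle_{\Lambda-g,\pm}$ makes precise what the paper leaves as a one-line remark.
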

\begin{proof} The F.K.G inequality applies to $f=\rho_A$, because
this is an increasing function in each spin variable. This implies
that $\langle\rho_A\rangle_{\Lambda',+}\leq
\langle\rho_A\rangle_{\Lambda,+}$, if $\Lambda'\supseteq\Lambda$, by
the F.K.G inequality. Hence
\[\langle\rho_A\rangle_{+}=\lim_{\Lambda\rightarrow\infty}\langle\rho_A\rangle_{\Lambda,+}\]
exists and $\langle\rho_A\rangle_{+}\leq
\langle\rho_A\rangle_{\Lambda,+}$. The translation invariance of the
$\langle\rho_A\rangle_{\pm}$ follows from the uniqueness of taking
the limit $\Lambda\rightarrow\infty$.
\end{proof}

\begin{lemma} \[\langle\sigma_A\rangle_{\Lambda,b_{\Lambda}}\leq \langle\sigma_A\rangle_{\Lambda,+}.\]
\end{lemma}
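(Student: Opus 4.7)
I plan to prove the inequality by reducing the choice of boundary condition to a choice of effective external field on the boundary sites of $\Lambda$, and then analysing how $\langle\sigma_A\rangle$ depends on that field.

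The initial reduction is to observe that specifying the exterior configuration $\sigma^{(b)}$ affects the induced measure on $\Lambda$ only through the boundary fields $h_p^{(b)} = \sum_{q \notin \Lambda,\, q \sim p} J_{pq}\sigma_q^{(b)}$, $p \in \partial\Lambda$, and that the $+$ boundary produces the componentwise largest choice $h_p^{(+)} = \sum_{q \notin \Lambda,\, q \sim p} J_{pq}\geq 0$. The Radon--Nikodym derivative $P_{\Lambda,+}/P_{\Lambda,b}$ is then proportional to $\exp\!\big(\beta\sum_{p\in\partial\Lambda}(h_p^{(+)}-h_p^{(b)})\sigma_p\big)$, which is monotone increasing in $\sigma$, so by the form of FKG given in Theorem~2.1, $P_{\Lambda,+}$ stochastically dominates $P_{\Lambda,b}$. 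This immediately yields the lemma when $|A|=1$, since $\sigma_v$ is a monotone function.

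For $|A|\geq 2$ the observable $\sigma_A$ is no longer monotone and stochastic domination alone does not suffice. The plan then is to differentiate $\langle\sigma_A\rangle$ along the straight-line interpolation $h^{(t)} = (1-t)h^{(b)} + t\,h^{(+)}$, which gives
$$\frac{d}{dt}\langle\sigma_A\rangle_{h^{(t)}} = \beta\sum_{p\in\partial\Lambda}(h_p^{(+)}-h_p^{(b)})\bigl(\langle\sigma_A\sigma_p\rangle_{h^{(t)}} - \langle\sigma_A\rangle_{h^{(t)}}\langle\sigma_p\rangle_{h^{(t)}}\bigr).$$
The coefficients are non-negative by construction, and each bracket is non-negative by the second Griffiths (GKS) inequality whenever $h^{(t)}$ is componentwise non-negative, which settles the case $h^{(b)}\geq 0$.

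The main obstacle is the complementary regime where some $h_p^{(b)} < 0$ along part of the interpolation, since GKS--II requires non-negative external fields throughout. I would handle this stepwise: flip the $-$ spins of $b_\Lambda$ to $+$ one at a time, at each flip embedding the problem in a slightly larger finite box $\Lambda'\supset\Lambda$ with full $+$ boundary on $\partial\Lambda'$ so that the flip corresponds to conditioning on a single exterior spin value in an enlarged system whose effective fields are uniformly non-negative. Verifying that these stepwise comparisons combine consistently, and in particular that no single flip increases $\langle\sigma_A\rangle$ when the intermediate effective field is still signed, is the genuinely delicate point; if a purely FKG-based argument proves inadequate, Ginibre's extension of GKS to signed external fields, or a random-current representation in which the boundary comparison becomes a comparison of source sets, furnishes the needed substitute.
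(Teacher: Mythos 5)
Your proposal gets the setup right (reducing the boundary condition to an effective field $h^{(b)}$ at $\partial\Lambda$, noting that $+$ gives the componentwise largest one, and correctly observing that FKG--stochastic domination only handles $|A|=1$ because $\sigma_A$ is not monotone), but it does not actually prove the lemma: the genuinely hard step, you yourself flag, is that the GKS--II differentiation argument needs the effective field to stay nonnegative along the whole interpolation, and it doesn't once some $h^{(b)}_p<0$. The proposed ``flip one boundary spin at a time inside a larger box'' plan is not carried out and, as sketched, does not dissolve the sign problem --- conditioning the enlarged $+$-boundary system on the exterior spins of $\Lambda$ reproduces exactly the signed-field measure you started from, so each stepwise comparison faces the same obstruction. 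Listing ``Ginibre's extension'' or a random-current representation as fallbacks names the right tool but leaves the proof open.

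The paper's argument is precisely the Ginibre duplicate-variable inequality, and it bypasses the entire FKG / interpolation / GKS--II machinery. Writing boundary condition $b$ as the $+$ boundary with modified boundary couplings $J'_k=\pm J_k$ (so that $J_k\geq|J'_k|$ automatically), one writes the difference $\langle\sigma_A\rangle-\langle\sigma_A\rangle'$ as a double sum over $(\sigma,\sigma')$ and substitutes $t_i=\sigma_i\sigma'_i$ to obtain
\begin{equation*}
\langle\sigma_A\rangle-\langle\sigma_A\rangle'
=\frac{1}{ZZ'}\sum_{t}(1-t_A)\sum_{\sigma}\sigma_A\,
e^{\sum_{i\sim j}J_{ij}(1+t_it_j)\sigma_i\sigma_j
+\sum_{k\in\partial\Lambda}(J_k+t_kJ'_k)\sigma_k}.
\end{equation*}
All effective couplings are nonnegative because $1+t_it_j\in\{0,2\}$ and $J_k+t_kJ'_k\geq J_k-|J'_k|\geq 0$, so the inner sum is a first Griffiths correlation $\langle\sigma_A\rangle\tilde Z\geq 0$ of a ferromagnetic model, while $1-t_A\geq 0$. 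The sign of the boundary condition is thus absorbed into $t_k$ and requires no case analysis, no box enlargement, and no differentiation in $h$. If you replace your interpolation plan with this single algebraic identity, you recover the paper's one-paragraph proof.
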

\begin{proof} Let $k$ denote any edge connecting one vertex in $\Lambda$ and one vertex outside $\Lambda$. Consider
\begin{eqnarray*}
\langle\sigma_A\rangle=\frac{\sum_{\sigma}\sigma_A e^{\sum_{i\sim
j,i,j \in \Lambda^{0}}J_{ij}\sigma_i\sigma_j+\sum_{k\in\partial
\Lambda}J_k\sigma_k}}{Z},\\
\langle\sigma_A\rangle'=\frac{\sum_{\sigma}\sigma_A e^{\sum_{i\sim
j}J_{ij}\sigma_i\sigma_j+\sum_{k\in\partial
\Lambda}J_k'\sigma_k}}{Z'},
\end{eqnarray*}
where
\begin{eqnarray*}
Z=\sum_{\sigma}e^{\sum_{i\sim j,i,j \in
\Lambda^{0}}J_{ij}\sigma_i\sigma_j+\sum_{k\in\partial
\Lambda}J_k\sigma_k},\\
Z'=\sum_{\sigma}e^{\sum_{i\sim j,i,j \in
\Lambda^{0}}J_{ij}\sigma_i\sigma_j+\sum_{k\in\partial
\Lambda}J'_k\sigma_k},
\end{eqnarray*}
 then
\begin{eqnarray}
\langle\sigma_A\rangle-\langle\sigma_A\rangle'&=&\frac{\sum_{\sigma,\sigma'}(\sigma_A-\sigma_A')e^{\sum_{i\sim
j,i,j\in\Lambda^{0}}J_{ij}(\sigma_i\sigma_j+\sigma_i'\sigma_j')+\sum_{k\in\partial\Lambda}(J_k\sigma_k+J_k'\sigma_k')}}{ZZ'}\\
&=&\frac{\sum_t(1-t_A)\sum_{\sigma}\sigma_Ae^{\sum_{i\sim
j,i,j\in\Lambda^0}J_{ij}(1+t_{i}t_{j})\sigma_i\sigma_j+\sum_{k\in\partial\Lambda}(J_k+t_kJ_k')\sigma_k}}{ZZ'},\label{correlationdifference}
\end{eqnarray}
where $t_i=\sigma_i\sigma'_i$. Here $J_k'=\pm J_k$ reflects the boundary conditions. It follows from the ferromagnetic
condition that when
\[J_{k}\geq |J'_{k}|,\qquad \forall k\in\partial\Lambda,\]
the right side of (\ref{correlationdifference}) is nonnegative.
\end{proof} 

Let $A=B\vartriangle C=B\cup C\setminus B\cap C$ be the symmetric
difference between the sets $B$ and $C$, then $t_A=t_Bt_C=\pm 1$,
and
\begin{eqnarray}
1-t_Bt_C=\pm(t_B-t_C).\label{partialinequality}
\end{eqnarray}
Substituting (\ref{partialinequality}) on the right of
(\ref{correlationdifference}) and going back to the $\sigma'$
variables, we obtain our basic inequality for
$\sigma_{A}=\sigma_{B}\sigma_{C}$:
\begin{eqnarray}
\langle\sigma_B\sigma_C\rangle-\langle\sigma_B\sigma_C\rangle'\geq|\langle\sigma_B\rangle\langle\sigma_C\rangle'-\langle\sigma_B\rangle'\langle\sigma_C\rangle|.\label{spininequality}
\end{eqnarray}
 Then absolute value sign on the right side of (\ref{spininequality}) follows from the fact that the right side of (\ref{correlationdifference}) is always nonnegative given ferromagnetic interactions. Namely, $Z, Z'$ are partition functions of Ising models, which are positive. For any give $t_i$ at all vertices, we consider an Ising model with interaction constants
 \begin{eqnarray*}
 \tilde{J}_{ij}&=&J_{ij}(1+t_{i}t_{j})\geq 0,\qquad \forall i\sim j.\\
 \tilde{J}_{k}&=&J_k+t_kJ_k'\geq 0,\qquad \mathrm{if\ J_k\geq|J_k'|}\  \forall k\in\partial\Lambda.
 \end{eqnarray*}
 Since all interaction constants are nonnegative, we have a ferromagnetic Ising model, as a result,
 \begin{eqnarray*}
 \langle\sigma_A\rangle \tilde{Z}= \sigma_Ae^{\sum_{i\sim
j,i,j\in\Lambda^0}J_{ij}(1+t_{i}t_{j})\sigma_i\sigma_j+\sum_{k\in\partial\Lambda}(J_k+t_kJ_k')\sigma_k}\geq 0,
 \end{eqnarray*}
 where $\tilde{Z}$ is the partition function for the new Ising model. Moreover $1-t_A$ is always nonnegative, that is the reason that the right side of $(\ref{correlationdifference})$ is nonnegative, and (\ref{spininequality}) follows.

\begin{lemma}
\begin{eqnarray}
\lim_{\beta'\rightarrow \beta
+}\langle\rho_A\rangle_{\beta',+}=\langle\rho_A\rangle_{\beta,+}\label{rightcontinuity},\\
\lim_{\beta'\rightarrow \beta
-}\langle\rho_A\rangle_{\beta',-}=\langle\rho_A\rangle_{\beta,-}\label{leftcontinuity}.
\end{eqnarray}
\begin{proof}Since $\langle\rho_A\rangle_{\beta,+}\leq\langle\rho_A\rangle_{\beta,\Lambda,+},$
\[\lim_{\beta'\rightarrow\beta+}\langle\rho_A\rangle_{\beta',+}\leq\lim_{\beta'\rightarrow\beta+}\langle\rho_A\rangle_{\beta',\Lambda,+}=\langle\rho_A\rangle_{\beta,\Lambda,+}.\]
Letting $\Lambda\rightarrow\infty$, we have
\[\lim_{\beta'\rightarrow\beta+}\langle\rho_A\rangle_{\beta',+}\leq\langle\rho_A\rangle_{\beta,+}.\]
But F.K.G inequality implies that
\[\langle\rho_A\rangle_{\beta,+}\leq \langle\rho_A\rangle_{\beta',+},\] for $\beta\leq
\beta'$, so
\[\langle\rho_A\rangle_{\beta,+}\leq\lim_{\beta'\rightarrow\beta+}\langle\rho_A\rangle_{\beta',+}.\]
Then (\ref{rightcontinuity}) is proved, and (\ref{leftcontinuity})
can be proved analogously.
\end{proof}
\end{lemma}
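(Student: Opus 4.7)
My plan is to sandwich each one-sided limit between two copies of the target value, using three ingredients already in hand: (i) for each finite box $\Lambda$ the quantity $\langle\rho_A\rangle_{\beta',\Lambda,\pm}$ is a rational function of $e^{\beta'}$, and in particular continuous in $\beta'$; (ii) the F.K.G.\ argument already used in this subsection to construct $\langle\rho_A\rangle_\pm$ as a monotone limit in $\Lambda$ (by placing an infinite $\pm$ external field on $\Lambda'\setminus\Lambda$) gives $\langle\rho_A\rangle_{\beta,+}\le\langle\rho_A\rangle_{\beta,\Lambda,+}$ together with the analogous comparison for the $-$ state; and (iii) F.K.G.\ (Theorem~2.1) applied to the increasing function $\rho_A$ gives monotonicity of $\langle\rho_A\rangle_{\beta,\pm}$ in $\beta$.

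For the right continuity~(\ref{rightcontinuity}), I would apply (ii) at $\beta'$ to write $\langle\rho_A\rangle_{\beta',+}\le\langle\rho_A\rangle_{\beta',\Lambda,+}$, then send $\beta'\searrow\beta$ using the continuity in (i) on the right-hand side to obtain $\lim_{\beta'\to\beta^+}\langle\rho_A\rangle_{\beta',+}\le\langle\rho_A\rangle_{\beta,\Lambda,+}$, and finally let $\Lambda\to\infty$ to conclude $\lim_{\beta'\to\beta^+}\langle\rho_A\rangle_{\beta',+}\le\langle\rho_A\rangle_{\beta,+}$. The matching lower bound is immediate from (iii): $\langle\rho_A\rangle_{\beta,+}\le\langle\rho_A\rangle_{\beta',+}$ for $\beta\le\beta'$, hence $\langle\rho_A\rangle_{\beta,+}\le\lim_{\beta'\to\beta^+}\langle\rho_A\rangle_{\beta',+}$, and the two inequalities combine into equality.

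The left continuity~(\ref{leftcontinuity}) is obtained by the same sandwich with the roles of $+$ and $-$ exchanged and with $\beta'\searrow\beta$ replaced by $\beta'\nearrow\beta$. The F.K.G.\ step and the continuity-in-$\beta$ step are identical; the only care needed is to pair the direction of monotonicity in $\Lambda$ for the $-$ state with the direction $\beta'\to\beta^-$, so that the inequalities in the sandwich point the right way. I do not expect a genuine conceptual obstacle beyond this bookkeeping, and the proof should close exactly as in the $+$ case.
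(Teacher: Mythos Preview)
Your proposal is correct and follows exactly the same sandwich argument as the paper's proof: bound the one-sided limit from one side by passing through the finite-volume quantity (using $\langle\rho_A\rangle_{\beta',+}\le\langle\rho_A\rangle_{\beta',\Lambda,+}$, continuity in $\beta'$, then $\Lambda\to\infty$), and from the other side by F.K.G.\ monotonicity in $\beta$. Your remark that the $-$ case requires reversing the $\Lambda$-monotonicity and pairing it with $\beta'\nearrow\beta$ is precisely the ``analogously'' the paper leaves implicit.
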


\begin{lemma} $\frac{\partial F}{\partial \beta}$ is continuous for all
the ferromagnetic interactions.
\end{lemma}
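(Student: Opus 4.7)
The plan is to identify $\frac{\partial F}{\partial\beta}$ (at $h=0$) with the infinite-volume expected bond energy per site, and then to deduce continuity in $\beta$ from the spin-flip symmetry at zero field combined with Lemma 2.11.

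First I would compute the finite-volume derivative directly. Since $F(0,\Lambda,+)$ is entire in $\beta$, differentiating inside the partition function gives
\begin{eqnarray*}
\frac{\partial F(0,\Lambda,+)}{\partial\beta}=\frac{1}{|\Lambda|}\left[\frac{1}{2}\sum_{p\sim q,\,p,q\in\Lambda}J_{pq}\langle\sigma_p\sigma_q\rangle_{\Lambda,+}+\sum_{p\in\Lambda,\,q\in\Lambda^c,\,p\sim q}J_{pq}\langle\sigma_p\rangle_{\Lambda,+}\right],
\end{eqnarray*}
and the second term is $O(|\partial\Lambda|/|\Lambda|)\to 0$ by Van Hove convergence. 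Using the algebraic identity $\sigma_p\sigma_q=4\rho_{\{p,q\}}-2\rho_{\{p\}}-2\rho_{\{q\}}+1$ and Lemma 2.8, each two-point correlation $\langle\sigma_p\sigma_q\rangle_{\Lambda,+}$ converges to a translation-invariant limit $\langle\sigma_p\sigma_q\rangle_+$. Because the interactions are $(m,n)$-periodic there are only finitely many bond equivalence classes in a fundamental cell, so the bulk average converges to a finite sum
\begin{eqnarray*}
U_+(\beta)=\frac{1}{2mn}\sum_{\text{bond classes}}J_{pq}\langle\sigma_p\sigma_q\rangle_+.
\end{eqnarray*}
By Lemma 2.11 each $\langle\rho_A\rangle_+$ is right-continuous in $\beta$, and hence so is $U_+$. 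The identical construction with $-$-boundary conditions yields a left-continuous $U_-$, and the spin-flip $\sigma\mapsto-\sigma$ applied to the $h=0$ measure, together with the evenness of $\sigma_p\sigma_q$, gives $U_+=U_-$; the common function $U(\beta)$ is therefore continuous.

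Finally I would identify $U$ with $\frac{\partial F}{\partial\beta}$ using convexity. The Cauchy-Schwarz computation used for $\partial^2 F/\partial h^2$ in Theorem 2.2 applies verbatim to $\beta$, so each $F(0,\Lambda,b_\Lambda)$ is convex in $\beta$; the pointwise limit $F$ is therefore convex in $\beta$ and has one-sided derivatives $F'_\pm$ everywhere. The standard sandwich for pointwise limits of smooth convex functions gives
\begin{eqnarray*}
F'_-(\beta)\leq\liminf_\Lambda\frac{\partial F(0,\Lambda,+)}{\partial\beta}\leq\limsup_\Lambda\frac{\partial F(0,\Lambda,+)}{\partial\beta}\leq F'_+(\beta),
\end{eqnarray*}
so $F'_-(\beta)\leq U(\beta)\leq F'_+(\beta)$. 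Letting $\beta'\to\beta+$ in the convexity inequality $U(\beta')\geq F'_-(\beta')\geq F'_+(\beta)$ and invoking the right-continuity of $U$ forces $U(\beta)=F'_+(\beta)$; symmetrically $U(\beta)=F'_-(\beta)$. Hence $F'_+=F'_-$, so $F$ is differentiable in $\beta$ with derivative $U$, and $U$ is continuous. The main obstacle is precisely this last sandwich argument: convexity alone would leave room for a latent-heat jump in $\partial F/\partial\beta$, and ruling it out genuinely requires the bilateral continuity of $U$ provided by Lemma 2.11 together with the $h=0$ spin-flip symmetry.
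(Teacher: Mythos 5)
Your proof is correct and follows essentially the same strategy as the paper's: express $\partial F/\partial\beta$ at $h=0$ as the infinite-volume bond energy per fundamental cell, deduce one-sided continuity from the monotone FKG limits (Lemma 2.10) and the zero-field spin-flip symmetry $\langle\sigma_i\sigma_j\rangle_+=\langle\sigma_i\sigma_j\rangle_-$, then close with convexity. The one place where you genuinely diverge is the identification of the limit with the derivative: the paper integrates $\partial F(\Lambda,\pm)/\partial\beta$, uses dominated convergence and the fact that a convex $F$ has at most countably many corners, and then takes $\beta'\to\beta\pm$; you instead invoke the Griffiths sandwich for pointwise limits of convex functions, $F'_-\leq\liminf_\Lambda F'_\Lambda\leq\limsup_\Lambda F'_\Lambda\leq F'_+$, and then use the one-sided continuities of $U$ together with $F'_-(\beta')\geq F'_+(\beta)$ for $\beta'>\beta$ to close the gap and obtain $F'_+=F'_-=U$. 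Your version is tidier and sidesteps the measure-theoretic detour, so it is a worthwhile alternative. Two small cleanups: the citation for the right-continuity of $\langle\rho_A\rangle_+$ should be Lemma 2.10, not Lemma 2.11 (which is the lemma under proof); and the phrase ``so the bulk average converges'' compresses the uniformity argument that the paper handles explicitly through the inequality chain (\ref{comparecorrelation}) --- the FKG sandwich underlying that chain is what actually justifies passing the Van Hove limit through the sum, and you should make that dependency visible.
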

\begin{proof}First of all, notice that $\langle\sigma_i\sigma_j\rangle_{+}=\langle\sigma_i\sigma_j\rangle_{-}$
by symmetry. Let $\frac{\partial F}{\partial \beta^{+}}$$\left(\frac{\partial F}{\partial \beta^{-}}\right)$ be the right(left) derivative of $F$ with respect to $\beta$.  The assertion will follow if we prove that
\begin{eqnarray}
\frac{\partial F}{\partial
\beta^{\pm}}=\lim_{\Lambda\rightarrow\infty}\frac{1}{|\Lambda|}\sum_{i\sim
j,i,j\in\Lambda}\langle\sigma_i\sigma_j\rangle_{\Lambda,\pm}=\frac{1}{|\Lambda_1|}\sum_{i\sim
j,i,j\in\Lambda_1}\langle\sigma_i\sigma_j\rangle_{\pm},
\label{freeenergyderivative}
\end{eqnarray}
where $\Lambda_1$ is the quotient graph of the infinite periodic
graph with respect to translation. Now we prove
(\ref{freeenergyderivative}). Consider first the boundary condition
$+$. For any $\epsilon>0$, let $\Lambda_{\epsilon}$ be a box
containing all the edges of $\Lambda_1$, such that
$\langle\sigma_i\sigma_j\rangle_{\Lambda_\epsilon,+}\leq
\langle\sigma_i\sigma_j\rangle_{+}+\epsilon$, for all $i\sim j$,
$i,j\in\Lambda_1$. Then for any translation vector $g$, and
$\Lambda$ such that $\Lambda_\epsilon+g\subseteq\Lambda$, we have
for all $i\sim j, i,j\in \Lambda_0$,
\begin{eqnarray}
\langle\sigma_i\sigma_j\rangle_{+}=\langle\sigma_{i+g}\sigma_{j+g}\rangle_{+}\leq \langle\sigma_{i+g}\sigma_{j+g}\rangle_{\Lambda,+}\nonumber\\
\leq\langle\sigma_{i+g}\sigma_{j+g}\rangle_{\Lambda_{\epsilon}+g,+}=\langle\sigma_i\sigma_j\rangle_{\Lambda_{\epsilon},+}\leq\langle\sigma_i\sigma_j\rangle_{+}+\epsilon.
\label{comparecorrelation}
\end{eqnarray}
Using the decomposition
\begin{eqnarray*}
\frac{1}{|\Lambda|}\sum_{i\sim
j,i,j\in\Lambda}\langle\sigma_i\sigma_j\rangle_{\Lambda,\pm}=\frac{1}{|\Lambda|}\sum_{i,j\in\Lambda_1,i\sim
j}\sum_{i+g\in\Lambda,j+g\in\Lambda}\langle\sigma_{i+g}\sigma_{j+g}\rangle_{\Lambda,\pm},
\end{eqnarray*}
where the second sum is over $g$, $\lim_{\Lambda\rightarrow\infty}\frac{|\partial\Lambda|}{|\Lambda|}=0$
and (\ref{comparecorrelation}) imply that
\begin{eqnarray*}
\frac{1}{|\Lambda_1|}\sum_{i,j\in\Lambda_1,i\sim
j}\langle\sigma_i\sigma_j\rangle_{+}\leq\liminf\frac{1}{|\Lambda|}\sum_{i\sim
j}\langle\sigma_i\sigma_j\rangle_{\Lambda,+}\\
\leq\limsup \frac{1}{|\Lambda|}\sum_{i\sim
j}\langle\sigma_i\sigma_j\rangle_{\Lambda,+}\leq\frac{1}{|\Lambda_1|}\sum_{i,j\in\Lambda_1,i\sim
j}\langle\sigma_i\sigma_j\rangle_{+}+\epsilon,
\end{eqnarray*}
for any $\epsilon$, which means that
\[\lim_{\Lambda\rightarrow\infty}\frac{1}{|\Lambda|}\sum_{i\sim j,i,j\in\Lambda}\langle\sigma_i\sigma_j\rangle_{\Lambda,+}=\lim_{\Lambda\rightarrow\infty}\frac{\partial F(\Lambda,+)}{\partial \beta}=\frac{1}{|\Lambda_1|}\sum_{i,j\in\Lambda_1,i\sim j}\langle\sigma_i\sigma_j\rangle_+,\]
and similarly for the boundary condition ``$-$''. Since $F$ is
convex with respect to $\beta$, the discontinuous points of  $\frac{\partial{F}}{\partial\beta}$
is countably many $\beta$'s, at most. For any finite box $\Lambda$ of a
locally finite periodic graph
\[\left|\frac{1}{|\Lambda|}\sum_{i\sim j,i,j\in\Lambda}\langle\sigma_i\sigma_j\rangle\right|\leq \max_{v\in \Lambda_1}\deg(v).\]
According to the Dominated Convergence Theorem,
\begin{eqnarray*}
\int_{\beta_0}^{\beta_1}\lim_{\Lambda\rightarrow\infty}\frac{\partial
F(\Lambda,+)}{\partial
\beta}d\beta=\lim_{\Lambda\rightarrow\infty}\int_{\beta_0}^{\beta_1}\frac{\partial
F(\Lambda,+)}{\partial \beta}d\beta\\
=\lim_{\Lambda\rightarrow\infty}[F(\Lambda,\beta_1,+)-F(\Lambda,\beta_0,+)]=F(\beta_1)-F(\beta_0).
\end{eqnarray*}
The last equality follows from the fact that $F$ is independent of
boundary conditions. Let $\beta'\rightarrow\beta$  from the right side $(\beta'>\beta)$.  We have
\begin{eqnarray*}
\frac{\partial F}{\partial \beta^{+}}=\lim_{\beta'\rightarrow\beta
+}\left.\frac{\partial F}{\partial
\beta}\right|_{\beta=\beta'}=\lim_{\beta'\rightarrow\beta
+}\frac{1}{|\Lambda_1|}\sum_{i\sim
j,i,j\in\Lambda_1}\langle\sigma_i\sigma_j\rangle_{\beta',+}=\frac{1}{|\Lambda_1|}\sum_{i\sim
j,i,j\in\Lambda_1}\langle\sigma_i\sigma_j\rangle_{\beta,+}.
\end{eqnarray*}
So (\ref{freeenergyderivative}) is proved for the boundary condition
$+$, and $-$ is treated analogously.
\end{proof}

\begin{lemma} $\langle\sigma_i\sigma_j\rangle_{\mu}=\langle\sigma_i\sigma_j\rangle_{+}$
for any translation invariant limit probability measure $\mu$ and
any $i\sim j$, if $\frac{\partial F}{\partial \beta}$ is continuous.
\end{lemma}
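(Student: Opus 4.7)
The plan is a pincer argument: combine a pointwise upper bound $\langle\sigma_i\sigma_j\rangle_\mu \le \langle\sigma_i\sigma_j\rangle_+$ on every edge with a matching equality for the sum of these quantities over a fundamental domain, then deduce edge-by-edge equality.

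\textbf{Pointwise upper bound.} I would first apply Lemma 2.9 with $A=\{i,j\}$ to obtain $\langle\sigma_i\sigma_j\rangle_{\Lambda,b_\Lambda}\le \langle\sigma_i\sigma_j\rangle_{\Lambda,+}$ for any finite box $\Lambda\supset\{i,j\}$ and any boundary condition $b_\Lambda$. Since $\mu$ is a translation invariant Gibbs measure, the DLR equations give $\langle\sigma_i\sigma_j\rangle_\mu=\int \langle\sigma_i\sigma_j\rangle_{\Lambda,b}\,d\mu(b)$, so $\langle\sigma_i\sigma_j\rangle_\mu\le \langle\sigma_i\sigma_j\rangle_{\Lambda,+}$. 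Letting $\Lambda\to\infty$ via the monotone limit argument of Lemma 2.8 (applied to $\sigma_i\sigma_j$, which is increasing on each half of the spin-flip symmetry and whose expectation is recovered as a limit of the $+$ boundary two-point function) yields $\langle\sigma_i\sigma_j\rangle_\mu\le \langle\sigma_i\sigma_j\rangle_+$.

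\textbf{Equality of averages over the fundamental domain.} I would repeat the argument of Lemma 2.11 verbatim with the $+$ boundary condition replaced by the random boundary condition induced by $\mu$ restricted to $\Lambda^c$. Theorem 2.2 ensures that the thermodynamic limit $F(\beta)$ is independent of boundary conditions, so $|\Lambda|^{-1}\int\log Z_{\Lambda,b}\,d\mu(b)\to F(\beta)$. Differentiating termwise (legitimate because $|\langle\sigma_i\sigma_j\rangle|\le 1$ supplies the Dominated Convergence input used in Lemma 2.11, and the continuity assumption on $\partial F/\partial\beta$ identifies the two one-sided derivatives with $\partial F/\partial\beta$ itself) and applying the Van Hove / translation invariance reduction then gives
\[\frac{\partial F}{\partial\beta}\;=\;\frac{1}{|\Lambda_1|}\sum_{\substack{i\sim j\\ i,j\in\Lambda_1}}\langle\sigma_i\sigma_j\rangle_\mu.\]
Lemma 2.11 itself, applied to the $+$ boundary condition, supplies the same identity with $\langle\sigma_i\sigma_j\rangle_+$ in place of $\langle\sigma_i\sigma_j\rangle_\mu$. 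Consequently the two sums over the fundamental domain coincide.

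\textbf{Conclusion.} Summing the pointwise upper bound over edges of $\Lambda_1$ produces an inequality whose two sides are already known to be equal; hence equality must hold termwise, which gives $\langle\sigma_i\sigma_j\rangle_\mu=\langle\sigma_i\sigma_j\rangle_+$ for every edge $\{i,j\}$ in $\Lambda_1$, and then for every edge by translation invariance. The main obstacle is the derivative formula for the $\mu$-averaged boundary condition: one must interchange the thermodynamic limit with a $\beta$-derivative of a convex function, and it is precisely the continuity hypothesis on $\partial F/\partial\beta$ that justifies this interchange (pointwise convergence of convex functions upgrades to convergence of derivatives exactly at points where the limit is differentiable). Apart from this step, the remaining ingredients are standard consequences of the FKG inequality and of Theorem 2.2 already established above.
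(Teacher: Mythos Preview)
Your proof is correct and follows the same overall architecture as the paper: a pointwise inequality $\langle\sigma_i\sigma_j\rangle_\mu\le\langle\sigma_i\sigma_j\rangle_+$ from Lemma~2.9, an identification of the weighted sum $\sum_{i\sim j}J_{ij}\langle\sigma_i\sigma_j\rangle$ over a fundamental domain with $\partial F/\partial\beta$ for both $\mu$ and $+$, and then termwise equality from $J_{ij}>0$.

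The difference lies in how the sum identity for $\mu$ is obtained. The paper does not repeat the Lemma~2.11 computation; instead it invokes the Gibbs variational principle (cited from \cite{lan}), namely $F(\mathbf{J}+\mathbf{K})\ge F(\mathbf{J})+\langle A_{\mathbf{K}}\rangle_\mu$ for every translation-invariant $\mu$, which says that $\langle A_{\mathbf{J}}\rangle_\mu$ is a subgradient of the convex function $F$ at $\mathbf{J}$. Continuity of $\partial F/\partial\beta$ forces the subgradient to be unique, hence equal to the one already computed for $+$. Your route is more hands-on: you form the $\mu$-averaged finite-volume free energy $\frac{1}{|\Lambda|}\int\log Z_{\Lambda,b}\,d\mu(b)$, observe via DLR that its $\beta$-derivative at the given $\beta$ is exactly $\frac{1}{|\Lambda|}\sum J_{ij}\langle\sigma_i\sigma_j\rangle_\mu$ up to boundary terms, and then use the standard fact that pointwise limits of convex functions have converging derivatives at differentiability points. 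Both arguments rest on the same convexity mechanism, but yours is self-contained while the paper's is shorter by outsourcing to the variational principle. One small remark: when you say ``repeat Lemma~2.11 verbatim'', note that the FKG monotonicity used there to pass from $\langle\cdot\rangle_{\Lambda,+}$ to $\langle\cdot\rangle_+$ is not available for general $\mu$; what replaces it is that translation invariance of $\mu$ makes the Van~Hove reduction trivial, and DLR already gives infinite-volume correlations directly---so the argument is actually simpler for $\mu$ than for $+$, not a verbatim copy.
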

\begin{proof} Let $\mathbf{K}$ be an arbitrary translation
invariant interaction, define
\[A_{\mathbf{K}}=\frac{1}{|\Lambda_1|}\sum_{i\sim j,i,j\in\Lambda_1}K_{ij}\sigma_i\sigma_j.\]
Then for any translation invariant probability measure $\mu$
\begin{eqnarray}
F(\mathbf{J}+\mathbf{K})\geq F(\mathbf{J})+\langle A_{\mathbf{K}}\rangle_{\mu}.
\label{variationalprinciple}
\end{eqnarray}
That is $\langle A_{\mathbf{K}}\rangle_{\mu}$ is a tangent vector of $F$ at
$\mathbf{J}$ along the direction $\mathbf{K}$. In other words, the expected value of $A_{\mathbf{K}}$ under any translation invariant probability measure corresponds to a tangent vector. For the proof of
(\ref{variationalprinciple}), see \cite{lan}. If
$\left.\frac{\partial F(\mathbf{J}+t\mathbf{J})}{\partial
t}\right|_{t=0}$ is continuous, the tangent vector is unique and its slope is equal to the corresponding derivative, which means the expected values of $ A_{\mathbf{K}}$ under any translation invariant probability measure are the same. Hence if $\frac{\partial F}{\partial\beta}$ is continuous, we have
\[\sum_{i\sim j,i,j\in \Lambda_1}J_{ij}\langle\sigma_i\sigma_j\rangle_{\mu}=\sum_{i\sim j,i,j\in \Lambda_1}J_{ij}\langle\sigma_i\sigma_j\rangle_{+}.\]
Given Lemma 2.9, under the assumption that $J_{ij}>0$, we have
$\langle\sigma_{i}\sigma_{j}\rangle_{+}=\langle\sigma_{i}\sigma_{j}\rangle_{\mu}$ for
all translation invariant limit measure $\mu$.
\end{proof}

\begin{lemma} Assume $J_{ij}>0$, if $\langle\sigma_i\sigma_j\rangle=\langle\sigma_i\sigma_j\rangle'\neq
0$, for all $i\sim j$, then
$\langle\sigma_{E}\rangle=\langle\sigma_{E}\rangle'$, for all sets
$E$ containing an even number of sites.
\end{lemma}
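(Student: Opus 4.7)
The plan is to exploit the basic inequality (\ref{spininequality}) in two complementary ways and then iterate. Set $\Delta(E):=\langle\sigma_E\rangle-\langle\sigma_E\rangle'$. First, specializing (\ref{spininequality}) to $B=E,\ C=\emptyset$ (so $\sigma_C\equiv 1$ and $\langle\sigma_C\rangle=\langle\sigma_C\rangle'=1$) gives $\Delta(E)\ge|\Delta(E)|$, hence $\Delta(E)\ge 0$ for every finite $E$. Second, for a nearest-neighbor pair $i\sim j$ and any finite set $F$, the choice $B=F,\ C=\{i,j\}$ combined with the hypothesis $\langle\sigma_i\sigma_j\rangle=\langle\sigma_i\sigma_j\rangle'=c_{ij}$ collapses the right-hand side of (\ref{spininequality}) to $c_{ij}|\Delta(F)|=c_{ij}\Delta(F)$, yielding the key recursion
\[
\Delta\bigl(F\vartriangle\{i,j\}\bigr)\ge c_{ij}\,\Delta(F).
\]
Strict positivity $c_{ij}>0$ follows from the assumption $c_{ij}\ne 0$ together with Griffiths' second inequality, which in the ferromagnetic setting gives $c_{ij}\ge 0$.

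The next step is a combinatorial reduction. For any finite even $F\subset\mathbb{Z}^2$, I would construct a finite chain $F=F_0,F_1,\ldots,F_N=\emptyset$ with $F_{k+1}=F_k\vartriangle\{i_k,j_k\}$ and $i_k\sim j_k$. The construction is elementary: pick two points $p,q\in F_k$, choose a lattice path from $p$ to $q$, and ``walk'' $p$ along the path by successive operations $v\mapsto w$ (corresponding to $F_{k+1}=F_k\vartriangle\{v,w\}$ with $v\in F_k$ the current walker and $w\sim v$ the next site). Whenever $w\notin F_k$ the cardinality is preserved; whenever $w\in F_k$ both points are removed and $|F|$ drops by $2$. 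By connectedness of $\mathbb{Z}^2$ the walker reaches either $q$ or some other element of $F_k$ after finitely many steps, at which point the corresponding move reduces $|F|$ by $2$; parity preservation (each move does a symmetric difference with a $2$-element set) then guarantees that an even starting set is driven all the way to $\emptyset$ in finitely many iterations.

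Iterating the recursion along this chain yields
\[
0=\Delta(\emptyset)=\Delta(F_N)\ge\Bigl(\prod_{k=0}^{N-1}c_{i_kj_k}\Bigr)\Delta(F).
\]
Every factor being strictly positive forces $\Delta(F)\le 0$; combined with $\Delta(F)\ge 0$ from the opening step this gives $\Delta(F)=0$, proving the lemma.

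The main obstacle I expect is combinatorial rather than analytic: one must verify that every finite even subset of $\mathbb{Z}^2$ can genuinely be collapsed to $\emptyset$ through nearest-neighbor symmetric-difference moves, and recognize that the parity obstruction is precisely what restricts the statement to even $|E|$. A minor technical point is that (\ref{spininequality}) was derived at finite volume under the condition $J_k\ge|J_k'|$; passing the inequality to the infinite-volume limit in which Lemma~2.12 will be applied is immediate by taking $\Lambda\to\infty$ inside the finite-volume inequality.
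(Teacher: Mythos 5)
Your proof is correct and uses essentially the same mechanism as the paper: the basic inequality (\ref{spininequality}) specialized to a nearest-neighbor pair $C=\{i,j\}$, iterated along a chain of symmetric differences connecting an even set to $\emptyset$. The only cosmetic differences are that you run the chain downward with a quantitative product bound (hence needing $\langle\sigma_i\sigma_j\rangle>0$ rather than merely $\neq 0$ --- a standard ferromagnetic fact, though it is Griffiths' \emph{first} inequality, or FKG, rather than the second), whereas the paper argues upward via the qualitative implication that $\langle\sigma_A\rangle=\langle\sigma_A\rangle'$ and $\langle\sigma_B\rangle=\langle\sigma_B\rangle'\neq 0$ force $\langle\sigma_{A\triangle B}\rangle=\langle\sigma_{A\triangle B}\rangle'$; you also spell out the lattice-path collapse that the paper compresses into ``the result follows by induction.''
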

\begin{proof}Any translation invariant Gibbs measure can be considered as the weak limit of Boltzman measures on finite graphs with given boundary conditions, as the size of the graph goes to infinity.  Any boundary conditions can be transformed to $``+"$ boundary conditions by changing the coupling constant on the edges incident to an boundary site. If the configuration at the boundary site is $``+"$, then the new coupling constant on incident edges is unchanged; otherwise the new coupling constant is $-J_e$. Given $J_{ij}>0$, we have $J_{ij}\geq |J_{ij}|$, we derive that the inequality (\ref{spininequality}) is true. 
 Since $\langle\sigma_B\sigma_C\rangle-\langle\sigma_B\sigma_C\rangle'\geq
|\langle\sigma_B\rangle\langle\sigma_C\rangle'-\langle\sigma_B\rangle'\langle\sigma_C\rangle|$,
let $A=B\triangle C$, we have
\[\langle\sigma_A\rangle-\langle\sigma_A\rangle'\geq|\langle\sigma_B\rangle\langle\sigma_A\sigma_B\rangle'-\langle\sigma_B\rangle'\langle\sigma_A\sigma_B\rangle|.\]
Then $\langle\sigma_A\rangle=\langle\sigma_A\rangle'$,
$\langle\sigma_B\rangle=\langle\sigma_B\rangle'\neq 0$ implies
$\langle\sigma_A\sigma_B\rangle=\langle\sigma_A\sigma_B\rangle'$.
The result follows by induction.
\end{proof}

\noindent \textbf{Proof of Theorem 2.7} From Lemma 2.13, it suffices to
prove that $\langle\sigma_i\sigma_j\rangle\neq 0$ and is independent
of boundary conditions for all $i\sim j$. From Lemma 2.12, this is
true if and only if $\frac{\partial F}{\partial \beta}$ is
continuous, and $\frac{\partial F}{\partial \beta}$ is always
continuous, given lemma 2.11.$\Box$
\\
\\
The uniqueness of the spin-spin even correlation functions can lead to the uniqueness of translation invariant Gibbs measures of dimer models on a large class of non-bipartite graphs. Interested readers may look at the appendix for more about this topic. 
\section{Fisher Correspondence}

\subsection{Dimer Model}
The \textbf{Fisher graph} we consider in this paper is a graph
obtained from a honeycomb lattice by replacing each vertex by a
triangle, as illustrated in Figure 2. There are three types of
non-triangle edges on such a graph with different direction, namely
$a$-type, $b$-type, and $c$-type. We give weight 1 on all the $a$
edges(corresponding to horizontal edges in Figure 1). The weights of
$b$-edges and $c$-edges are strictly less than 1 and assigned
periodically with arbitrary period $m\times n$. As we will see, the
dimer model of such a Fisher graph corresponds to the ferromagnetic
Ising model on a periodic square grid. 

\begin{figure}[htbp]
  \centering
\includegraphics{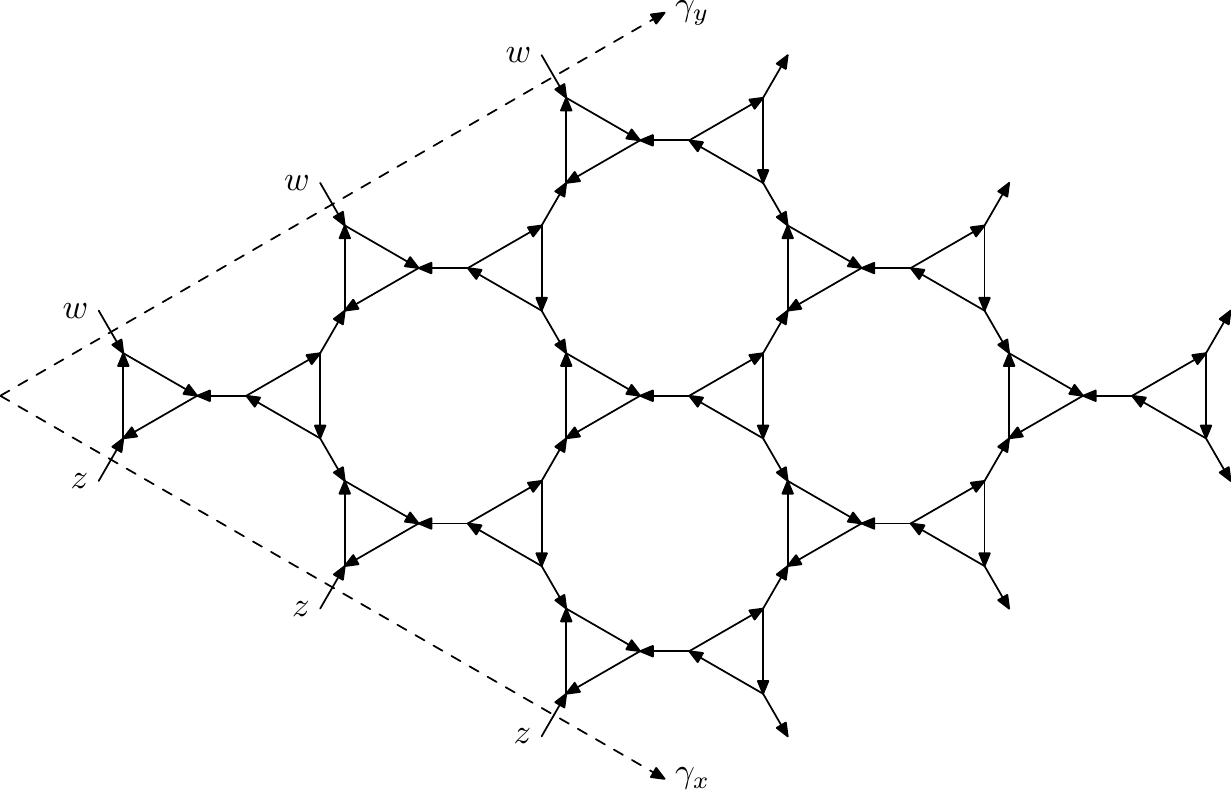}
   \caption{Fisher Graph}
\end{figure}

A \textbf{perfect matching}, or a \textbf{dimer cover}, of a graph
is a collection of edges with the property that each vertex is
incident to exactly one edge. For general results of dimer models, see the Appendix.

A \textbf{perfect matching}, or a \textbf{dimer cover}, of a graph
is a collection of edges with the property that each vertex is
incident to exactly one edge.

To a weighted finite graph $G=(V,E,W)$, the weight $W:
E\rightarrow\mathbb{R}^{+}$ is a function from the set of edges to
positive real numbers. We define a probability measure, called the
\textbf{Boltzmann measure} $\mu$ with sample space the set of dimer
covers. Namely, for a dimer cover $D$,
\begin{eqnarray*}
\mu(D)=\frac{1}{Z}\prod_{e\in D}W(e),
\end{eqnarray*}
where the product is over all edges present in $D$, and $Z$ is a
normalizing constant called the \textbf{partition function}, defined
to be
\begin{eqnarray*}
Z=\sum_{D}\prod_{e\in D}W(e),
\end{eqnarray*}
the sum over all dimer configurations of $G$.

For planar $\mathbb{Z}^2$-periodic graphs, endowed with periodic
weights on edges, the notion of Boltzman measure is replaced by that
of \textbf{Gibbs measure} with the property that if the dimer
configuration is fixed in an annular region, then the random dimer
configuration inside and outside the annulus are independent, and
the induced probability measure inside the annulus is the Boltzmann
measure defined above. It is known that different Gibbs measures may
be obtained as the infinite-volume, weak limits of Boltzmann
measures with various boundary conditions.

If we change the weight function $W$ by multiplying the edge weights
of all edges incident to a single vertex $v$ by the same constant,
the probability measure defined above does not change. So we define
two weight functions $W,W'$ to be \textbf{gauge equivalent} if one
can be obtained from the other by a sequence of such
multiplications.

The key objects used to obtain explicit expressions for the dimer
model are \textbf{Kasteleyn matrices}. They are weighted, oriented
adjacency matrices of the graph $G$ defined as follows. A
\textbf{clockwise odd orientation} of $G$ is an orientation of the
edges such that for each face (except the infinite face) an odd
number of edges pointing along it when traversed clockwise. For a
planar graph, such an orientation always exists \cite{ka2}. The
Kasteleyn matrix corresponding to such a graph is a
$|V(G)|\times|V(G)|$ skew-symmetric matrix $K$ defined by
\begin{align*}
K_{u,v}=\left\{\begin{array}{cc}W(uv)&{\rm if}\ u\sim v\ {\rm and}\
u\rightarrow v \\-W(uv)&{\rm if}\ u\sim v\ {\rm and}\ u\leftarrow
v\\0&{\rm else}.
\end{array}\right.
\end{align*}
It is known \cite{ka1,ka2,tes,kos} that for a planar graph with a
clock-wise odd orientation, the partition function of dimers
satisfies
\begin{align*}
Z=\sqrt{\det K}.
\end{align*}

Now let $G$ be a $\mathbb{Z}^2$-periodic planar graph. By this we
mean $G$ is embedded in the plane so that translations in
$\mathbb{Z}^2$ act by weight-preserving isomorphisms of
$G$, i.e. isomorphisms which map each edge to an edge with the same
weight. Let $G_n$ be the quotient graph $G/(n\mathbb{Z}\times
n\mathbb{Z})$. It is a finite graph on a torus. Let
$\gamma_{x,n}(\gamma_{y,n})$ be a path in the dual graph of $G_n$
winding once around the torus horizontally(vertically). Let
$E_H(E_V)$ be the set of edges crossed by $\gamma_{x,n}(\gamma_{y,n})$. We
give a \textbf{crossing orientation} for the toroidal graph $G_n$ as
follows. We orient all the edges of $G_n$ clockwise odd except for those in
$E_H\cup E_V$. This is possible since no other edges are crossing.
Then we orient the edges of $E_H$ clock-wise odd as if $E_V$ did not exist. Again
this is possible since $G-E_V$ is planar. To complete the
orientation, we also orient the edges of $E_V$ clockwise odd as if $E_H$ did not
exist.

For $\theta,\tau\in\{0,1\}$, let $K_n^{\theta,\tau}$ be the
Kasteleyn matrix $K_n$ in which the weights of edges in $E_H$ are
multiplied by $(-1)^{\theta}$, and those in $E_V$ are multiplied by
$(-1)^{\tau}$. It is proved in \cite{tes} that the partition
function $Z_n$ of the graph $G_n$ is
\begin{align*}
Z_n=\frac{1}{2}|\mathrm{Pf}(K_n^{00})+\mathrm{Pf}(K_n^{10})+\mathrm{Pf}(K_n^{01})-\mathrm{Pf}(K_n^{11})|.
\end{align*}

Let $E_m=\{e_1=u_1v_1,...,e_m=u_mv_m\}$ be a subset of edges of
$G_n$. Kenyon \cite{ke2} proved that the probability of these edges
occurring in a dimer configuration of $G_n$ with respect to the
Boltzmann measure $P_n$ is
\begin{align*}
P_n(e_1,...,e_m)=\frac{\prod_{i=1}^{m}W(u_iv_i)}{2Z_n}|\mathrm{Pf}(K_n^{00})_{E^{c}_{m}}+\mathrm{Pf}(K_n^{10})_{E^{c}_{m}}+\mathrm{Pf}(K_n^{01})_{E^{c}_{m}}-\mathrm{Pf}(K_n^{11})_{E^{c}_{m}}|,
\end{align*}
where $E_m^c=V(G_n)\setminus\{u_1,v_1,...,u_m,v_m\}$, and
$(K_n^{\theta\tau})_{E_m^c}$ is the submatrix of $K_n^{\theta\tau}$
whose lines and columns are indexed by $E_m^c$.

The asymptotic behavior of $Z_n$ when $n$ is large is an interesting
subject. One important concept is the partition function per
fundamental domain, which is defined to be
\[\lim_{n\rightarrow\infty}(Z_n)^{\frac{1}{n^2}}.\]
The logarithm of the partition function per fundamental domain is the \textbf{free energy}, namely
\begin{eqnarray*}
F:=\lim_{n\rightarrow\infty}\frac{1}{n^2}\log Z_n.
\end{eqnarray*}

Let $K_1$ be a Kasteleyn matrix for the graph $G_1$. Given any
parameters $z, w$, we construct a matrix $K(z,w)$ as follows. Let
$\gamma_{x,1}$, $\gamma_{y,1}$ be the paths introduced as above.
Multiply $K_{1,(u,v)}$, the $(u,v)$ entry of the matrix $K_1$, by $z$, if the crossing orientation on that edge is
from $u$ to $v$, otherwise multiply $K_{1,(u,v)}$ by $\frac{1}{z}$, and
similarly for $w$ on $\gamma_y$. Define the \textbf{characteristic
polynomial} $P(z,w)=\det K(z,w)$. The \textbf{spectral curve} is
defined to be the locus $\{(z,w)\in\mathbb{C}^{*}\times\mathbb{C}^{*}: P(z,w)=0\}$.

Gauge equivalent dimer weights give the same spectral curve. That is
because after Gauge transformation, the determinant multiplies by a
nonzero constant, thus not changing the locus of $P(z,w)$.

A formula for enlarging the fundamental domain is proved in
\cite{ckp,kos}. Let $P_n(z,w)$ be the characteristic polynomial of
$G_n$, and $P(z,w)$ be the characteristic polynomial of $G_1$,
then
\begin{eqnarray*}
P_n(z,w)=\prod_{u^n=z}\prod_{v^n=w}P(u,v).
\end{eqnarray*}

\subsection{Duality Transformation}

We call a geometric figure built with a certain number of bonds a closed polygon if at every lattice point, only an even number of bonds occurs. We associate to each square grid a 
Fisher graph by replacing each vertex with a gadet, as illustrated in the following 
figure

\begin{figure}[htbp]
\centering
\includegraphics*{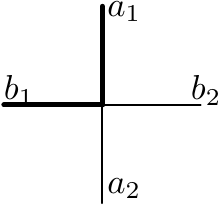}\qquad\qquad
\includegraphics*{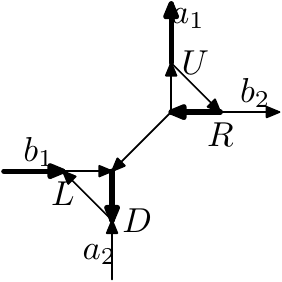} \caption{vertex
of Ising model and corresponding dimer gadget}
\end{figure}

If we draw a larger picture we will see that the Fisher graph is exactly the one obtained by replacing each vertex of the hexagonal lattice by a triangle. There is a one-to-one correspondence between closed polygon configurations on the square grid and dimer configurations on the Fisher graph. An edge is present in the closed polygon configuration of the square grid if and only if the corresponding edge is present in the dimer configuration of the Fisher graph, see Figure 2.

There are two ways to associate an Ising spin system on a square grid to the closed polygon configurations on the square grid. As a result, an Ising spin system on a square grid can be associated to the dimer system on a Fisher graph in two different ways. These two different ways lead to a duality transformation, which plays an important role in the identification of the critical temperature, as we will see. Note that these correspondences holds when there is no external magnetic field. Namely, the Ising model is defined by (\ref{probability}), given $h=0$, in order to introduce the correspondences between the Ising model and the dimer model.

\subsubsection{Correspondence 1: Measure-Preserving Correspondence }

An Ising model on a square grid is associated to closed polygon configurations on the \textbf{dual} square grid as follows: if two adjacent spins are of the same sign, the dual edge separating them is not present in the closed polygon configuration, otherwise the dual edge is present in the closed polygon configuration. An example of this correspondence is illustrated in Figure 3:

\begin{figure}[htbp]
  \centering
\includegraphics*{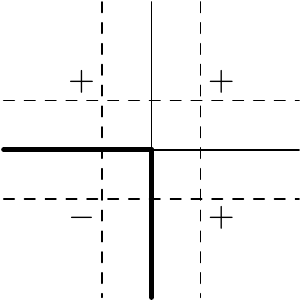}
   \caption{Measure-Preserving Correspondence}
\end{figure}

The dashed line are bonds of the Ising model, and the black line are edges of the dual grid. If two adjacent spins are of the same sign, the dual edge separating them is not present in the closed polygon configuration; otherwise the dual edge is present in the closed polygon configuration. This correspondence is two-to-one because negating the sign of all spins will end up with the same closed polygon configuration on the square grid. Combining with the one-to-one correspondence of closed polygon configurations on the square grid and the dimer configurations on Fisher graph, by assigning edge weights $e^{-2J_e}$ to the \textbf{dual} edges, we get a measure-preserving correspondence between the Ising spin system and the dimer system, namely,

\begin{eqnarray*}
Pr(D(\sigma))=Pr(\sigma)+Pr(-\sigma)=2Pr(\sigma),
\end{eqnarray*}
where $D(\sigma)$ is the dimer configuration corresponding to the spin configuration $\sigma$.

\subsubsection{Correspondence 2: Correspondence Based on High-Temperature Expansion}

Let $\mathcal{G}$ be a finite square grid, the high temperature expansion of the Ising model on $\mathcal{G}$ is
\begin{eqnarray*}
Z_{G,I}&=&\sum_{\sigma}\prod_{e=uv\in
E(G)}\exp(J_e\sigma_u\sigma_v)\\
&=&\sum_{\sigma}\prod_{e=uv\in E(G)}(\cosh
J_e+\sigma_u\sigma_v\sinh J_e)\\
&=&(\prod_{e=uv\in E(G)}\cosh J_e)\sum_{\sigma}\prod_{e=uv\in
E(G)}(1+\sigma_u\sigma_v\tanh J_e)\\
&=&(\prod_{e=uv\in E(G)}\cosh J_e)\sum_{C\in S}\prod_{e\in
C}2^{n^2}\tanh J_e,
\end{eqnarray*}
where $S$ is the set of all closed polygon configurations of $\mathcal{G}$. This way, up to a multiplicative constant, the Ising partition function is the same as the partition function of the closed polygon configuration of the $\mathbf{same}$ square grid. There is no one-to-one correspondence between configurations in this case, hence although in this case the partition function is invariant, it is not measure-preserving. Combining with the correspondence between the closed polygon configurations on the square grid and the dimer configurations on the Fisher graph, we obtain a correspondence between Ising models on square grid and dimer models on Fisher graph, by giving the corresponding edge of the Fisher graph $\tanh J_e$.

\subsection{Duality Transformation}

Let $G_n$ be the quotient graph of the square grid on the plane, as
defined on Page 4. Let $G_n^*$ be the dual graph of $G_n$. Define an
Ising model on $G_n$ with interactions $\{J_e\}_{e\in E(G_n)}$.
Assume the Ising model on $G_n$ has partition function $Z_{G_n, I}$.
Then $Z_{G_n,I}$ can be written as, up to a constant multiple, following from the measure-preserving correspondence, we have
\[Z_{G_n,I}=2\prod_{e\in E(G_n)}\exp(J_e)\sum_{C^*\in S_{00}^*}\prod_{e\in C^*}\exp(-2J_e):=2\prod_{e\in E(G_n)}\exp(J_e)Z_{F_n,D_{00}},\]
where $S_{00}^*$ is the set of closed polygon configurations of
$G_n^*$, with an even number of occupied bonds crossed by both
$\gamma_x$ and $\gamma_y$. The sum is over all configurations in
$S_{00}^*$.  Similarly, we can define $S_{01}^*(S_{10}^*,S_{11}^*)$
to be the set of closed polygon configurations of $G_n^*$, with an
even(odd,odd) number of occupied bonds crossed by $\gamma_x$, and an
odd(even,odd) number of occupied bonds crossed by $\gamma_y$. $F_n$ is the Fisher graph obtained from $G_n^*$ by replacing each vertex with a gadget, as described in Figure 2. Let
$Z_{F_n, D}$ be the partition function of dimer configurations on
$F_n$,  with
weights $e^{-2J_e}$ on dual edges of $G_n$, and weight 1 on all the
other edges. Then
\[Z_{F_n,D}=Z_{F_n,D_{00}}+Z_{F_n, D_{01}}+Z_{F_n,D_{10}}+Z_{F_n,D_{11}},\]
where
\[Z_{F_n,D_{\theta,\tau}}=\sum_{C^*\in S_{\theta,\tau}^*}\prod_{e\in C^{*}}\exp(-2J_e).\]
For example, $Z_{F_n,D_{01}}$ is the dimer partition function on
$F_n$ with an even number of occupied edges crossed by $\gamma_x$,
and an odd number of occupied edges crossed by $\gamma_y$. It also
corresponds to an Ising model which has the same configuration on
the two boundaries parallel to $\gamma_y$, and the opposite
configurations on the two boundaries parallel to $\gamma_x$. Similar
results hold for all the $Z_{F_n, D_{\theta,\tau}}$,
$\theta,\tau\in\{0,1\}$.

On the other hand, if we consider the high temperature expansion of
the Ising model on $G_n$, we have
\begin{eqnarray*}
Z_{G_n,I}=(\prod_{e=uv\in E(G_n)}\cosh J_e)\sum_{C\in S}\prod_{e\in
C}2^{n^2}\tanh J_e,
\end{eqnarray*}
where $S$ is the set of all closed polygon configurations of $G_n$.
Let $\tilde {F}_n$ be a Fisher graph embedded into an $n\times n$
torus, obtained from $G_n$ by the correspondence in Figure 3, with weights $\tanh J_e$ on edges of $G_n$, and weight 1 on
all the other edges. In other words, the edge with weight $\tanh
J_e$ of $\tilde{F}_n$ and the edge with weight $e^{-2J_e}$ of $F_n$
are dual edges. Then we have
\[Z_{G_n,I}=2\prod_{e\in E(G_n)}\exp(J_e)Z_{F_n,D_{00}}=2^{n^2}\prod_{e\in E(G_n)}\cosh J_e Z_{\tilde {F}_n,D},\]
where $Z_{\tilde{F}_n,D}$ is the partition function
of dimer configurations on $\tilde{F}_n$.

Similarly, we can expand all the $Z_{F_n,D_{\theta\tau}}$ as
follows:
\begin{equation}
Z_{F_n,D_{\theta,\tau}}=\frac{1}{2^{n^2+1}}\prod_{e\in
E_{G_n}}(1+\exp(-2J_e))Z_{\tilde{F}_n,D}((-1)^{\tau},(-1)^{\theta}).\label{duality}
\end{equation}
$Z_{\tilde{F}_{n,D}}(-1,1)$ is the dimer partition function of
$\tilde{F}_n$ with weights of edges dual to the edges crossed by $\gamma_x$ multiplied
by $-1$.   Similarly for $Z_{\tilde{F}_{n,D}}(1,-1)$ and
$Z_{\tilde{F}_{n,D}}(-1,-1)$. To see why (\ref{duality}) is true, let us consider, for example $Z_{F_n,D_{10}}$, an odd number of present edges are crossed by $\gamma_x$, and an even number of present edges are crossed by $\gamma_y$. The corresponding Ising model has boundary condition such that winding once along $\gamma_x$, the spins change sign, while winding once along $\gamma_y$, the spins are invariant. This is the same as for a row of edges crossed by $\gamma_y$, we change the coupling constant from $J_e$ to $-J_e$, and give the periodic boundary conditions. After the duality transformation, the edge weights $\tanh J_e$, if $e$ is crossed by $\gamma_y$ will be the opposite.

Without loss of generality, assume $n$ is even. Let $K(z,w)$ be the
Kasteleyn matrix of the toroidal graph, as defined on page 4. Given
the orientation in Figure 1, we have 
\footnotesize
\begin{eqnarray}
\mathrm{Pf}K_{F_n}(1,1)&=&Z_{F_n,D_{00}}-Z_{F_n,D_{01}}-Z_{F_n,D_{10}}-Z_{F_n,D_{11}}\nonumber\\ 
&=&\frac{1}{2^{n^2+1}}\prod_{e\in
E(G_n)}(1+\exp(-2J_e))[Z_{\tilde{F}_{n,D}}(1,1)-Z_{\tilde{F}_{n,D}}(-1,1)-Z_{\tilde{F}_{n,D}}(1,-1)-Z_{\tilde{F}_{n,D}}(-1,-1)] \nonumber\\
&=&\frac{1}{2^{n^2+2}}\prod_{e\in
E(G_n)}(1+\exp(-2J_e))\nonumber\\
&&\{[-\mathrm{Pf}K_{\tilde{F}_{n}}(1,1)+\mathrm{Pf}K_{\tilde{F}_{n}}(1,-1)+\mathrm{Pf}K_{\tilde{F}_{n}}(-1,1)+\mathrm{Pf}K_{\tilde{F}_{n}}(-1,-1)]\nonumber\\
&&-[-\mathrm{Pf}K_{\tilde{F}_{n}}(-1,1)+\mathrm{Pf}K_{\tilde{F}_{n}}(-1,-1)+\mathrm{Pf}K_{\tilde{F}_{n}}(1,1)+\mathrm{Pf}K_{\tilde{F}_{n}}(1,-1)]\nonumber\\
&&-[-\mathrm{Pf}K_{\tilde{F}_{n}}(1,-1)+\mathrm{Pf}K_{\tilde{F}_{n}}(1,1)+\mathrm{Pf}K_{\tilde{F}_{n}}(-1,-1)+\mathrm{Pf}K_{\tilde{F}_{n}}(-1,1)]\nonumber\\
&&-[-\mathrm{Pf}K_{\tilde{F}_{n}}(-1,-1)+\mathrm{Pf}K_{\tilde{F}_{n}}(-1,1)+\mathrm{Pf}K_{\tilde{F}_{n}}(1,-1)+\mathrm{Pf}K_{\tilde{F}_{n}}(1,1)]\}\nonumber\\
&=&-\frac{1}{2^{n^2}}\prod_{e\in
E(G_n)}(1+\exp(-2J_e))\mathrm{Pf}K_{\tilde{F}_{n}}(1,1).\label{dualitysignchange}
\end{eqnarray}
\normalsize
If we perform the same expansion for all the
$\mathrm{Pf}K_{F_n}((-1)^{\theta},(-1)^{\tau})$, for
$\theta,\tau\in\{0,1\}$, we get
\[\mathrm{Pf}K_{F_n}((-1)^{\theta},(-1)^{\tau})=\frac{1}{2^{n^2}}\prod_{e\in
E(G_n)}(1+\exp(-2J_e))\mathrm{Pf}K_{\tilde{F}_{n}}((-1)^{\theta},(-1)^{\tau}),\qquad
\mathrm{if}\ \{\theta,\tau\}\neq(0,0).\]

\begin{proposition}Consider a Fisher graph $F_n$ embedded into an $n\times
n$ torus. If an edge $e$ is parallel to $\gamma_x,\gamma_y$, give
$e$ weight $w_e=e^{-2J_e}$. For all the other edges give weight 1.
Define the duality transformation of $F_n$ to be another Fisher
graph $\tilde{F}_n$, embedded into an $n\times n$ torus. If an edge
$e$ is parallel to $\gamma_x$, $\gamma_y$, give $e$ weight $\tanh
J_e$. For all the other edges, give weight 1. Let $P(z,w)=0$ denote
the spectral curve. Then
\[P_{F_n}((-1)^{\theta},(-1)^{\tau})=0\Longleftrightarrow P_{\tilde{F}_n}((-1)^{\theta},(-1)^{\tau})=0.\]
\end{proposition}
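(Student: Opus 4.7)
The plan is to derive the proposition by squaring the Pfaffian identities already assembled in this subsection. The crucial observation is that at any of the four sign points $(z,w)$ with $z,w\in\{\pm 1\}$, the matrix $K(z,w)$ remains skew-symmetric, since multiplying the $(u,v)$-entry by $\pm 1$ and the $(v,u)$-entry by $1/(\pm 1)=\pm 1$ preserves antisymmetry. Consequently
\[
P_{F_{n}}((-1)^{\theta},(-1)^{\tau}) \;=\; \bigl(\mathrm{Pf}\,K_{F_{n}}((-1)^{\theta},(-1)^{\tau})\bigr)^{2},
\]
and analogously for $\tilde{F}_{n}$, so the vanishing locus of the spectral curve at these four points is controlled entirely by vanishing of the corresponding Pfaffian.

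First I would put the Pfaffian identity into a uniform form covering all four sign choices. The duality expansion (\ref{duality}) relates the four constrained partition functions $Z_{F_{n},D_{\theta\tau}}$ linearly to the four Pfaffians $\mathrm{Pf}\,K_{\tilde{F}_{n}}(\pm 1,\pm 1)$, and these Pfaffians are in turn related to the four Pfaffians $\mathrm{Pf}\,K_{F_{n}}(\pm 1,\pm 1)$ by the standard invertible $4\times 4$ change of basis used in Tesler's formula. Composing these two changes of basis yields, for every $(\theta,\tau)\in\{0,1\}^{2}$, an identity of the form
\[
\mathrm{Pf}\,K_{F_{n}}((-1)^{\theta},(-1)^{\tau}) \;=\; \varepsilon_{\theta,\tau}\, C_{n}\, \mathrm{Pf}\,K_{\tilde{F}_{n}}((-1)^{\theta},(-1)^{\tau}),
\]
with $C_{n}=2^{-n^{2}}\prod_{e\in E(G_{n})}(1+e^{-2J_{e}})$ and $\varepsilon_{\theta,\tau}\in\{\pm 1\}$; the case $(\theta,\tau)=(0,0)$ is precisely (\ref{dualitysignchange}), where $\varepsilon_{0,0}=-1$, and the three remaining cases are obtained by repeating the same four-line bracket computation with the signs of $\gamma_{x}$ and $\gamma_{y}$ flipped as appropriate.

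Squaring this Pfaffian identity and invoking the Pfaffian--determinant relation above gives
\[
P_{F_{n}}((-1)^{\theta},(-1)^{\tau}) \;=\; C_{n}^{2}\, P_{\tilde{F}_{n}}((-1)^{\theta},(-1)^{\tau})
\]
for every $(\theta,\tau)\in\{0,1\}^{2}$. Since $J_{e}>0$, each factor $1+e^{-2J_{e}}$ is strictly positive, so $C_{n}^{2}>0$, and the claimed equivalence follows. The main obstacle is purely bookkeeping: verifying that the three sign cases other than $(0,0)$ really do share the same $4\times 4$ inversion step that produced (\ref{dualitysignchange}), so that the same constant $C_{n}$ emerges (up to the overall sign $\varepsilon_{\theta,\tau}$) in all four cases. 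Squaring then washes out these signs, and positivity of $C_{n}$ delivers the iff.
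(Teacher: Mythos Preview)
Your proposal is correct and follows essentially the same approach as the paper: the proposition is stated immediately after the Pfaffian identities (\ref{dualitysignchange}) and its three analogues for $(\theta,\tau)\neq(0,0)$, and is meant to be read as their direct consequence via $P=(\mathrm{Pf}\,K)^{2}$ at the four real sign points together with the strict positivity of the constant $C_{n}$. Your write-up makes explicit the skew-symmetry check at $(\pm 1,\pm 1)$ and the squaring step that the paper leaves implicit, but there is no substantive difference in method.
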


\section{Critical Temperature and Spectral Curve}

The purpose of this section is to identify the critical temperature of the periodic Ising model with the condition that the spectral curve as a single real node on $\mathbb{T}^2$. Using the correspondence based on the high-temperature expansion, we express the spin-spin correlation of the periodic Ising model as the determinant of a block Toeplitz matrix. The analysis of the holomorphic properties of the matrix leads to the result.

\begin{lemma}(Kelly and Sherman\cite{gri}) Given a Hamiltanonian
\begin{eqnarray*}
\mathcal{H}=-\sum_{A\subset\Lambda}J_A\sigma_A,
\end{eqnarray*}
with ferromagnetic interactions, that is
\begin{eqnarray*}
J_A\geq 0,
\end{eqnarray*}
for all $A\subset\Lambda$, then
\begin{eqnarray*}
\frac{1}{\beta}\frac{\partial\langle\sigma_B\rangle}{\partial
J_C}=\langle\sigma_B\sigma_C\rangle-\langle\sigma_B\rangle\langle\sigma_C\rangle\geq
0.
\end{eqnarray*}
\end{lemma}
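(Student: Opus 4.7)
The statement contains two assertions: an equality relating $\partial\langle\sigma_B\rangle/\partial J_C$ to a truncated two-point function, and a sign for that quantity. I would prove them separately.

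For the equality, the plan is a one-line differentiation. Writing
$$\langle \sigma_B\rangle = \frac{1}{Z}\sum_\sigma \sigma_B\, e^{\beta\sum_A J_A \sigma_A}, \qquad Z = \sum_\sigma e^{\beta\sum_A J_A \sigma_A},$$
and applying $\partial/\partial J_C$ via the quotient rule gives $\beta^{-1}\partial_{J_C}\langle\sigma_B\rangle = \langle\sigma_B\sigma_C\rangle - \langle\sigma_B\rangle\langle\sigma_C\rangle$ immediately. This is the same computation that produced the second derivative of the free energy in Theorem~2.2, so no new idea is needed.

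For the nonnegativity (the second Griffiths-Kelly-Sherman inequality), I would use Ginibre's duplicate-variable trick. Introduce an independent second copy $\sigma'$ with the same Hamiltonian. Then
$$2\big(\langle \sigma_B \sigma_C\rangle - \langle\sigma_B\rangle\langle\sigma_C\rangle\big) = \frac{1}{Z^2}\sum_{\sigma,\sigma'}(\sigma_B - \sigma'_B)(\sigma_C - \sigma'_C)\, e^{\beta\sum_A J_A(\sigma_A + \sigma'_A)},$$
so it suffices to show the right-hand side is $\geq 0$. Change variables to
$$\alpha_i = \tfrac{\sigma_i+\sigma'_i}{2}, \qquad \beta_i = \tfrac{\sigma_i-\sigma'_i}{2},$$
so at each site $(\alpha_i,\beta_i)\in\{(\pm 1,0),(0,\pm 1)\}$. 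A direct binomial expansion of $\prod_{i\in A}(\alpha_i+\beta_i)\pm\prod_{i\in A}(\alpha_i-\beta_i)$ shows that $\sigma_A+\sigma'_A$ and $\sigma_A-\sigma'_A$ are each polynomials in the $\alpha_i,\beta_i$ with \emph{nonnegative} coefficients (the surviving monomials have even, respectively odd, total $\beta$-degree). Since $J_A\geq 0$, the exponential $e^{\beta\sum J_A(\sigma_A+\sigma'_A)}$ expands as a convergent power series into monomials in $\alpha,\beta$ with nonnegative coefficients, and therefore so does the full integrand.

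The concluding step is to check that $\sum_{\sigma,\sigma'}\prod_i \alpha_i^{a_i}\beta_i^{b_i}\geq 0$ for any exponents $a_i,b_i\geq 0$. This sum factorizes over sites, and at each site the one-site sum over the four values of $(\alpha_i,\beta_i)$ equals $4$, $1+(-1)^{a_i}$, $1+(-1)^{b_i}$, or $0$ according to which of $a_i,b_i$ is zero or positive; every case is $\geq 0$. Multiplying the one-site contributions finishes the argument. The only real subtlety is the bookkeeping: one must keep $(\sigma_B-\sigma'_B)(\sigma_C-\sigma'_C)$ in its expanded monomial form before invoking the constraint $\alpha_i\beta_i=0$, so that the nonnegativity of coefficients remains manifest. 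The conceptual step is the introduction of the Ginibre copy; everything after that is algebra.
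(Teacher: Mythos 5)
Your argument is correct. Note first that the paper itself offers no proof of this lemma at all: it is stated as a citation to Griffiths' survey \cite{gri}, so there is no ``paper's proof'' to compare against. What you have written is the standard Ginibre duplicate-variable proof of the second Griffiths--Kelly--Sherman inequality, and it is complete: the symmetrization identity
\[
2\bigl(\langle\sigma_B\sigma_C\rangle-\langle\sigma_B\rangle\langle\sigma_C\rangle\bigr)
=\frac{1}{Z^2}\sum_{\sigma,\sigma'}(\sigma_B-\sigma'_B)(\sigma_C-\sigma'_C)\,e^{\beta\sum_A J_A(\sigma_A+\sigma'_A)}
\]
is right, the observation that $\sigma_A\pm\sigma'_A$ expand with nonnegative coefficients in the $\alpha_i,\beta_i$ variables (even, resp.\ odd, total $\beta$-degree surviving) is right, and the factorization of the single-site sums into the values $4$, $1+(-1)^{a_i}$, $1+(-1)^{b_i}$, $0$ is right, each of which is nonnegative. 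You are also correct that the exponential must be expanded \emph{before} using $\alpha_i\beta_i=0$, since nonnegativity of coefficients is the invariant you are propagating. The only thing worth flagging is cosmetic: you use $\beta$ simultaneously for the inverse temperature and for the site variables $\beta_i$, which should be renamed to avoid confusion. It is worth remarking that the paper does deploy a closely related duplicate-variable manipulation elsewhere --- in the chain leading to inequality (\ref{spininequality}), the author introduces $t_i=\sigma_i\sigma'_i$, which is the multiplicative form of the same two-copy idea --- so your approach is well in the spirit of the surrounding material, even though for this particular lemma the paper simply refers the reader to the literature rather than proving it.
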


The high temperature expansion of the partition function gives
\begin{eqnarray*}
Z=\prod_e 2\cosh J_e\sum_{\mathcal{C}}\prod_{e\in\mathcal{C}}\tanh
J_e,
\end{eqnarray*}
where the sum is over all closed polygon of the square grid.
Assume $\tau_e=\tanh J_e$. We consider the spin located at $(0,0)$,
$\sigma_{00}$ and the spin located at $(0,N)$, $\sigma_{0N}$. Then
\begin{eqnarray*}
(\sigma_{00}\sigma_{0N})=(\sigma_{00}\sigma_{01})(\sigma_{01}\sigma_{02})...(\sigma_{0,N-1}\sigma_{0N}).
\end{eqnarray*}
Let $e_l$ denote the edge connecting $(0,l)$ and $(0,l+1)$. Since
\begin{eqnarray*}
\sigma_{0,l}\sigma_{0,l+1}(1+\tau_{e_l}\sigma_{0,l}\sigma_{0,l+1})=\tau_{e_l}(1+\frac{1}{\tau_{e_l}}\sigma_{0l}\sigma_{0,l+1}),
\end{eqnarray*}
we have
\begin{eqnarray*}
\langle\sigma_{00}\sigma_{0N}\rangle=Z^{-1}\prod_{e}2\cosh
J_e\prod_{l=0}^{N-1}\tau_{e_l}\sum_{\sigma}\prod_{l=0}^{N-1}(1+\frac{1}{\tau_{e_l}}\sigma_{0l}\sigma_{0,l+1}){\prod_{e}}'(1+\tau_e\sigma_p\sigma_q),
\end{eqnarray*}
where $\prod'$ means the terms corresponding to $e_l$, $0\leq l\leq
N-1$ are omitted. This expression is of the form of a partition
function for some Ising counting lattice with bonds $\tau_{e_l}$ on
the straight line connecting sites $(0,0)$ and $(0,N)$, replaced by
$\frac{1}{\tau_{e_l}}$. We transform the square grid $G_n$ to the Fisher Graph, using the technique as described in Figure 3. Assume the edge weights of the Fisher graph are $\tanh J_e$ for all edges corresponding to edges of $G_n$, and all the other edges have weight 1.

Let $\gamma_x(\gamma_y)$ denote a
path winding horizontally(vertically) once on the torus, multiply
all the edge weights crossed by $\gamma_x(\gamma_y)$ with $z$ or
$\frac{1}{z}$, ($w$ or $\frac{1}{w}$), according to the orientation
of edges.

Let $K_{mn}(z,w)$ denote the corresponding weighted adjacency
matrix. If both $m$ and $n$ are even, we have
\begin{eqnarray*}
\langle\sigma_{00}\sigma_{0N}\rangle_{m,n}=\prod_{l=0}^{N-1}\tau_{e_l}\frac{-\mathrm{Pf}
K'_{mn}(1,1)+\mathrm{Pf} K'_{mn}(-1,1)+\mathrm{Pf} K'_{mn}(1,-1)+\mathrm{Pf}
K'_{mn}(-1,-1)}{-\mathrm{Pf} K_{mn}(1,1)+\mathrm{Pf} K_{mn}(-1,1)+\mathrm{Pf} K_{mn}(1,-1)+\mathrm{Pf}
K_{mn}(-1,-1)},
\end{eqnarray*}
where $K'_{mn}(z,w)$ denote the corresponding weighted adjacency
matrix by changing the edge weights $\tau_{e_l}$ to
$\frac{1}{\tau_{e_l}}$.

\begin{theorem}For the Ising model on the infinite periodic square grid,
we can express
$\lim_{N\rightarrow\infty}\langle\sigma_{00}\sigma_{0N}\rangle^2$ as
the determinant of a block Toeplitz matrix multiplied by a function
of weights of all edges on the straight line connecting
$\sigma_{00}$ and $\sigma_{0N}$.
\end{theorem}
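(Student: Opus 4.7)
The plan is to derive the claim in three steps: (i) pass to a single determinant ratio in the infinite-volume limit, (ii) compress that ratio to a finite determinant using the low-rank structure of the weight modification $\tau_{e_l}\mapsto 1/\tau_{e_l}$, and (iii) identify the resulting finite matrix as block Toeplitz by exploiting translation invariance along the vertical line.

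For Step (i) I would start from the formula displayed just before the theorem and square it, so that $\prod_{l=0}^{N-1}\tau_{e_l}$ becomes $\prod_{l=0}^{N-1}\tau_{e_l}^2$ and the sums of four Pfaffians appear squared. Standard torus asymptotics for periodic dimer models (Kasteleyn, Kenyon--Okounkov--Sheffield) give that $\mathrm{Pf}\,K_{mn}(\pm1,\pm1)$ all agree in absolute value to leading order as $m,n\to\infty$, so after the appropriate sign cancellations the squared ratio of sums is equivalent, in the thermodynamic limit, to $\det K_\infty'/\det K_\infty$, i.e.\ to $\det(I+K_\infty^{-1}U)$, where $U=K'-K$ is the perturbation supported on the $N$ modified edges.

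For Step (ii) I observe that $U$ is nonzero only on the entries indexed by the $O(N)$ Fisher-graph vertices incident to the modified edges on the line $\{(0,l):0\le l\le N\}$, and that each single-edge change is a rank-$2$ skew update. By the Jacobi (Schur-complement) identity,
\begin{equation*}
\det\bigl(I+K_\infty^{-1}U\bigr) \;=\; \det\bigl(I + (K_\infty^{-1})_{\mathcal{V},\mathcal{V}}\, U_{\mathcal{V},\mathcal{V}}\bigr),
\end{equation*}
where $\mathcal{V}$ is the vertex set incident to the modified edges; the right-hand side is a finite determinant whose size grows linearly in $N$.

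For Step (iii), in the infinite periodic setting the entries of $K_\infty^{-1}$ are given by the Fourier integral
\begin{equation*}
K_\infty^{-1}(u,v) \;=\; \frac{1}{(2\pi i)^2}\oint\!\oint K(z,w)^{-1}_{\alpha(u),\alpha(v)}\; z^{-(i_u-i_v)}\, w^{-(j_u-j_v)}\,\frac{dz}{z}\frac{dw}{w},
\end{equation*}
where $\alpha(\cdot)$ denotes the position within a fundamental domain. All affected vertices lie on a common vertical line, so $i_u-i_v$ is bounded and the entries depend on $(j_u,j_v)$ only through their difference; grouping vertices by their residue modulo the horizontal period and their position in the Fisher gadget turns the matrix of Step (ii) into a block Toeplitz matrix whose symbol is an explicit matrix-valued function of $(z,w)$ built from $K(z,w)^{-1}$, and multiplication by $\prod_{l=0}^{N-1}\tau_{e_l}^2$ gives the claimed representation. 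The main obstacle I expect is Step (i): controlling the four-Pfaffian combination and the associated signs so that squaring produces $+\det K'/\det K$ cleanly, rather than a cancelling combination; this is also the analytic input needed later to apply Widom's theorem in Section~4.
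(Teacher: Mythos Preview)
Your Steps (ii) and (iii) match the paper's argument closely: the paper also compresses to a $2N\times 2N$ matrix via the low-rank structure of the perturbation and then reads off the block Toeplitz structure from the Fourier representation of $K^{-1}$. Two small corrections there: the reduction is done at the Pfaffian level (the matrices are skew-symmetric), via $\mathrm{Pf}(K')\,\mathrm{Pf}(K^{-1})=\pm\,\mathrm{Pf}\,y\cdot\mathrm{Pf}(y^{-1}+Q)$ with $y$ the $2N\times 2N$ block of $\delta=K'-K$ and $Q$ the matching block of $K^{-1}$; squaring at the end is what produces a determinant. Also, $\mathrm{Pf}\,y=\pm\prod_l(1/\tau_{e_l}-\tau_{e_l})$, so combined with the prefactor $\prod_l\tau_{e_l}$ the multiplicative function you end up with is $\prod_l(1-\tau_{e_l}^2)^2$, not $\prod_l\tau_{e_l}^2$. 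And the symbol $\psi$ is a function of a single variable $\zeta$ on the unit circle, obtained by integrating the second Fourier variable $\phi_2$ out; the block Toeplitz structure is one-dimensional along the line.

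The real point is your Step (i), which you correctly flag as the obstacle, and which the paper organizes differently in a way that dissolves it. The paper does \emph{not} argue that the four Pfaffians $\mathrm{Pf}\,K_{mn}(\pm1,\pm1)$ agree to leading order and then try to simplify the squared ratio of sums. Instead it performs the finite-rank Pfaffian reduction \emph{separately for each} $(u,v)\in\{\pm1\}^2$, obtaining
\[
\frac{\mathrm{Pf}\,K'_{mn}(u,v)}{\mathrm{Pf}\,K_{mn}(u,v)}=(-1)^{k_2}\,\mathrm{Pf}\,y\cdot\mathrm{Pf}\bigl(y^{-1}+Q_{mn}(u,v)\bigr),
\]
with the sign $k_2$ depending on $m,n,N$ but not on $(u,v)$. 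The key observation is that the entries of $Q_{mn}(u,v)$ are Riemann sums for the same integral \eqref{inversekasteleyn} regardless of $(u,v)$, so the finite Pfaffian on the right has a common limit $T_N$ independent of $(u,v)$. Writing $\langle\sigma_{00}\sigma_{0N}\rangle_{m,n}=\prod_l\tau_{e_l}\sum_{u,v}\frac{\mathrm{Pf}\,K'(u,v)}{\mathrm{Pf}\,K(u,v)}\cdot\frac{(-1)^{s(u,v)}\mathrm{Pf}\,K(u,v)}{Z}$, the first factor in the sum tends to a common value and the second sums to $1$; passing to a subsequence with $k_2$ even and squaring yields $\prod_l(1-\tau_{e_l}^2)^2\det T_N$. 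So the four-Pfaffian issue is handled by factoring out the ratio \emph{before} summing, not by an asymptotic comparison of the Pfaffians themselves.
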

\begin{proof}
We consider $\frac{\mathrm{Pf} K'_{mn}(u,v)}{\mathrm{Pf} K_{mn}(u,v)}$ where
$u,v\in\{1,-1\}$. Assume $\delta=K'_{mn}(u,v)-K_{mn}(u,v)$, then
$\delta$ is given by
\begin{eqnarray*}
\delta(0,l;0,l+1)=-\delta^T(0,l+1;0,l)=\begin{array}{cc}&\begin{array}{cccc}R&L&U&D\end{array}\\
\begin{array}{c}R\\L\\U\\D\end{array}&\left(\begin{array}{cccc}0&\frac{1}{\tau_{e_l}}-\tau_{e_l}&0&0\\0&0&0&0\\0&0&0&0\\0&0&0&0\end{array}\right)\\\end{array}.
\end{eqnarray*}
If $0\leq l\leq N-1$ and zero otherwise. Here we are only interested in the entries labeled by $R,L,U,D$, because they are the only vertices adjacent to the edges with changed edge weights (from $\tau_{e_l}$ to $\frac{1}{\tau_{e_{l}}}$) when we compute the spin-spin correlation. Thus if we define $y$ as
the $2N\times 2N$ sub-matrix of $\delta$ in the subspace where
$\delta$ does not vanish identically, and if we define $Q$ as the
$2N\times 2N$ sub-matrix of $K_{mn}^{-1}$ in the same subspace, we find
\begin{eqnarray*}
\frac{\mathrm{Pf} K'_{mn}(u,v)}{\mathrm{Pf}
K_{mn}(u,v)}=(-1)^{k_1}\mathrm{Pf}(\delta+K_{mn}(u,v))\mathrm{Pf}(K_{mn}^{-1}(u,v))=(-1)^{k_2}\mathrm{Pf}y
\mathrm{Pf}(y^{-1}+Q(u,v)),
\end{eqnarray*}
where $k_1,k_2$ are constants depending only on $m,n,N$, and
independent of $u,v$, and $y$ is given by
\begin{eqnarray*}
y=\begin{array}{cc}&\begin{array}{cccccccc}0,0&0,1&...&0,N-1&0,1&0,2&...&0,N\\R&R&&R&L&L&&L
\end{array}\\ \begin{array}{cc}0,0&R\\0,1&R\\ \vdots&\\0,N-1&R\\0,1&L\\0,2&L\\
\vdots&\\0,N&L
\end{array}&\left(\begin{array}{cccc|cccc}&&&&\frac{1}{\tau_{e_1}}-\tau_{e_1}&&&\\&&&&&\frac{1}{\tau_{e_2}}-\tau_{e_2}&&\\ \multicolumn{4}{c|}{\raisebox{2ex}[0pt]{\Huge0}}&&\ddots&\\&&&&&&&\frac{1}{\tau_{e_{N}}}-\tau_{e_{N}}\\ \hline \tau_{e_1}-\frac{1}{\tau_{e_1}}&&&&&&&\\&{\tau_{e_2}}-\frac{1}{\tau_{e_2}}&&&&&&\\&&\dots&&\multicolumn{4}{|c}{\raisebox{2ex}[0pt]{\Huge0}}\\&&&\tau_{e_{N}}-\frac{1}{\tau_{e_{N}}}&&&&\end{array}\right),
\end{array}
\end{eqnarray*}
so that
\begin{eqnarray*}
\mathrm{Pf}y=\prod_{l=0}^{N-1}(\frac{1}{\tau_{e_l}}-\tau_{e_l})(-1)^{\frac{N(N-1)}{2}}.
\end{eqnarray*}
Hence we have
\begin{eqnarray*}
\langle\sigma_{00}\sigma_{0N}\rangle_{m,n}=\prod_{l=0}^{N-1}\tau_{e_l}\sum_{u,v\in\{-1,1\}}\frac{\mathrm{Pf}K'_{mn}(u,v)}{\mathrm{Pf}K_{mn}(u,v)}\frac{(-1)^{s(u,v)}\mathrm{Pf}K_{m,n}(u,v)}{Z},
\end{eqnarray*}
where $s(u,v)$ is $+1$ or $-1$ depending on $u,v$. We choose a
subsequence $m',n'\rightarrow\infty$ such that $k_2$ are always
even, then
\begin{eqnarray*}
\lim_{m',n'\rightarrow\infty}\langle\sigma_{00}\sigma_{0N}\rangle^2_{m,n}=\prod_{l=0}^{N-1}(1-\tau_{e_l}^2)^2
\det T_N.
\end{eqnarray*}
If we use $K$ to denote the weighted adjacency matrix of the
infinite graph, we can express $T_N$ as follows:
\begin{eqnarray*}\tiny
{T_N=\left(\begin{array}{cccccc}0&\cdots&K^{-1}_{0,0,R;0,N-1,R}&K^{-1}_{0,0,R;0,1,L}-\frac{\tau_{e_1}}{1-\tau_{e_1}^2}&\cdots&K^{-1}_{0,0,R;0,N,L}
\\K^{-1}_{1,1,R;0,0,R}&\cdots &K^{-1}_{0,1,R;0,N-1,R}&K^{-1}_{0,1,R;0,1,L}&\cdots&K^{-1}_{0,1,R;0,N,L}
\\ \vdots&\vdots&\vdots&\vdots&\vdots&\vdots
\\K^{-1}_{0,N-1,R;0,0,R}&\cdots&0&K^{-1}_{0,N-1,R;0,1,L}&\cdots& K^{-1}_{0,N-1,R;0,N,L}-\frac{\tau_{e_N}}{1-\tau_{e_N}^2}
\\K^{-1}_{0,1,L;0,0,R}+\frac{\tau_{e_1}}{1-\tau_{e_1}^2}&\cdots&K^{-1}_{0,1,L;0,N-1,R}&0&\cdots&K^{-1}_{0,1,L;0,N,L}
\\K^{-1}_{0,2,L;0,0,R}&\cdots&K^{-1}_{0,2,L;0,N-1,R}&K^{-1}_{0,2,L;0,1,L}&\cdots&K^{-1}_{0,2,L;0,N,L}\\
\vdots&\vdots&\vdots&\vdots&\vdots&\vdots\\
K^{-1}_{0,N,L;0,0,R}&\cdots
&K^{-1}_{0,N,L;0,N-1,R}+\frac{\tau_{e_N}}{1-\tau_{e_N}^2}&K^{-1}_{0,N,L;0,1,R}&\cdots&0
\end{array}\right)}.
\end{eqnarray*}
It is well-known that $T_N$ is independent of $u,v$, since the
entries of $K^{-1}$ can be expressed as follows
\begin{eqnarray}
K^{-1}(j,k,p;j',k',q)=\frac{1}{4\pi^2}\int_{0}^{2\pi}\int_{0}^{2\pi}e^{i\phi_1(k-k')}e^{i\phi_2(j-j')}K_{11}^{-1}(e^{i\phi_1},e^{i\phi_2})_{p,q}d\phi_1d\phi_2\label{inversekasteleyn},
\end{eqnarray}
where $K_{11}(z,w)$ is the weighted adjacency matrix of one
fundamental domain embedded into a torus. And
\begin{eqnarray}
\lim_{N\rightarrow\infty}\langle\sigma_{00}\sigma_{0N}\rangle^2=\lim_{N\rightarrow\infty}\det
T_N\prod_{0\leq l\leq N-1}(1-\tau_{e_l}^2)^2\label{spincorrelation}.
\end{eqnarray}
Let
$l_0$ denote the length of one period, assume $N$ is a multiple of
$l_0$.  We rearrange the rows and columns of $T_{N}$, in such a way that the determinant of $T_N$ will not change after this manipulation, neither will the limit as $N\rightarrow\infty$. Then we group the matrix after row and column rearrangements, denoted still by $T_N$, into $\frac{N}{l_0}\times \frac{N}{l_0}$ blocks, and each block is a $2l_0\times 2l_0$ submatrix, such that the $(j,j')$ block of $T_N$ has a row range from $(0,(j-1)l_0,R)$ to $(0,jl_0-1,R)$, from $(0,(j-1)l_0+1.L)$ to $(0,jl_0,L)$, and a column range from $(0,(j'-1)l_0,R)$ to $(0,j'l_0-1,R)$, from $(0,(j'-1)l_0+1.L)$ to $(0,j'l_0,L)$.

If we define $\psi$, a matrix-valued function on the unit
circle, as follows
\begin{eqnarray*}
\psi(\zeta)=\frac{1}{2\pi}\int_0^{2\pi}\left(\begin{array}{ccc}K^{-1}(\zeta,e^{i\phi_2})_{0,0,R;0,0,R}&\cdots&K^{-1}(\zeta,e^{i\phi_2})_{0,0,R;0,l_0-1,R}\\ \cdots&\cdots&\cdots \\K^{-1}(\zeta,e^{i\phi_2})_{0,l_0-1,R;0,0,R}&\cdots&K^{-1}(\zeta,e^{i\phi_2})_{0,l_0-1,R;0,l_0-1,R}\\K^{-1}(\zeta,e^{i\phi_2})_{0,1,L;0,0,R}+\frac{\tau_{e_1}}{1-\tau_{e_1}^2}&\cdots&K^{-1}(\zeta,e^{i\phi_2})_{0,1,L;0,l_0-1,R}\\ \cdots&\cdots&\cdots \\K^{-1}(\zeta,e^{i\phi_2})_{0,l_0,L;0,0,R}&\cdots&K^{-1}(\zeta,e^{i\phi_2})_{0,l_0,L;0,l_0-1,R}+\frac{\tau_{e_{l_0}}}{1-\tau_{e_{l_0}}^2}\end{array}\right.\\
\left.\begin{array}{ccc}K^{-1}(\zeta,e^{i\phi_2})_{0,0,R;0,1,L}-\frac{\tau_{e_{1}}}{1-\tau_{e_{1}}^2}&\cdots&K^{-1}(\zeta,e^{i\phi_2})_{0,0,R;0,l_0,L}\\
\cdots&\cdots&\cdots
\\K^{-1}(\zeta,e^{i\phi_2})_{0,l_0-1,R;0,1,L}&\cdots&K^{-1}(\zeta,e^{i\phi_2})_{0,l_0-1,R;0,l_0,L}-\frac{\tau_{e_{l_0}}}{1-\tau_{e_{l_0}}^2}\\K^{-1}(\zeta,e^{i\phi_2})_{0,1,L;0,1,L}&\cdots&K^{-1}(\zeta,e^{i\phi_2})_{0,1,L;0,l_0,L}\\
\cdots&\cdots&\cdots
\\K^{-1}(\zeta,e^{i\phi_2})_{0,l_0,L;0,1,L}&\cdots&K^{-1}(\zeta,e^{i\phi_2})_{0,l_0,L;0,l_0,L}\end{array}\right)
d\phi_2,
\end{eqnarray*}
then the $(j,j')$ block of $\psi$ is the $(j-j')$ Fourier coefficient of $\psi$, given (\ref{inversekasteleyn}). Let $T$ denote the infinite matrix as $N\rightarrow\infty$.  Namely,
\begin{eqnarray*}
(T_{N})_{j,j'}=\frac{1}{2\pi}\int_{|\zeta|=1}\zeta^{j-j'}\psi(\zeta)\frac{d\zeta}{i\zeta}.
\end{eqnarray*}
where the integral is evaluated entry by entry. Let $T$ denote the infinite matrix as $N\rightarrow\infty$, then
 $T$ is a
block Toeplitz matrix with symbol $\psi$.
\end{proof}

\begin{lemma}(\cite{li}) Let $\mathbb{T}^2=\{z,w||z|=1,|w|=1\}$ be the unit
torus, either $\det K(z,w)=0$ has no zeros on $\mathbb{T}^2$, or it
has a single real node on $\mathbb{T}^2$. That is, $\det K(z,w)=0$
intersects $\mathbb{T}^2$ at one of the points $(\pm 1,\pm 1)$, and
both $\frac{\partial\det K(z,w)}{\partial z}$ and
$\frac{\partial\det K(z,w)}{\partial w}$ vanish at that same point.
\end{lemma}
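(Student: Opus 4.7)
The plan is to prove the statement in four logical steps: real-valuedness of $P(z,w):=\det K(z,w)$ on $\mathbb{T}^2$, non-negativity of $P$ on $\mathbb{T}^2$, the observation that any zero must then be a critical point and hence a node, and finally the location of the unique possible node at $(\pm1,\pm1)$.

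First I would show that $P$ takes real non-negative values on $\mathbb{T}^2$. Reality follows from the symmetry $K(\bar z,\bar w)=-K(z,w)^{T}$ on $\mathbb{T}^2$, a consequence of skew-symmetry of the Kasteleyn matrix together with the convention that the parameters $z$ and $w$ enter with opposite exponents on the two orientations of an edge crossed by $\gamma_x$ or $\gamma_y$, so that $P(\bar z,\bar w)=\overline{P(z,w)}$ and $P(z,w)$ is real on the torus. For non-negativity, the key input is Tesler's identity recalled in Section~3: on the $n\times n$ torus the partition function is a signed half-sum of four Pfaffians of $K_n^{\theta\tau}$, and the extension of that identity to arbitrary twists $(z,w)\in\mathbb{T}^2$ on the $1\times 1$ torus expresses $P(z,w)$ as the modulus-squared of a Pfaffian-type quantity. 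Alternatively, by the duality of Proposition~3.1 and the expansion~(\ref{duality}), $P(z,w)$ on $\mathbb{T}^2$ is, up to a strictly positive prefactor, a sum of twisted Ising partition functions, each of which is strictly positive for $0<\beta<\infty$, so that $P(z,w)\ge 0$ on $\mathbb{T}^2$.

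Once non-negativity is established, at any zero $(z_0,w_0)\in\mathbb{T}^2$ the point is a local minimum of $P|_{\mathbb{T}^2}$, so the restricted gradient vanishes. Because $P$ is holomorphic in $(z,w)$ and $\mathbb{T}^2$ is a maximally real submanifold of $(\mathbb{C}^{*})^{2}$, the vanishing of the restricted gradient together with the Cauchy--Riemann equations forces the full complex gradient $(\partial_zP,\partial_wP)$ to vanish at $(z_0,w_0)$; this is precisely the statement that $(z_0,w_0)$ is a node of the complex spectral curve. The conjugation symmetry $P(\bar z,\bar w)=\overline{P(z,w)}$ then groups zeros on $\mathbb{T}^2$ into conjugate pairs, whose only fixed points are $(\pm1,\pm1)$.

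The principal obstacle, and the step I expect to absorb most of the work, is the uniqueness assertion: that $P$ has at most one node on $\mathbb{T}^2$, so that in particular a non-real pair of conjugate nodes cannot occur. I would attack this by exploiting the explicit low-degree structure of $P$ dictated by the Fisher-graph fundamental domain and carrying out a Harnack-type argument in the non-bipartite setting, adapting the bipartite analysis of Kenyon--Okounkov--Sheffield. The idea is that two distinct nodes on $\mathbb{T}^2$ would force $P$ either to factor into lower-degree pieces incompatible with the Newton polygon determined by the Fisher graph, or to change sign on some arc of $\mathbb{T}^2$, contradicting the non-negativity established above; combined with a degree and parity count tailored to the ferromagnetic weights, this yields at most one node, necessarily located at one of $(\pm1,\pm1)$.
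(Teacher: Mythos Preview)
The paper does not prove this lemma; it is quoted from the companion paper \cite{li} and used as a black box. So there is no in-paper argument to compare against, and your task is really to reconstruct the proof of \cite{li}.

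Your outline is in the right spirit---non-negativity of $P$ on $\mathbb{T}^2$, hence any zero is a minimum, hence a node---and step~3 is correct as stated: since $P$ is holomorphic, $\partial_\theta P = iz\,\partial_z P$ on $|z|=1$, so vanishing of the tangential derivative forces $\partial_z P=0$. But both of your proposed routes to non-negativity have gaps. The ``Pfaffian-squared'' route fails because $K(z,w)$ is \emph{not} antisymmetric for general $(z,w)\in\mathbb{T}^2$: on a crossing edge the $(u,v)$ entry carries $z$ while the $(v,u)$ entry carries $1/z$, so $K(z,w)^{T}=-K(1/z,1/w)$, and $\det K=(\mathrm{Pf}\,K)^2$ is only available at $z,w\in\{\pm1\}$. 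The duality route is also broken: identity~(\ref{duality}) is established only at $z,w\in\{\pm1\}$, and the ``twisted Ising partition functions'' you invoke involve negating some coupling constants, so they are not manifestly positive (indeed Lemma~4.13 uses precisely that $Z^{\theta\tau}$ with $(\theta,\tau)\neq(0,0)$ can be smaller than $Z^{00}$). The actual proof of non-negativity in \cite{li} is specific to the Fisher/Ising structure and uses a Kac--Ward-type identity or an explicit sum-of-squares decomposition; you will need to supply that ingredient rather than either of the shortcuts you propose.

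On uniqueness and location: your conjugation-symmetry remark only pairs non-real zeros; it does not by itself exclude a conjugate pair, as you acknowledge. The argument you sketch (Harnack-type, degree/Newton-polygon count) is the right flavor, but ``would force $P$ to factor or to change sign'' is where all the content lies and is not yet an argument. In \cite{li} this step again relies on the specific algebraic form of $P$ for Fisher graphs; a generic KOS-style Harnack argument does not apply directly because the graph is non-bipartite.
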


To study the asymptotic behavior of $\langle\sigma_{00}\sigma_{0N}\rangle$ as
$N\rightarrow\infty$, we first introduce the following lemmas
\begin{lemma}(Widom\cite{wi}) Let $T_n(\psi)$ be a block Toeplitz matrix with symbol $\psi$, and $n\times n$ blocks in total.  If
\begin{eqnarray*}
\sum_{k=-\infty}^{\infty}\|\psi_k\|+(\sum_{k=-\infty}^{\infty}|k|\|\psi_k\|^2)^{\frac{1}{2}}<\infty\\
\det\psi(e^{i\theta})\neq 0,\qquad \frac{1}{2\pi}\Delta_{0\leq\theta\leq
2\pi}\arg\det\psi(e^{i\theta})=0,
\end{eqnarray*}
here $\|\psi_k\|$ denotes the Hilbert-Schmidt norm of the matrix $\psi_k$, and  $\frac{1}{2\pi}\Delta_{0\leq\theta\leq
2\pi}\arg\det\psi(e^{i\theta})$ is the index of the point $0$ with respect to the curve $\{\det\psi(e^{i\theta}):0\leq \theta\leq 2\pi\}$, namely,
\begin{eqnarray*}
\frac{1}{2\pi}\Delta_{0\leq\theta\leq
2\pi}\arg\det\psi(e^{i\theta})=\frac{1}{2\pi i}\int_{|\zeta|=1}\frac{1}{\det\psi(\zeta)}\frac{\partial \det\psi(\zeta)}{\partial\zeta}{d\zeta},
\end{eqnarray*}
then
\begin{eqnarray*}
\lim_{n\rightarrow\infty}\frac{\det
T_n(\psi)}{G[\psi]^{n+1}}=E[\psi],
\end{eqnarray*}
with
\begin{eqnarray*}
G[\psi]&=&\exp\{\frac{1}{2\pi}\int_{0}^{2\pi}\log\det\psi(e^{i\theta})d\theta\},\\
E[\psi]&=&\det T[\psi]T[\psi^{-1}].
\end{eqnarray*}
where the last $\det$ refers to the determinant defined for
operators on Hilbert space differing from the identity by an
operator of trace class.
\end{lemma}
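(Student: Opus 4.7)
The plan is to establish the strong Szeg\H{o}--Widom limit theorem in the block setting via a right canonical factorization. First, I would verify that the hypotheses on $\psi$ furnish a Wiener--Hopf factorization
\[
\psi(\zeta)=\psi_{-}(\zeta)\,\psi_{+}(\zeta),
\]
where $\psi_{+}$ admits an analytic, invertible extension into $|\zeta|<1$ and $\psi_{-}$ into $|\zeta|>1$, both continuous up to the boundary. The absolute summability $\sum\|\psi_{k}\|<\infty$ places the symbol in a Wiener-type matrix algebra; combined with $\det\psi\neq 0$ on the unit circle and the vanishing winding number of $\det\psi$, the Gohberg--Krein theorem produces such a factorization with factors in the same algebra. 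In particular, $G[\psi]=\det\psi_{+,0}\cdot\det\psi_{-,0}$, where the subscript $0$ denotes the constant Fourier coefficient.

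The next step exploits the triangular structure: $T_n(\psi_{-})$ is block lower-triangular and $T_n(\psi_{+})$ is block upper-triangular, with diagonal blocks $\psi_{\pm,0}$, so that
\[
\det T_n(\psi_{-})\,T_n(\psi_{+})=G[\psi]^{n+1}.
\]
Writing $T_n(\psi)=T_n(\psi_{-})\,T_n(\psi_{+})+E_n$, where $E_n$ captures the failure of $T_n$ to be multiplicative on $\psi=\psi_{-}\psi_{+}$, one obtains
\[
\frac{\det T_n(\psi)}{G[\psi]^{n+1}}=\det\bigl(I+T_n(\psi_{+})^{-1}T_n(\psi_{-})^{-1}E_n\bigr).
\]
A direct computation using the definition of $E_n$ rewrites the right-hand side as $\det(P_n A P_n)$ for a \emph{fixed} operator $A=T[\psi]\,T[\psi^{-1}]$ on $\ell^{2}$ sequences of blocks, with $P_n$ the orthogonal projection onto the first $n+1$ blocks.

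The final and most delicate step is the passage to the limit. The key identity
\[
T[\psi]\,T[\psi^{-1}]-I=-H[\psi]\,H[\widetilde{\psi^{-1}}\,],
\]
relating Toeplitz products to products of Hankel operators, converts the problem to showing that these Hankel operators are Hilbert--Schmidt. This is precisely where the weighted $\ell^{2}$ hypothesis $\sum|k|\,\|\psi_{k}\|^{2}<\infty$ enters: it forces $H[\psi]$ and, via preservation of Wiener regularity by the factorization, $H[\widetilde{\psi^{-1}}\,]$ to be Hilbert--Schmidt, so their product is trace class. Continuity of the Fredholm determinant on $I+\mathcal{S}_{1}$ then yields $\det(P_n A P_n)\to\det A=E[\psi]$. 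The main obstacle is the matrix non-commutativity: it both complicates the control of the commutator defect $E_n$ and obstructs a closed-form integral representation of $E[\psi]$, so that unlike in the scalar case the constant must be left in operator-determinant form $\det T[\psi]T[\psi^{-1}]$ rather than reduced to an integrated trace of $\log\psi$.
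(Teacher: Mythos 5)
The paper does not prove this lemma; it cites Widom \cite{wi} and takes it as given. Your proposal is therefore an independent proof attempt, and at the conceptual level it captures the standard route: Wiener--Hopf factorization in a Besov--Wiener algebra, triangularity of the factor Toeplitz matrices to identify the geometric constant $G[\psi]^{n+1}$, and the Toeplitz--Hankel identity $T[\psi]T[\psi^{-1}]=I-H[\psi]H[\tilde{\psi}^{-1}]$ to bring in Hilbert--Schmidt Hankel operators and conclude via continuity of the Fredholm determinant on $I+\mathfrak{F}_1$. This is indeed the machinery that the paper itself deploys later, in its proof of Lemma~5.1, to obtain the exponential rate of convergence.

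However, the pivotal step of your outline is stated incorrectly, and the error is not cosmetic. You claim that after the factorization
\[
\frac{\det T_n(\psi)}{G[\psi]^{n+1}}=\det\bigl(P_nAP_n\bigr),\qquad A=T[\psi]T[\psi^{-1}],
\]
with $A$ a \emph{fixed} operator independent of $n$. This identity does not hold. The correct computation (with, say, $\psi=\psi_+\psi_-$ as in the paper) gives
\[
T_n[\psi]\,T_n[\psi_-^{-1}]\,T_n[\psi_+^{-1}]=P_n\Bigl(I-H[\psi]\,H[\tilde\psi_-^{-1}]\,P_n\,T[\psi_+^{-1}]\Bigr)P_n,
\]
so the compressed operator is $n$-dependent because of the projection $P_n$ sitting \emph{inside} the bracket. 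What is true is that
\[
H[\psi]\,H[\tilde\psi_-^{-1}]\,T[\psi_+^{-1}]=H[\psi]\,H[\tilde\psi^{-1}],
\]
i.e.\ the $n\to\infty$ limit of the inner expression is $H[\psi]H[\tilde\psi^{-1}]$. The entire analytic content of the theorem is to show that replacing the inner $P_n$ by $I$ (and removing the outer compressions) costs $o(1)$ in trace norm, which is precisely what is controlled by the weighted $\ell^2$ hypothesis $\sum|k|\|\psi_k\|^2<\infty$. If the equality you asserted were literally true, the theorem would reduce to mere continuity of the Fredholm determinant and would require no estimate at all; the gap you have skipped is the theorem. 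A correct version of your step needs to establish an estimate of the type $\|P_nH[\psi]H[\tilde\psi_-^{-1}]P_nT[\psi_+^{-1}]P_n-H[\psi]H[\tilde\psi^{-1}]\|_1\to0$, which is exactly what the paper carries out quantitatively in Lemma~5.1.

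Two smaller points. With your convention $\psi=\psi_-\psi_+$ (coefficients of $\psi_-$ supported on non-positive indices, $\psi_+$ on non-negative), the matrix $T_n(\psi_-)$ is block \emph{upper} triangular and $T_n(\psi_+)$ is block \emph{lower} triangular, the opposite of what you wrote; this does not affect the determinant count but should be stated correctly. Also, the identification $G[\psi]=\det\psi_{+,0}\cdot\det\psi_{-,0}$ requires invoking that $\log\det\psi_\pm$ are analytic in the respective regions (using $\det\psi_\pm\neq0$ and single-valuedness of the branch); you implicitly use this but it deserves a sentence.
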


\begin{lemma}For ferromagnetic, periodic Ising model with nearest
neighbor pair interactions  on a square grid, if $\det K(z,w)$ has
no zeros on $\mathbb{T}^2$, then
\begin{eqnarray*}
\prod_{l=1}^{l_0}(1-\tau_{e_l}^2)^2\det\psi(e^{i\theta})=1,
\end{eqnarray*}
for any $\theta\in[0,2\pi)$.
\end{lemma}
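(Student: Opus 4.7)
The plan is to prove this as an algebraic identity that follows directly from the structure of the inverse Kasteleyn matrix $K(\zeta,w)^{-1}$ restricted to a single vertical fundamental column. The hypothesis that $\det K$ has no zeros on $\mathbb{T}^2$ ensures that $K(\zeta,w)^{-1}$ is continuous on the $2$-torus, so every entry of $\psi(\zeta)$ is a well-defined continuous function of $\zeta\in S^1$ and the claimed identity makes sense pointwise.

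First, I would rewrite the Kasteleyn matrix $K(\zeta,w)$ in block-tridiagonal form with respect to the vertical direction, using the fact that successive rows of the square-grid Fisher graph are coupled only through the vertical $e_l$-edges with weights $\tau_{e_l}=\tanh J_{e_l}$. The $w$-integral $\frac{1}{2\pi}\int_0^{2\pi}K^{-1}(\zeta,e^{i\phi_2})\,d\phi_2$ can then be identified with the Green's function of the infinite-in-$y$ Kasteleyn operator (with one period in $x$ and Fourier phase $\zeta$) restricted to the $R$- and $L$-vertices within a single fundamental column. The corrections $\pm\tau_{e_l}/(1-\tau_{e_l}^2)$ appearing in the off-diagonal entries of $\psi$ are precisely the singular contributions of this Green's function coming from the direct $e_l$-edge connection between vertices $(0,l,R)$ and $(0,l+1,L)$; after subtracting them one is left with the regular (non-backscattering) part of the Green's function.

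Second, I would identify $\psi(\zeta)^{-1}$ with a reduced $(2l_0)\times(2l_0)$ Kasteleyn-type matrix for a single fundamental column, obtained by Schur-complementing out the internal triangle vertices of the Fisher gadgets and incorporating the $e_l$-edge weights carefully. A direct computation of $\det\psi(\zeta)^{-1}$ on this reduced matrix should give $\prod_{l=1}^{l_0}(1-\tau_{e_l}^2)^2$. The resulting expression is manifestly $\zeta$-independent because the $\zeta$-dependent contributions to the reduced matrix sit in off-diagonal blocks that either pair up in the Pfaffian expansion (recall $\psi$ inherits a skew-like structure from the skew-symmetry of $K^{-1}$) or drop out of the block determinant.

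The main obstacle will be carrying out the Schur-complement reduction cleanly in the second step: tracking Kasteleyn signs through the contraction of the internal gadget vertices, and verifying that the $\zeta$-dependence truly cancels from $\det\psi(\zeta)^{-1}$ rather than only from its absolute value. A secondary difficulty is matching the $(R,L)$-indexing internal to the Fisher gadgets with the vertical-edge weights $\tau_{e_l}$ living on the original square grid so that the product $\prod_{l=1}^{l_0}(1-\tau_{e_l}^2)^2$ emerges with the correct overall sign; checking this on the uniform case $l_0=1$ would be a useful sanity check before attempting the general periodic case.
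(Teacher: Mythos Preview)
Your proposal takes a genuinely different route from the paper, and the crucial step is not justified. The paper does not attempt any direct linear-algebraic reduction of $\psi$. Instead it interprets $\prod_l(1-\tau_{e_l}^2)^2\det\psi(e^{i\theta})$ as the limit $\lim_{m\to\infty}(\prod_l\tau_{e_l}^2)\det K'_{m,1}(1,e^{i\theta})/\det K_{m,1}(1,e^{i\theta})$, where $K'$ is the Kasteleyn matrix with the row of weights $\tau_{e_l}$ replaced by $1/\tau_{e_l}$. It then establishes a combinatorial bijection between loop configurations contributing to $\prod_l\tau_{e_l}^2\det K'_{m,1}(\pm 1,w)$ and those contributing to $\det K_{m,1}(\mp 1,w)$, by rerouting each non-doubled loop through a pair of nearest triangles on the changed row; tracking the sign flip at each $z$-edge is what forces the swap $z\leftrightarrow -z$. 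This yields the square of the claimed quantity equal to $1$, and the sign is fixed at $\theta=0$ by skew-symmetry and then propagated by continuity.

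The gap in your plan is exactly at the point you flag as the main obstacle, and it is not a bookkeeping issue. After Schur-complementing out the internal gadget vertices, the reduced $(2l_0)\times(2l_0)$ matrix for a single fundamental column genuinely retains $\zeta$-dependence through the horizontal Fisher edges that feed into the $R,L$ vertices; there is no structural reason (skew-symmetry alone, or block off-diagonality) that forces this $\zeta$-dependence to cancel from the determinant. In the paper's proof the $\zeta$-independence is not a cancellation at all: it comes from the loop-rerouting bijection, which is an honest weight-preserving correspondence at the level of configurations and has no analogue in a Schur-complement computation. Unless you can produce a concrete mechanism for the cancellation (for instance, by exhibiting the reduced matrix explicitly in the $l_0=1$ case and checking that $\det\psi^{-1}$ is literally constant in $\zeta$, which the paper's argument suggests is a nontrivial fact rather than a formal one), your second step remains an assertion rather than an argument.
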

\begin{proof} Let us consider an $m\times n$ torus, where $m$ and $n$
are the numbers of periods along each direction. The previous
technique to compute the spin-spin correlations implies that
\[\langle\sigma_{00}\sigma_{0n}\rangle=\prod_{1\leq l\leq l_0}\tau_{e_l}^{n}\frac{Z'_{mn}}{Z_{mn}},\]
where $Z'_{mn}$ is the partition function of dimer configurations on
an $m\times n$ Fisher graph with weights $\tau_{e_l}(1\leq l\leq n)$
changing to $\frac{1}{\tau_{e_l}}$. Since the graph is embedded into
an $m\times n$ torus, $\sigma_{00}$ and $\sigma_{0n}$ are actually
the same spin, therefore we have
$\langle\sigma_{00}\sigma_{0n}\rangle=1$, that is
\[\prod_{1\leq l\leq l_0}\tau_{e_l}^{n}Z'_{mn}=Z_{mn}.\]
Let $K'_{mn}(K_{mn})$ be the corresponding Kasteleyn matrix with
respect to $Z'_{mn}(Z_{mn})$. We are going to construct a correspondence between monomials in $\tau_{e_l}^2\det K'_{m,1}(1,w)$ and $\det K_{m,1}(-1,w)$, where $w$ is any number on the unit circle. Each term in the expansion of the determinant corresponds to an oriented loop configuration in the $m\times 1$ torus, where each vertex is adjacent to exactly two edges in the configuration. The correspondence is constructed by changing configurations on the first row of the graph, while keeping the configurations on all the other rows unchanged. Consider the first row. For each pair of nearest components, which are parts of non-doubled edge loops crossing the first row, there exists a pair of nearest triangles, one on each component. We change their paths through the pair of nearest triangles. This way, in the new configuration, all the $e_l$ occupied once originally will still be occupied once, with weight $\frac{1}{\tau_{e_l}}$ changed to $\tau_{e_l}$. Any $e_l$ unoccupied will be a doubled edge, with weight $1$ changed to $\tau_{e_l}^2$, any $e_l$ occupied twice will be unoccupied, with weight $\frac{1}{\tau_{e_l}^2}$ changed to 1. Hence the corresponding terms in $\prod_{1\leq l\leq l_0}\tau_{e_l}^2\det K_{m,1}'(1,w)$ and $\det K_{m,1}(-1,w)$ have the same absolute value. They are equal if the signs are also the same. For each component, which is part of a non-doubled-edge loop crossing the first row of the torus, if we change the path through a pair of triangles (corresponding to two nearest components on both sides), the sign will change by a factor of -1. This sign change will cancel if we change the sign of the z-edge passed by the loop simultaneously. Hence we have
\begin{eqnarray*}
\prod_{1\leq l\leq l_0}\tau_{e_l}^2\det K'_{m,1}(1,w)=\det K_{m,1}(-1,w).
\end{eqnarray*}
An example of such a path change is illustrated in Figure 4. Similarly we have
\begin{eqnarray*}
\prod_{1\leq l\leq l_0}\tau_{e_l}^2\det K_{m,1}'(-1,w)=\det K_{m,1}(1,w).
\end{eqnarray*}

\begin{figure}[htbp]
\centering
\scalebox{0.8}[0.8]{\includegraphics*{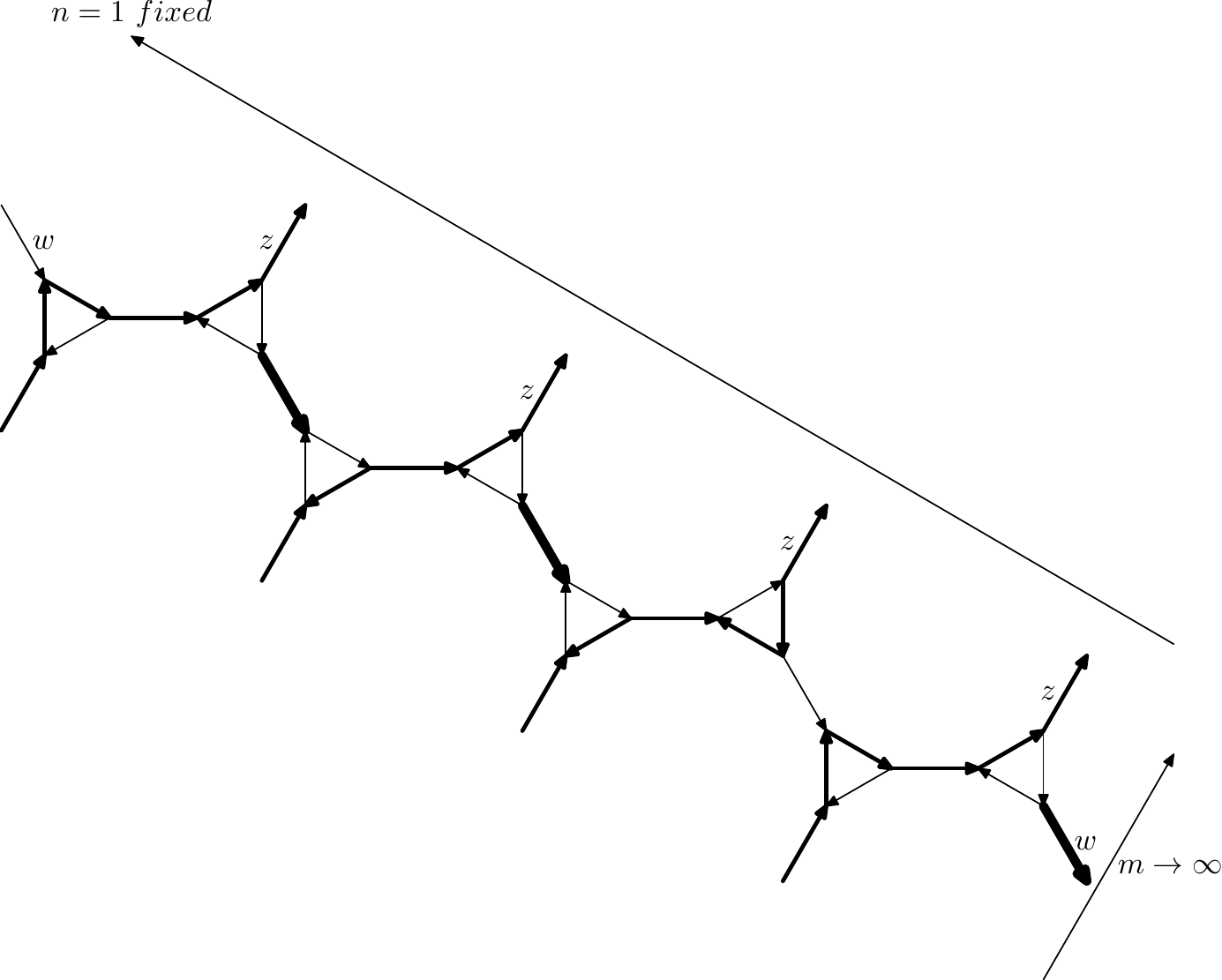}}\qquad\scalebox{0.8}[0.8]{\includegraphics*{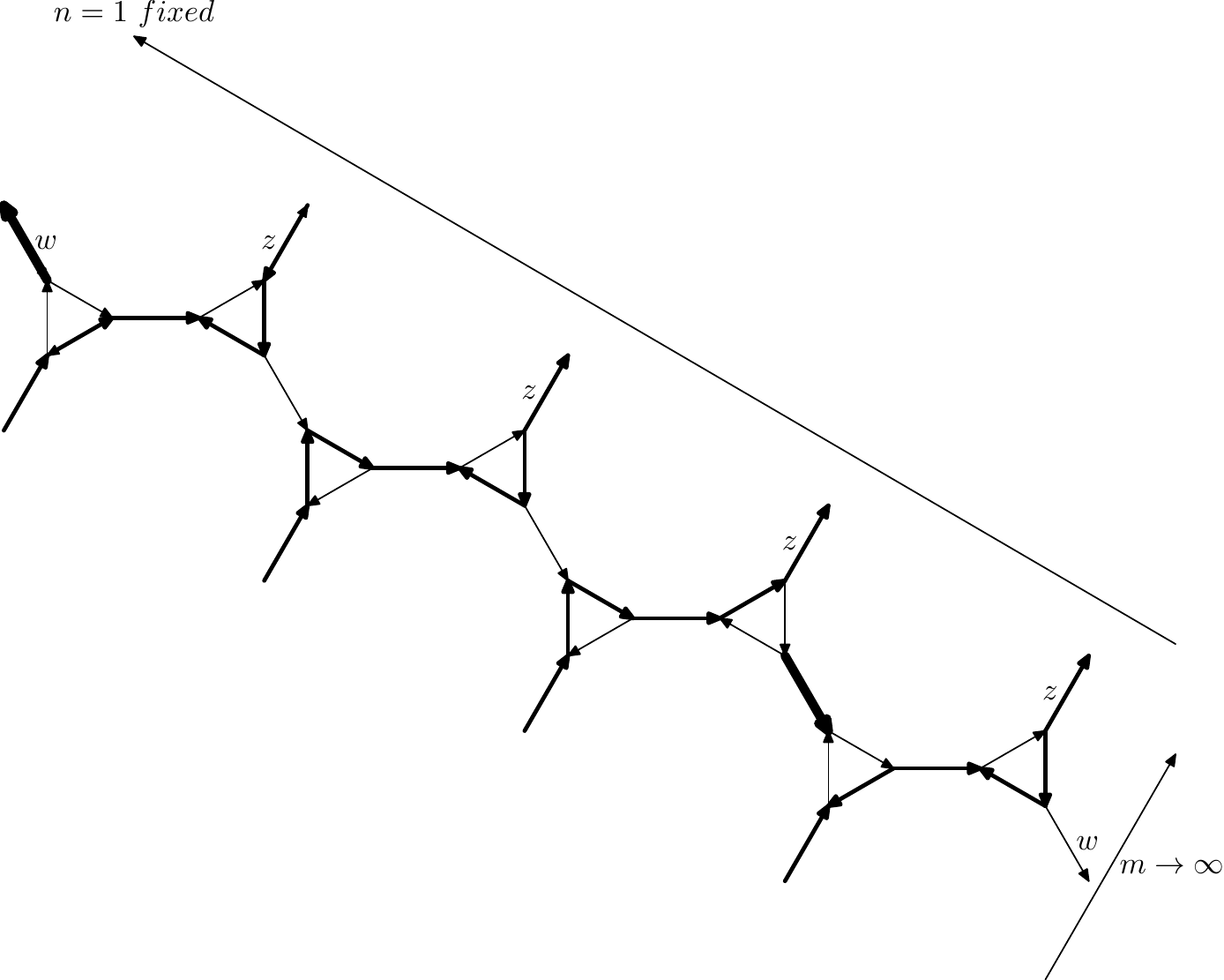}}
\caption{Change of configurations}
\end{figure}

Therefore, we have
\begin{eqnarray*}
[\prod_{l=1}^{l_0}(1-\tau_{e_l}^2)^2]\det
\psi(e^{i\theta})&=&\lim_{m\rightarrow\infty}\frac{(\prod_{l=1}^{l_0}\tau_{e_l}^2)\det
K'_{m,1}(1,e^{i\theta})}{\det K_{m,1}(1,e^{i\theta})}\\
&=&\lim_{m\rightarrow\infty}\frac{\det K_{m1}(-1,e^{i\theta})}{\det
K_{m1}(1,e^{i\theta})}.
\end{eqnarray*}

Let $v_1,\cdots,v_{2nl_0}$ be vertices which are endpoints of those edges changing weights from $\tau_{e_\ell}$ to $\frac{1}{\tau_{e_\ell}}$. Let us consider
\begin{eqnarray*}
\lim_{m\rightarrow\infty}\frac{\det K'_{m,1}(1,e^{i\theta})}{\det K_{m,1}(1,e^{i\theta})}&=&\lim_{m\rightarrow\infty}\sum_{u_1,\cdots,u_{2nl_0},u_j\sim v_j}\prod_{j=1}^{2n}K'_{m1}(1,e^{i\theta})_{v_ju_j}\frac{\det K'_{m1}(1,e^{i\theta})_{E^c\{u_1,\cdots,u_{2nl_0}\}}}{\det K_{m1}(1,e^{i\theta})}\\
&=&\lim_{m\rightarrow\infty}\sum_{u_1,\cdots,u_{2nl_0},u_j\sim v_j}\prod_{j=1}^{2n}K'_{m1}(1,e^{i\theta})_{v_ju_j}\det K^{-1}(1,e^{i\theta})_{E^c\{u_1,\cdots,u_{2nl_0}\}},
\end{eqnarray*}
where $K'_{m1}(1,e^{i\theta})_{E^c\{u_1,\cdots,u_{2nl_0}\}}$ is the submatrix of $K_{m1}'(1,e^{i\theta})$ deleting rows $v_1,\cdots,v_{2nl_0}$ and columns $u_1,\cdots,u_{2nl_0}$. This submatrix is also a submatrix of $K_{m1}(1,e^{i\theta})$, because all the edges with changed weights have been deleted. $K'_{m1}(1,e^{i\theta})_{v_j,u_j}$ are edge weights, which are independent of the first variable -  weights of edges crossed by $\gamma_x$. As $m\rightarrow\infty$, the limit of entries of $K_{m1}^{-1}(1,e^{i\theta})$ is independent of the first variable too, since $\lim_{m\rightarrow\infty}K^{-1}_{m1}(z,e^{i\theta})$ can always be expressed as the same complex integral no matter where $z$ is located on the unit circle. As a result,
\begin{eqnarray*}
\lim_{m\rightarrow\infty}\frac{\det K'_{m1}(1,e^{i\theta})}{\det K_{m1}(1,e^{i\theta})}=\lim_{m\rightarrow\infty}\frac{\det K'_{m1}(-1,e^{i\theta})}{\det K_{m1}(-1,e^{i\theta})}.
\end{eqnarray*}
Hence we have
\begin{eqnarray*}
\lim_{m\rightarrow\infty}\frac{\prod_{\ell=1}^{\ell_0}\tau_{e_\ell}^2\det K'_{m1}(1,e^{i\theta})}{\det K_{m1}(1,e^{i\theta})}&=&\lim_{m\rightarrow\infty}\frac{\det K_{m1}(-1,e^{i\theta})}{\det K_{m1}(1,e^{i\theta})}\\
&=&\lim_{m\rightarrow\infty}\frac{\det K_{m1}(1,e^{i\theta})}{\det K_{m1}(-1,e^{i\theta})}\\
&=&\lim_{m\rightarrow\infty}\frac{\prod_{\ell=1}^{\ell_0}\tau_{e_\ell}^2\det K'_{m1}(-1,e^{i\theta})}{\det K_{m1}(-1,e^{i\theta})}.
\end{eqnarray*}
Therefore
\begin{eqnarray}
[\prod_{\ell=1}^{\ell_0}(1-\tau_{e_\ell}^2)^2\det\psi(e^{i\theta})]^2=1\label{ga}.
\end{eqnarray}
When $\theta=0$, the quantity is the quotient of two skew-symmetric matrix, hence it is positive and equal to 1. The lemma follows from the continuity of $[\prod_{\ell=1}^{\ell_0}(1-\tau_{e_\ell}^2)^2\det\psi(e^{i\theta})]$.

\end{proof}

\begin{lemma}(Gohberg,Krein\cite{gk}) Let $\mathfrak{F}_1$ denote the set of
all completely continuous operator $A$ for which
\begin{eqnarray*}
\sum_{j=1}^{\infty}s_j(A)<\infty,
\end{eqnarray*}
where $s_j$ are singular values of $A$. Let $A(\mu)$ be an
operator-function with values in $\mathfrak{F}_1$ and holomorphic in
some region. Then the determinant $\det(I-A(\mu))$ is holomorphic in
the same region.
\end{lemma}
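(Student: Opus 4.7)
The plan is to factor the map $\mu \mapsto \det(I - A(\mu))$ as a composition $\mathcal{D} \circ A$, where $\mathcal{D}: \mathfrak{F}_1 \to \mathbb{C}$ with $\mathcal{D}(B) = \det(I - B)$ is an entire (Banach-space) holomorphic function on the trace-class space, and then to use the fact that the composition of a holomorphic Banach-space-valued map with a complex-valued holomorphic map is holomorphic.

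The starting point is the Plemelj--Smithies power-series representation of the Fredholm determinant,
\begin{equation*}
\det(I - B) = \sum_{n=0}^{\infty} (-1)^n \mathrm{tr}(\Lambda^n B), \qquad B \in \mathfrak{F}_1,
\end{equation*}
where $\Lambda^n B$ denotes the $n$-th exterior power of $B$. Using that the singular values of $\Lambda^n B$ are the $n$-fold products $s_{i_1}(B)\cdots s_{i_n}(B)$ with $i_1 < \cdots < i_n$ (the Horn--Weyl identity), one obtains
\begin{equation*}
|\mathrm{tr}(\Lambda^n B)| \;\leq\; \|\Lambda^n B\|_1 \;=\; \sum_{i_1 < \cdots < i_n} s_{i_1}(B)\cdots s_{i_n}(B) \;\leq\; \frac{\|B\|_1^{\,n}}{n!}.
\end{equation*}
Hence the series converges absolutely for every $B \in \mathfrak{F}_1$ and defines $\mathcal{D}$ as a (Lipschitz) continuous map on the trace class.

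Next, I would observe that each term $B \mapsto \mathrm{tr}(\Lambda^n B)$ is a bounded homogeneous polynomial of degree $n$ on $(\mathfrak{F}_1, \|\cdot\|_1)$, hence holomorphic in the Fr\'echet sense. Because $A(\mu)$ is holomorphic with values in $\mathfrak{F}_1$, each composition $\mu \mapsto \mathrm{tr}(\Lambda^n A(\mu))$ is a holomorphic scalar function of $\mu$. On any compact subset $K$ of the region, holomorphy of $A$ gives norm-continuity of $\mu \mapsto A(\mu)$, so $M := \sup_{\mu \in K} \|A(\mu)\|_1 < \infty$; consequently the $n$-th term of the series is uniformly dominated on $K$ by $M^n/n!$, which is summable. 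Thus the series converges uniformly on every compact subset of the region, and by the Weierstrass theorem on uniform limits of holomorphic functions, $\mu \mapsto \det(I - A(\mu))$ is holomorphic on the region.

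The only non-formal ingredient is the Horn--Weyl singular-value identity for exterior powers, together with the resulting bound $\|\Lambda^n B\|_1 \leq \|B\|_1^{\,n}/n!$; both are classical and in fact proved in \cite{gk}. Once these are in hand, the remainder is a direct application of Weierstrass's theorem and presents no genuine obstacle.
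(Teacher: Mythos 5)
The paper cites this lemma directly from Gohberg and Krein's monograph and supplies no proof of its own, so there is nothing in the text to compare your argument against; the only question is whether your blind proof is sound, and it is. The plan --- expand $\det(I-B)$ in the Plemelj--Smithies series $\sum_{n\geq 0}(-1)^{n}\mathrm{tr}(\Lambda^{n}B)$, bound the $n$-th term by $\|B\|_{1}^{n}/n!$ via the Horn--Weyl description of the singular values of $\Lambda^{n}B$, check that each term is holomorphic in $\mu$, and conclude by uniform convergence on compacts together with Weierstrass --- is a standard and correct route to the analyticity of the Fredholm determinant. The one step you assert without justification is that $B\mapsto\mathrm{tr}(\Lambda^{n}B)$ is holomorphic on $(\mathfrak{F}_{1},\|\cdot\|_{1})$; the quickest way to see it is the Plemelj--Smithies (Newton-identity) formula expressing $\mathrm{tr}(\Lambda^{n}B)$ as a universal polynomial in $\mathrm{tr}(B),\dots,\mathrm{tr}(B^{n})$, each of which is a bounded $k$-homogeneous polynomial on $\mathfrak{F}_{1}$ since $|\mathrm{tr}(B^{k})|\leq\|B\|_{1}^{k}$; alternatively you can bypass the abstract polynomial language altogether and argue directly that $\mu\mapsto\mathrm{tr}(A(\mu)^{k})$ is holomorphic for each fixed $k$. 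For reference, the route one usually finds in Gohberg--Krein (and in Simon's \emph{Trace Ideals}) is slightly different: one first establishes the Lipschitz estimate $|\det(I-A)-\det(I-B)|\leq\|A-B\|_{1}\exp(1+\|A\|_{1}+\|B\|_{1})$ for continuity, then verifies holomorphy via the logarithmic derivative $\partial_{\mu}\log\det(I-A(\mu))=-\mathrm{tr}\bigl((I-A(\mu))^{-1}A'(\mu)\bigr)$ on the open set where $I-A(\mu)$ is invertible, extending across the zero set by continuity and Riemann's removable singularity theorem. Your series argument neatly sidesteps having to treat the zero set of the determinant separately, at the modest cost of invoking the Banach-space notion of polynomial holomorphy.
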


\begin{lemma} If we allow complex edge weights, then the spin
correlation
\[\lim_{N\rightarrow\infty}\langle\sigma_{00}\sigma_{0N}\rangle^2_p\] under toroidal boundary conditions is
analytic with respect to edge weights except for the edge weights
$\tanh J_{i,j}$ such that the spectral curve has a real node on
$\mathbb{T}^2$.
\end{lemma}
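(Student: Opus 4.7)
The plan is to combine Theorem 4.2 with Widom's theorem (Lemma 4.4) to express the squared two-point function as an operator determinant, and then use Gohberg--Krein (Lemma 4.6) to deduce holomorphy in the edge weights. First I would invoke Theorem 4.2, which gives
$$\lim_{N\to\infty}\langle\sigma_{00}\sigma_{0N}\rangle^2 = \lim_{N\to\infty}\Bigl[\det T_N \cdot \prod_{l=0}^{N-1}(1-\tau_{e_l}^2)^2\Bigr],$$
where $T_N$ is a block Toeplitz matrix of size $N/l_0$ whose symbol $\psi(\zeta)$ is the one-dimensional integral of entries of $K_{11}^{-1}(\zeta,e^{i\phi_2})$. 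Provided the spectral curve $\det K(z,w)=0$ does not meet $\mathbb{T}^2$, the matrix function $(z,w)\mapsto K_{11}^{-1}(z,w)$ is well defined and jointly analytic on an open neighborhood of $\mathbb{T}^2$ and depends holomorphically on the edge weights. Hence $\psi(\zeta)$ extends holomorphically to an annulus around $|\zeta|=1$, its Fourier coefficients decay geometrically, and the summability hypotheses in Lemma 4.4 are satisfied.

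Next I would verify the remaining Widom hypotheses. Lemma 4.5 is an algebraic identity between Kasteleyn determinants on tori and extends by analytic continuation from the real ferromagnetic regime to the open set where $\det K(z,w)$ is nonzero on $\mathbb{T}^2$; it yields $\det\psi(\zeta)\cdot\prod_{l=1}^{l_0}(1-\tau_{e_l}^2)^2\equiv 1$ on $|\zeta|=1$, so $\det\psi$ is a nonzero constant and its winding number around $0$ vanishes. Lemma 4.4 therefore applies and gives
$$\frac{\det T_N}{G[\psi]^{N/l_0+1}} \longrightarrow E[\psi] = \det\bigl(T[\psi]T[\psi^{-1}]\bigr),$$
with $G[\psi]=\prod_{l=1}^{l_0}(1-\tau_{e_l}^2)^{-2}$. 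Since $\prod_{l=0}^{N-1}(1-\tau_{e_l}^2)^2=G[\psi]^{-N/l_0}$, the prefactors cancel up to a single $G[\psi]$, and one reads off
$$\lim_{N\to\infty}\langle\sigma_{00}\sigma_{0N}\rangle^2 = G[\psi]\cdot E[\psi].$$

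It remains to prove that the right-hand side depends analytically on the edge weights. The factor $G[\psi]$ is manifestly a rational function of $\{\tau_{e_l}\}$. For $E[\psi]$ I would invoke Lemma 4.6: the operator $A:=T[\psi]T[\psi^{-1}]-I$ on $\ell^2$ is completely continuous with summable singular values, and the assignment sending the edge weights to $A$ is a holomorphic map into the trace-class ideal on the open set where $\det K\neq 0$ on $\mathbb{T}^2$. The main obstacle is precisely this last point: one must control $T[\psi]$ and $T[\psi^{-1}]$ simultaneously in the trace-class norm as the weights vary, which requires exploiting the joint holomorphic extension of $\psi$ and $\psi^{-1}$ past the unit circle to secure geometric bounds on the Fourier tails of $\psi^{\pm 1}$ that are locally uniform in the parameters. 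Once this is in place, Lemma 4.6 yields holomorphy of $\det(I+A)=E[\psi]$, and hence of the squared two-point function, on the complement of the locus where the spectral curve has a real node on $\mathbb{T}^2$.
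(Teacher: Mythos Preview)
Your proposal follows essentially the same route as the paper: express the limit via Widom's theorem (Lemma 4.4), verify the hypotheses using Lemma 4.5, and then appeal to Gohberg--Krein (Lemma 4.6) for holomorphy of the resulting Fredholm determinant. Two small differences are worth noting.

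First, the ``main obstacle'' you flag at the end---controlling $T[\psi]T[\psi^{-1}]-I$ in the trace-class norm as the weights vary---is dissolved by the Hankel identity the paper uses explicitly:
\[
T[\psi]T[\psi^{-1}] = I - H[\psi]H[\tilde{\psi}^{-1}],
\]
so that $E[\psi]=\det(I-H[\psi]H[\tilde{\psi}^{-1}])$. Here $H[\psi]$ and $H[\tilde{\psi}^{-1}]$ are Hilbert--Schmidt (their entries are Fourier coefficients with geometric decay once $\det K\neq 0$ on $\mathbb{T}^2$), and a product of two Hilbert--Schmidt operators is trace class. Holomorphy in the edge weights then follows immediately from holomorphy of the individual Fourier coefficients, with no need to analyze $T[\psi^{-1}]$ separately. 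Your route through $A=T[\psi]T[\psi^{-1}]-I$ is equivalent, but the Hankel factorization is what makes the trace-class estimate painless.

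Second, you verify the winding-number hypothesis by invoking Lemma 4.5 to get $\det\psi\equiv\text{const}$, whereas the paper argues that $\psi$ is even-order anti-Hermitian on the unit circle, so $\det\psi$ is real and the winding number vanishes. Your argument is arguably cleaner (and also gives $G[\psi]$ explicitly), though note that Lemma 4.5 as stated in the paper is proved for real ferromagnetic weights; your analytic-continuation remark is needed to extend it. The paper's anti-Hermitian observation avoids this but is less informative. Up to the bookkeeping of the extra $G[\psi]$ factor (which is rational in the weights and harmless for analyticity), the two arguments coincide.
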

\begin{proof}
Let $H[\psi]$ denote the semi-infinite Hankel matrices with kernel
$\psi$, namely,
\begin{eqnarray*}
H[\psi]=(\psi_{i+j+1}),\ \ \ \ \ 0\leq i,j<\infty,
\end{eqnarray*}
then
\begin{eqnarray*}
T[\psi]T[\psi^{-1}]=I-H[\psi]H[\tilde{\psi}^{-1}],
\end{eqnarray*}
where
\begin{eqnarray*}
\tilde{\psi}(\zeta)=\psi(\frac{1}{\zeta}).
\end{eqnarray*}
Since $\psi$ is an even-order, anti-Hermitian matrix on the unit circle, its
determinant will always be real, hence we have
\begin{eqnarray*}
\Delta_{0\leq \theta\leq 2\pi}\arg\det\psi(e^{i\theta})=0.
\end{eqnarray*}
For any edge weights which do not lie on the critical surface, the
condition of Theorem~4.4, and 4.5 will satisfy(Widom\cite{wi}). Then we have
\begin{eqnarray*}
\lim_{N\rightarrow\infty}\langle\sigma_{00}\sigma_{0N}\rangle^2_p=\lim_{N\rightarrow\infty}\prod_{0\leq
l\leq N-1}(1-\tau_{e_l}^2)^2\det
T_N[\psi]=\det(I-H[\psi]H[\tilde{\psi}^{-1}]).
\end{eqnarray*}
Since the entries of $H[\psi]H[\tilde{\psi}^{-1}]$ will be analytic
everywhere except on the spectral curve, so is the spin correlation
$\lim_{N\rightarrow\infty}\langle\sigma_{00}\sigma_{0N}\rangle^2_p$.
\end{proof}

\begin{lemma}The weak limit of the Boltzmann measures
$\mathcal{P}_n$ on $G_n$, the quotient graph defined on page 4, is a
translation invariant Gibbs measure $\mathcal{P}$. The probability
of occurrence of a subset of edges $e_1=u_1v_1,...,e_m=u_mv_m$ of
$G$ in a dimer configuration of $G$ chose with respect to Gibbs
measure $\mathcal{P}$, is
\[\mathcal{P}(e_1,...,e_m)=\prod_{i=1}^{m}w_{u_iv_i}|\mathrm{Pf}(K^{-1})_{e_1,...,e_m}|,\]
where $K^{-1}$ is the inverse of the infinite Kasteleyn matrix,
defined as in (\ref{inversekasteleyn}), and $(K^{-1})_{e_1,...,e_m}$
is the sub-matrix of $K^{-1}$, whose lines and columns are defined
by the vertices of the edges $e_1,...,e_m$.
\end{lemma}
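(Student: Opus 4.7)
The plan is to start from the finite-torus Pfaffian formula for edge probabilities that was stated in Section~3, convert each Pfaffian of a principal submatrix of $K_n^{\theta\tau}$ into a Pfaffian of the inverse Kasteleyn matrix restricted to the vertex set $\{u_1,v_1,\dots,u_m,v_m\}$, and then pass to the limit $n\to\infty$. The classical identity
\[
\frac{\mathrm{Pf}(A)_{\bar I}}{\mathrm{Pf}(A)} \;=\; \pm\, \mathrm{Pf}\bigl((A^{-1})_{I}\bigr)
\]
for a skew-symmetric invertible $A$ and an even-size subset $I$ of indices converts the formula of Section~3 into
\[
P_n(e_1,\dots,e_m) \;=\; \prod_{i=1}^m W(u_i v_i)\Bigl|\sum_{\theta,\tau\in\{0,1\}}\!\! s_{\theta\tau}\,\lambda_n^{\theta\tau}\,\mathrm{Pf}\bigl((K_n^{\theta\tau})^{-1}\bigr)_{u_1,v_1,\dots,u_m,v_m}\Bigr|,
\]
where $\lambda_n^{\theta\tau}=\mathrm{Pf}(K_n^{\theta\tau})/(2Z_n)$ and $s_{\theta\tau}=\pm 1$ are fixed signs.

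Next I would prove entrywise convergence $(K_n^{\theta\tau})^{-1}(x,y)\to K^{-1}(x,y)$, with $K^{-1}$ given by the double integral (\ref{inversekasteleyn}). Each $K_n^{\theta\tau}$ is block-diagonalised by the characters of $(\mathbb{Z}/n\mathbb{Z})^2$ at shifted values $(z,w)$ with $z^n=(-1)^\theta$, $w^n=(-1)^\tau$, so its inverse entries are discrete Riemann sums of the integrand $K_1^{-1}(z,w)_{p,q}\, z^{k'-k}w^{j'-j}$. Away from the spectral curve the integrand is smooth and the sums converge to the integral in (\ref{inversekasteleyn}); on the spectral curve one uses Lemma 4.2, which guarantees that any zero of $\det K(z,w)$ on $\mathbb{T}^2$ is an isolated real node, so the singular contribution is integrable and the Riemann sums still converge.

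Since all four matrices $(K_n^{\theta\tau})^{-1}$ share the same limit $K^{-1}$ and the coefficients $\lambda_n^{\theta\tau}$ are bounded (their signed sum equals $1$ when $m=0$ by the total probability constraint), the cylinder probabilities converge,
\[
\lim_{n\to\infty} P_n(e_1,\dots,e_m) \;=\; \prod_{i=1}^m W(u_i v_i)\,\bigl|\mathrm{Pf}(K^{-1})_{u_1,v_1,\dots,u_m,v_m}\bigr|.
\]
These limits automatically satisfy Kolmogorov consistency because they are pointwise limits of consistent families of probability measures, so they extend to a unique measure $\mathcal{P}$ on dimer configurations of $G$. Translation invariance descends from the invariance of each $P_n$ under the fundamental-domain translation group of $G_n$, and the Gibbs (DLR) property follows because conditioning on the configuration outside a finite annulus reduces the Pfaffian expression inside to the finite Boltzmann measure with the prescribed boundary.

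The main obstacle is the Riemann-sum convergence in the second step when $\det K(z,w)$ has a real node on $\mathbb{T}^2$, i.e.\ at the critical point. There the Fourier integrand is genuinely singular, and one has to control the rate at which the discrete character lattice approaches the node to ensure the sums still track the principal-value integral. Lemma~4.2 reduces this to a single simple real node, which produces an integrable (at worst logarithmic) singularity; a careful stationary-phase or Abel-summation argument then gives the required convergence. All other steps (Pfaffian identity, boundedness of $\lambda_n^{\theta\tau}$, Kolmogorov extension, translation invariance, Gibbs property) are routine once this analytic input is in place.
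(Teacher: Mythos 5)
Your proposal follows the same route as the paper in the non-critical case: the Pfaffian identity $\mathrm{Pf}(A)_{\bar I}/\mathrm{Pf}(A)=\pm\mathrm{Pf}((A^{-1})_I)$ applied to the four finite-torus Kasteleyn matrices, entrywise convergence of $(K_n^{\theta\tau})^{-1}$ to the Fourier integral via Riemann sums, and bounded coefficients $\lambda_n^{\theta\tau}$ whose signed sum is $1$. Your way of dispensing with the $n$-dependent sign (absorbing $(-1)^{\kappa(n)}$ outside the absolute value and using boundedness of the $\lambda_n^{\theta\tau}$ rather than passing to a subsequence) is a minor, legitimate streamlining of what the paper does.

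The genuine gap is the critical case, and it is worse than the ``main obstacle'' you describe. If $P(z,w)=\det K(z,w)$ has its real node at, say, $(1,1)$, then by the fundamental-domain enlargement formula $P_n(1,1)=\prod_{u^n=1}\prod_{v^n=1}P(u,v)$ contains the factor $P(1,1)=0$, so $K_n^{00}$ is \emph{singular}: the matrix $(K_n^{00})^{-1}$ in your displayed formula does not exist, and the corresponding Pfaffian ratio is a $0/0$ expression. So the issue is not merely that a Riemann sum of an integrable singularity might fail to converge (which is already a real worry, since one sample point of the discrete character lattice sits exactly at the node); the algebraic starting point of your second step is unavailable for at least one of the four terms. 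One must instead analyse the near-cancellation between $\mathrm{Pf}(K_n^{00})_{E^c}$ and $\mathrm{Pf}(K_n^{00})$ directly, or argue that this term's contribution to the probability vanishes in the limit while the other three terms carry the answer. The paper does not carry out this analysis either: it explicitly defers the critical case to Boutillier--de Tili\`ere \cite{bt} and then obtains weak convergence of the full sequence from the uniqueness of the translation-invariant Gibbs measure established in Section~2. Your sketch (``the singular contribution is integrable \dots the Riemann sums still converge'') would not survive being written out; you need either the \cite{bt} input or a substitute for it.
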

\begin{proof}The dimer partition function $Z_n$ of the graph $G_n$,
for $n$ even, is
\[Z_n=\frac{1}{2}[-\mathrm{Pf}(K_n^{00})+\mathrm{Pf}(K_n^{10})+\mathrm{Pf}(K_n^{01})+\mathrm{Pf}(K_n^{11})].\]

Let $E=\{e_1=u_1v_1,...,e_m=u_mv_m\}$ be a subset of edges of $G_n$,
then the probability $\mathcal{P}_n(e_1,...,e_m)$ of these edges
occurring in a dimer configuration of $G_n$ chosen with respect to
the Boltzmann measure $\mathcal{P}_n$, for $n$ even, is
\[\frac{\prod_{i=1}^{m}w_{u_iv_i}}{2Z_n}[-\mathrm{Pf}(K_n^{00})_{E^c}+\mathrm{Pf}(K_n^{10})_{E^c}+\mathrm{Pf}(K_n^{01})_{E^c}+\mathrm{Pf}(K_n^{11})_{E^c}],\]
where $E^c=V(G_n)\setminus\{u_1v_1,...,u_mv_m\}$ and
$(K_n^{\theta\tau})_{E^c}$ is the sub-matrix of $K_n^{\theta\tau}$
whose lines and columns are indexed by $E^c$. Let $P(z,w)=\det
K(z,w)$. By lemma 4.3, either the spectral curve $P(z,w)=0$ does not
intersect $\mathbb{T}^2$, or it has a single real node on
$\mathbb{T}^2$.

If $P(z,w)=0$ does not intersect $\mathbb{T}^2$,
\[(K_n^{\theta\tau})^{-1}_{(x,y,v),(x',y',v')}=\frac{1}{n^2}\sum_{j=0}^{n-1}\sum_{k=0}^{n-1}e^{\frac{i(2j+\theta)\pi(x-x')}{n}}e^{\frac{i(2k+\tau)\pi(y-y')}{n}}\frac{Cof[K(e^{\frac{i\pi(2j+\theta)}{n}},e^{\frac{i\pi(2k+\tau)}{n}})]_{v,v'}}{P(e^{\frac{i\pi(2j+\theta)}{n}},e^{\frac{i\pi(2k+\tau)}{n}})}.\]
The right side is a Riemann sum in non-critical case, it converges
to an integral for any $\theta,\tau\in\{-1,1\}$, that is,
\begin{align}
K^{-1}_{(x,y,v),(x',y',v')}&=&\lim_{n\rightarrow\infty}(K_n^{\theta,\tau})^{-1}_{(x,y,v),(x',y',v')}\\
&=&\frac{1}{4\pi^2}\iint_{\mathbb{T}^2}z^{x-x'}w^{y-y'}\frac{Cof[K(z,w)]_{v,v'}}{P(z,w)}\frac{dz}{iz}\frac{dw}{iw}\label{invkas}.
\end{align}
We also have
\[\mathrm{Pf}(K_n^{\theta\tau})^{-1}_{E}=(-1)^{\kappa(n)}\frac{(\mathrm{Pf}K_n^{\theta\tau})_{E^c}}{\mathrm{Pf}K_n^{\theta\tau}},\]
where $\kappa: \mathbb{N}\rightarrow \{0,1\}$ is a function. We can
choose a subsequence $\{n'\}$ such that $\kappa(n')$ is a constant
$\kappa_0$. Then
\begin{align*}
\mathcal{P}(e_1,...,e_m)&=&\lim_{n'\rightarrow\infty}\mathcal{P}_{n'}(e_1,...,e_m)\\
&=&\lim_{n'\rightarrow\infty}|(-1)^{\kappa_0}\frac{\prod_{i=1}^{m}w_{u_iv_i}}{2Z_n}[-\mathrm{Pf}(K_{n'}^{00})_E^{-1}\mathrm{Pf}K_{n'}^{00}+\mathrm{Pf}(K_{n'}^{01})_E^{-1}\mathrm{Pf}K_{n'}^{01}\\
&&+\mathrm{Pf}(K_{n'}^{10})_E^{-1}\mathrm{Pf}K_{n'}^{10}+\mathrm{Pf}(K_{n'}^{11})_E^{-1}\mathrm{Pf}K_{n'}^{11}]|\\
&=&\prod_{i=1}^{m}w_{u_iv_i}|\mathrm{Pf}(K^{-1})_{E}|.
\end{align*}
Note that the convergence actually holds for all $n$, because on the complement subsequence $\mathbb{N}\setminus\{n'\}$, $\kappa$ is also a constant given that $\kappa$ is a two-valued function. The sign does not matter since we take the absolute value in the end.

If $P(z,w)=0$ has a single real node on $\mathbb{T}^2$, see
\cite{bt} for a proof. The weak convergence of $\mathcal{P}_n$ for
$n\in \mathbb{N}$ follows from the fact that there is a unique
translation invariant Gibbs measure, and the weak limit of
$\mathcal{P}_n$ on any subsequence is translation invariant.
\end{proof}

\begin{lemma} Let $\mathcal{P}$ be the weak limit of $\mathcal{P}_n$,
Boltzmann measures defined for graph $G_n$ with periodic boundary
conditions.  $P(z,w)=0$ does not intersect $\mathbb{T}^2$, if and only if the
edge-edge correlation
\[\mathcal{P}(e_1\& e_2)-\mathcal{P}(e_1)\mathcal{P}(e_2)\]
converges to 0 exponentially fast as $|e_1-e_2|\rightarrow\infty$.
\end{lemma}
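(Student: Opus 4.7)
The plan is to translate both implications into statements about the inverse Kasteleyn matrix $K^{-1}$ via the Pfaffian formula of Lemma 4.8. Expanding the $4\times 4$ Pfaffian for $e_1=u_1v_1$, $e_2=u_2v_2$ gives
$$\mathrm{Pf}(K^{-1})_{e_1,e_2}=K^{-1}_{u_1v_1}K^{-1}_{u_2v_2}-K^{-1}_{u_1u_2}K^{-1}_{v_1v_2}+K^{-1}_{u_1v_2}K^{-1}_{v_1u_2},$$
so by the elementary inequality $\bigl||A+B|-|A|\bigr|\le |B|$, the edge-edge covariance $\mathcal{P}(e_1\& e_2)-\mathcal{P}(e_1)\mathcal{P}(e_2)$ is controlled in absolute value by the two ``cross'' products whose indices pair up vertices belonging to \emph{different} edges. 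Everything therefore reduces to understanding the decay of individual $K^{-1}$ entries as a function of the location of zeros of $P(z,w)$.

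For the forward direction, suppose $P(z,w)\neq 0$ on $\mathbb{T}^2$. Compactness of $\mathbb{T}^2$ and continuity of $P$ produce $\rho>1$ such that $P$ is non-vanishing on the closed polyannulus $\{(z,w):\rho^{-1}\le |z|,|w|\le\rho\}$, where the integrand of (\ref{invkas}) is jointly holomorphic in each variable. Deforming each contour $|z|=1$ to $|z|=\rho^{\pm 1}$ (with sign chosen according to $\mathrm{sign}(x-x')$, and similarly for $w$) without crossing a singularity yields $|K^{-1}_{(x,y,v),(x',y',v')}|\le C\rho^{-(|x-x'|+|y-y'|)}$. Feeding this back into the cross-product bound above shows that the edge-edge covariance decays exponentially in the graph distance between $e_1$ and $e_2$.

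For the converse we argue contrapositively: assume $P$ has a zero on $\mathbb{T}^2$. By Lemma 4.3 this is a single real node at some $(z_0,w_0)\in\{\pm 1\}^2$ at which both partials of $P$ also vanish, so $1/P$ has a non-integrable singularity of order $|(z-z_0,w-w_0)|^{-2}$ there. A local Morse/stationary-phase analysis of (\ref{invkas}) then shows the $K^{-1}$ entries decay no faster than $1/R$ in the graph distance $R$ along the principal directions determined by the Hessian of $P$ at the node. To conclude non-exponential decay of the edge-edge covariance, one chooses $e_2$ receding from $e_1$ along such a direction and writes out the leading asymptotic of the combination $K^{-1}_{u_1v_2}K^{-1}_{v_1u_2}-K^{-1}_{u_1u_2}K^{-1}_{v_1v_2}$; the coefficients are determined by the cofactor and Hessian data of $\mathrm{Cof}(K)/P$ at the node and produce a genuine $\sim R^{-2}$ tail, which precludes exponential decay.

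The main obstacle is this last step in the converse direction: since the edge-edge covariance is a sum of products of $K^{-1}$ entries rather than a single entry, exponential decay of the covariance is strictly weaker than that of $K^{-1}$, so one must check that the leading polynomial asymptotics of the cross products do not cancel to all orders. This requires a careful local computation of $K^{-1}$ near the real node — analogous to the bipartite calculations of \cite{kos,ko} but adapted to the Fisher graph, where the Kasteleyn matrix is skew-symmetric rather than bipartite — and a verification that the chosen pair of edges produces a non-vanishing leading coefficient.
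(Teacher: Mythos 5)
Your proposal follows the paper's approach step for step: expand the $4\times4$ Pfaffian from Lemma~4.8 to reduce the covariance to the cross products $K^{-1}_{u_1v_2}K^{-1}_{v_1u_2}-K^{-1}_{u_1u_2}K^{-1}_{v_1v_2}$, then read off the decay from the Fourier representation~(\ref{invkas}). The forward direction is correct and is in fact spelled out more explicitly than in the paper: the paper simply cites (\ref{invkas}) and (\ref{covariance}), whereas your contour deformation to $|z|=\rho^{\pm1}$ inside the polyannulus where $P\neq 0$ makes the exponential bound $|K^{-1}|\le C\rho^{-(|x-x'|+|y-y'|)}$ concrete. The elementary bound $\bigl||A+B|-|A|\bigr|\le|B|$ then gives exponential decay of the covariance, exactly as the paper intends.

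For the converse you correctly isolate the weak point, and you should know that the paper's own proof has the \emph{same} gap: the paper only argues that non-analyticity of $\mathrm{Cof}(K)/P$ at the node forces the Fourier coefficients of the individual entries $K^{-1}_{v,v'}$ to decay sub-exponentially, which is a statement about $K^{-1}$, not about the covariance. Since the covariance is the magnitude of a difference of products of such entries, a non-cancellation argument is genuinely required. In the paper this is deferred — essentially to the following Lemma~4.10, where a local Morse expansion of $\mathrm{Cof}(K)/P$ near the node, together with a nondegenerate Hessian hypothesis, produces the explicit $\sim R^{-1}$ asymptotics for $K^{-1}$ and hence a $\sim R^{-2}$ tail for the covariance along generic directions. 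So your remark that one must ``check that the leading polynomial asymptotics of the cross products do not cancel to all orders'' is a correct diagnosis, and the fix you sketch (a stationary-phase computation of $K^{-1}$ near the node in the style of \cite{kos,ko}, adapted to the skew-symmetric Fisher Kasteleyn matrix) is precisely the route the paper takes in Lemma~4.10. To make your proof of Lemma~4.9 self-contained you would need to either import that asymptotic computation or impose the Hessian nondegeneracy hypothesis there and carry out the residue/stationary-phase estimate to exhibit a non-vanishing leading coefficient for a suitable choice of $e_1,e_2$.
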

\begin{proof}By Lemma 4.8,
\begin{align}
\mathcal{P}(e_1\&
e_2)-\mathcal{P}(e_1)\mathcal{P}(e_2)&=&w_{u_1v_1}w_{u_2v_2}\left[\left|\mathrm{Pf}K^{-1}\left(\begin{array}{cccc}u_1&v_1&u_2&v_2\\u_1&v_1&u_2&v_2\end{array}\right)\right|-|K^{-1}_{u_1,v_1}K^{-1}_{u_2,v_2}|\right]\\
&=&w_{u_1v_1}w_{u_2v_2}[|K^{-1}_{u_1,v_1}K^{-1}_{u_2,v_2}+K^{-1}_{u_1,v_2}K^{-1}_{v_1,u_2}-K^{-1}_{u_1,u_2}K^{-1}_{v_1,v_2}|-|K^{-1}_{u_1,v_1}K^{-1}_{u_2,v_2}|].\label{covariance}
\end{align}
If $P(z,w)=0$ has no zeros on $\mathbb{T}^2$, (\ref{invkas}) and
(\ref{covariance}) imply that as $|e_1-e_2|\rightarrow\infty$,
$\mathcal{P}(e_1\&e_2)$ converges to 0 exponentially. If $P(z,w)$
has a real node on $\mathbb{T}^2$,  then
\[\frac{Cof(K(z,w))_{v,v'}}{P(z,w)}\]
is not an analytic function in edge weights, hence its Fourier coefficients does not decay exponentially fast.  To see that, if the Fourier coefficients decays exponentially, then the function is the limit of a locally uniform convergent power series, hence the function has to be analytic.
\end{proof}

\begin{lemma} If $P(z,w)=0$ has a single real node on $\mathbb{T}^2$, and if
\begin{eqnarray*}
\det\left.\left(\begin{array}{cc}\frac{\partial^2 P(e^{i\theta},e^{i\phi})}{\partial \theta^2}&\frac{\partial^2 P(e^{i\theta},e^{i\phi})}{\partial \theta\partial\phi}\\\frac{\partial^2 P(e^{i\theta},e^{i\phi})}{\partial \theta\partial\phi}&\frac{\partial^2 P(e^{i\theta},e^{i\phi})}{\partial \phi^2}\end{array}\right)\right|_{(\theta,\phi)=(0,0)}\neq 0,
\end{eqnarray*}
then the edge-edge correlation\begin{eqnarray*}
\mathcal{P}(e_1\&e_2)-\mathcal{P}(e_1)\mathcal{P}(e_2)
\end{eqnarray*}
converges to 0 quadratically as $|e_1-e_2|\rightarrow\infty$
\end{lemma}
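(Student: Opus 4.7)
The plan is to reduce the edge–edge covariance to a sum of products of entries of the inverse Kasteleyn matrix $K^{-1}$ (using Lemma 4.8) and then obtain the quadratic rate from a stationary-phase / oscillatory-integral analysis of the Fourier representation (\ref{invkas}) near the real node on $\mathbb{T}^2$.

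First, I would repeat the calculation of (\ref{covariance}) from Lemma 4.9. Writing $A=K^{-1}_{u_1,v_1}K^{-1}_{u_2,v_2}$ and $B=K^{-1}_{u_1,v_2}K^{-1}_{v_1,u_2}-K^{-1}_{u_1,u_2}K^{-1}_{v_1,v_2}$, the elementary bound $\bigl||A+B|-|A|\bigr|\le |B|$ gives
\begin{eqnarray*}
\bigl|\mathcal{P}(e_1\&e_2)-\mathcal{P}(e_1)\mathcal{P}(e_2)\bigr|
\;\le\; w_{u_1v_1}w_{u_2v_2}\Bigl(|K^{-1}_{u_1,v_2}K^{-1}_{v_1,u_2}|+|K^{-1}_{u_1,u_2}K^{-1}_{v_1,v_2}|\Bigr).
\end{eqnarray*}
Each product on the right pairs one vertex of $e_1$ with one vertex of $e_2$, so it suffices to prove that $|K^{-1}_{(x,y,v),(x',y',v')}|=O\bigl(|(x-x',y-y')|^{-1}\bigr)$ as the spatial separation grows, which then yields the quadratic rate.

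For this decay of $K^{-1}$, I would use the integral formula (\ref{invkas}) and split the integrand with a smooth partition of unity: a cut-off $\chi$ supported in a small neighborhood $U$ of the node $(z_0,w_0)\in\{\pm 1\}^2$, and $1-\chi$ supported away from it. On the complement of $U$, $\operatorname{Cof}[K(z,w)]_{v,v'}/P(z,w)$ is $C^\infty$ on $\mathbb{T}^2$, and repeated integration by parts in $\theta,\phi$ gives Fourier coefficients that are $O(|(x-x',y-y')|^{-N})$ for every $N$. Near the node, set $z=z_0 e^{i\theta}$, $w=w_0 e^{i\phi}$; by assumption
\begin{eqnarray*}
P(z_0e^{i\theta},w_0e^{i\phi}) \;=\; Q(\theta,\phi)+O(|(\theta,\phi)|^3),
\end{eqnarray*}
where $Q$ is the Hessian quadratic form with $\det\operatorname{Hess}P|_{(0,0)}\neq 0$. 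Because $(z_0,w_0)$ is a real node of the spectral curve, $Q$ is indefinite, so after a linear change of coordinates it is the model form $Q(\theta,\phi)=a\theta\phi$ with $a\neq 0$.

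The remaining task is to show the Fourier coefficients of the localized model integrand $\chi(\theta,\phi)g(\theta,\phi)/Q(\theta,\phi)$, where $g$ is smooth, decay like $1/|(x,y)|$. This is a classical oscillatory-integral estimate: after the coordinate change $\theta=s/|x|$, $\phi=t/|y|$ the homogeneity of $Q$ of degree $2$ in the variables yields the prefactor $|x|^{-1}|y|^{-1}$ times a convergent (possibly principal-value) distributional integral, and the next-order remainder terms coming from the $O(|(\theta,\phi)|^3)$ correction in $P$ and from derivatives of $g,\chi$ contribute only lower-order pieces by further integration by parts. I expect this step to be the main technical obstacle, since one must handle the interplay between the non-analytic singularity of $1/Q$ on the two real lines through the node and the oscillatory phase; however, the argument is analogous to Kenyon's asymptotic analysis of the critical bipartite dimer model and is essentially a two-dimensional, hyperbolic version of a standard saddle-point/stationary-phase computation. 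Combining the cut-off estimates with the local model gives $|K^{-1}_{u,u'}|=O(|u-u'|^{-1})$, which when fed back into the displayed inequality above yields the desired quadratic decay of the edge-edge correlation.
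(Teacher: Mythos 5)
Your overall strategy (bound the covariance by products of $K^{-1}$-entries, then analyze the Fourier integral (\ref{invkas}) near the node) is the right idea, and is essentially the paper's strategy. But there are two substantive gaps.

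First, the signature of the Hessian is wrong. You assert that because $(z_0,w_0)$ is a real node, the quadratic form $Q$ is \emph{indefinite}, with model form $a\theta\phi$. In fact the opposite holds here: since $P(z,w)\geq0$ on $\mathbb{T}^2$ and the node is the unique zero, $P(e^{i\theta},e^{i\phi})=a\theta^2+b\theta\phi+c\phi^2+\cdots$ with $a>0$, $c>0$, $b^2-4ac<0$, i.e.\ the Hessian is \emph{positive definite}; the two branches through the node have complex-conjugate slopes $\lambda,\bar\lambda\notin\mathbb{R}$. The paper states this explicitly. Your hyperbolic model $Q=a\theta\phi$ does not describe the local geometry, so the subsequent scaling/stationary-phase estimate is analyzing the wrong singularity.

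Second, and more seriously, you treat the localized integrand as $g(\theta,\phi)/Q(\theta,\phi)$ with $g$ smooth but with no vanishing at the node. With a positive-definite $Q$, such an integrand behaves like $1/(\theta^2+\phi^2)$ near the origin, which is not even locally integrable in two dimensions, so the Fourier integral (\ref{invkas}) would not converge, let alone decay like $1/|x|$. What actually saves the day — and what the paper spends most of the proof establishing — is that the numerator $\mathrm{Cof}[K(e^{i\theta},e^{i\phi})]_{v,v'}$ vanishes \emph{linearly} at the node. This is shown via the algebraic fact that $P(1,1)=0$ at the node forces the entire adjugate matrix $\mathrm{Adj}\,K(1,1)$ to be zero (the paper proves this through a perfect-square factorization of $P$ and the rank-$\leq1$ property of the adjugate of a singular matrix). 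Only then does the integrand become $(p\theta+q\phi)/(a\theta^2+b\theta\phi+c\phi^2)+O(1)$, which is integrable, and whose partial-fraction decomposition into two simple complex-conjugate poles yields the $O(1/|v-v'|)$ decay of $K^{-1}$ via the Kenyon–Okounkov–Sheffield asymptotic. Your proposal never establishes — or even notices the need for — this vanishing of the cofactor, and without it the claimed $O(1/|u-u'|)$ bound on $K^{-1}$ cannot be obtained.
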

\begin{proof}
Without loss of generality, assume the real node is $(1,1)$. Then
\begin{eqnarray*}
0=P(1,1)=(\mathrm{Pf}K(1,1))^2,
\end{eqnarray*}
where $\mathrm{Pf}K(1,1)$ is linear with respect to each edge weight. Fix an edge $e=uv$, and assume the weight of edge $e$ is $w_e$, then
\begin{eqnarray*}
P(1,1)=\mathcal{C}_2w_e^2+\mathcal{C}_1w_e+\mathcal{C}_0,
\end{eqnarray*}
where $\mathcal{C}_2$ is the signed weighted sum of loop configurations in which $e$ is a doubled edge. $\mathcal{C}_1$ consists of two parts: $\mathcal{C}_1=\mathcal{D}_1+\mathcal{D}_2$. $\mathcal{D}_1$ is the signed weighted sum of loop configurations in which $e$ is part of a loop oriented from $u$ to $v$; $\mathcal{D}_2$ is the signed weighted sum of loop configurations in which $e$ is part of a loop oriented from $v$ to $u$. When $z=w=1$, $\mathcal{D}_1=\mathcal{D}_2$. $\mathcal{C}_0$ is the signed weighted sum of loop configurations in which $e$ does not appear at all. The determinant of the submatrix of $K(1,1)$ by removing the row indexed by $u$ and the column indexed by $v$ is exactly $\mathcal{C}_2w_e+\mathcal{D}_1$.

Since $P(1,1)$ is a perfect square
\begin{eqnarray*}
(\mathcal{D}_1)^2=\mathcal{C}_0\mathcal{C}_2,
\end{eqnarray*}
as a result,
\begin{eqnarray*}
P(1,1)&=&(\mathcal{C}_2w_e+\mathcal{D}_1)w_e+(\mathcal{D}_1w_e+\frac{\mathcal{D}_1^2}{\mathcal{C}_2})\\
&=&\frac{1}{\mathcal{C}_2}(\mathcal{C}_2w_e+\mathcal{D}_1)^2.
\end{eqnarray*}
$P(1,1)=0$ implies $\mathcal{C}_2w_e+\mathcal{D}_1=0$, hence the determinant of a submatrix of $K(1,1)$ by removing the row indexed by $u$ and the column indexed by $v$ is 0, if $uv$ is an edge. Now, let us consider $Adj K(1,1)$, i.e. the adjugate matrix of $K(1,1)$ whose $(i,j)$ entry is $(-1)^{i+j}$ multiplying the determinant of the submatrix of $K(1,1)$ by removing the $j$th row and the $i$th column. Since $\det K(1,1)=0$, $Adj K(1,1)$ is a matrix of rank at most 1. As explained above, the entries of $Adj K(1,1)$ corresponding to an edge is 0. This implies that $Adj K(1,1)$ has 0 entries on each row and each column. Therefore $Adj K(1,1)$ is a zero matrix, and any minor of $K(1,1)$ by removing a row and a column is 0.

In a neighborhood of $(\theta,\phi)=(0,0)$ $P$ has the expansion
\begin{eqnarray*}
P(e^{i\theta},e^{i\phi})=a\theta^2+b\theta\phi+c\phi^2+\cdots,
\end{eqnarray*}
where $a,b,c\in\mathbb{R}$ are real numbers and $\cdots$ denotes terms of order at least 3. Then near a node, a point on $P=0$ satisfies either $\theta=\lambda\phi+O(\phi^2)$ or $\theta=\bar{\lambda}\phi+O(\phi^2)$, where $\lambda,\bar{\lambda}$, which are necessarily non-real, are the roots of $ax^2+bx+c=0$, since $P(z,w)>0$, for any $(z,w)\in\mathbb{T}^2\setminus\{(1,1)\}$(\cite{li}), and the assumption that the Hessian matrix at $(\theta,\phi)=(0,0)$ is invertible. This implies
\begin{eqnarray*}
b^2-4ac<0,\qquad
a=\left.\frac{\partial^2 P}{\partial\theta^2}\right|_{(\theta,\phi)=(0,0)}>0,\qquad
c=\left.\frac{\partial^2 P}{\partial\phi^2}\right|_{(\theta,\phi)=(0,0)}>0.
\end{eqnarray*}
Since $P(z,w)=\det K(z,w)$, we apply the differentiation rule for the determinant and obtain
\begin{eqnarray*}
\frac{\partial P}{\partial \theta}=\sum_{v}\det K_v,
\end{eqnarray*}
where $K_v$ is a matrix which are the same as $K$ except the row indexed by $v$. For each entry on the row indexed by $v$, we take the derivative with respect to $\theta$. The sum is over all the vertices of the quotient graph. Then we expand $\det K_v$ with respect to the row indexed by $v$. Obviously at most one entry on that row of $K_v$ is nonzero. Hence $\det K_v=0$ or $\det K_v=\pm i e^{\pm i\theta}(Adj K)_{v,w}$, where $vw$ is an $z$-edge or $\frac{1}{z}$-edge. Then we take the second order derivative at $\theta=0$, $\phi=0$, 
\begin{eqnarray*}
\frac{\partial\det K_v}{\partial\theta}=-e^{\pm i\theta}(Adj K(1,1))_{v,w}\pm e^{\pm i\theta}\left.\frac{d(Adj K)_{v,w}}{d\theta}\right|_{(\theta,\phi)=(0,0)}.
\end{eqnarray*}
The first term is 0. However, since $\left.\frac{\partial^2 P}{\partial\theta^2}\right|_{(\theta,\phi)=(0,0)}\neq 0$, there exists at least one $z$-edge $vv'$ such that $\left.\frac{d(Adj K)_{v,w}}{d\theta}\right|_{(\theta,\phi)=(0,0)}\neq 0$. 
 When $(\theta,\phi)$ lies in a neighborhood of $(0,0)$, we have
\begin{eqnarray*}
\frac{Cof[K(e^{i\theta},e^{i\phi})]_{v,v'}}{P(e^{i\theta},e^{i\phi})}&=&\frac{p\theta+q\phi}{a\theta^2+b\theta\phi+c\phi^2}+O(1)\\&=&\frac{1}{\lambda-\bar{\lambda}}\left[\frac{p\lambda-q}{a(\theta-\lambda\phi)}+\frac{q-p\bar{\lambda}}{a(\theta-\bar{\lambda}\phi)}\right]+O(1)\\&=&\frac{1}{\lambda-\bar{\lambda}}\left[\frac{p\lambda-q}{a(z-1)-a\lambda(w-1)}+\frac{q-p\bar{\lambda}}{a(z-1)-a\bar{\lambda}(w-1)}\right]+O(1),
\end{eqnarray*}
where $z=e^{i\theta},w=e^{i\phi}$. It is proved in $\cite{kos}$ that if
\begin{eqnarray*}
R(z,w)=\alpha(z-z_0)+\beta(w-w_0)+O(|z-z_0|^2+|w-w_0|^2),
\end{eqnarray*}
then we have the following asymptotic formula for the Fourier coefficients of $R^{-1}$
\begin{eqnarray*}
\frac{1}{(2\pi i)^2}\int_{\mathbb{T}^2}\frac{w^xz^y}{R(z,w)}\frac{dz}{z}\frac{dw}{w}=\frac{-w_0^xz_0^y}{2\pi i(x\alpha z_0-y\beta w_0)}+O(\frac{1}{x^2+y^2}).
\end{eqnarray*}
Therefore we have
\begin{eqnarray*}
&&\frac{1}{(2\pi i)^2}\int_{\mathbb{T}^2}\frac{w^xz^y Cof[K(z,w)]_{v,v'}}{P(z,w)}\frac{dz}{z}\frac{dw}{w}\\
&=&\frac{1}{\lambda-\bar{\lambda}}\left[\frac{q-p\lambda}{2\pi i(xa+ya\lambda)}+\frac{p\bar{\lambda}-q}{2\pi i(xa+ya\bar{\lambda})}\right]+O(\frac{1}{x^2+y^2}).
\end{eqnarray*}
Hence $K^{-1}_{v,v'}$ decays linearly as $|v-v'|\rightarrow\infty$, as a result, the edge-edge correlation decays quadratically as $|v-v'|\rightarrow\infty$.
\end{proof}

\begin{lemma}Let $\mathcal{P}_{I}$ be a Gibbs measure of the
ferromagnetic Ising model on the periodic square grid, $\mathcal{P}$
be the induced Gibbs measure of the dimer model by the Fisher
correspondence in Figure 2. If under $\mathcal{P}_{I}$ the spin-spin
correlation decays exponentially, then under $\mathcal{P}$, the
edge-edge correlation decays exponentially.
\end{lemma}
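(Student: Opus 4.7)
The plan is to push the Fisher correspondence in Figure~2 through locally, writing every dimer edge indicator as a polynomial of bounded degree in the Ising spins. The covariance of two edge indicators then decomposes into a finite linear combination of truncated Ising correlators, each of which can be controlled by Lebowitz-type inequalities in terms of two-point spin correlations.

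First, under the measure-preserving correspondence of Section~3.2.1 a dimer configuration on the Fisher graph is determined by the Ising configuration on the square grid: an external gadget edge $e$ separating spins $\sigma_p$ and $\sigma_q$ is present if and only if $\sigma_p\neq\sigma_q$, so
\begin{equation*}
\mathbf{1}_{\{e\in D\}}=\frac{1-\sigma_p\sigma_q}{2},
\end{equation*}
while for any internal gadget edge $e$ the indicator $\mathbf{1}_{\{e\in D\}}$ is a Boolean function of the (at most four) Ising spins at the corners of the face of the square grid carrying the gadget. Expanding such a Boolean function in the $\sigma_A$-basis yields
\begin{equation*}
\mathbf{1}_{\{e\in D\}}=\sum_{A\subseteq N(e)}c_A^{(e)}\,\sigma_A,
\end{equation*}
with $|N(e)|\leq 4$ and constants $c_A^{(e)}$.

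Second, substituting these expansions into the edge-edge covariance gives
\begin{equation*}
\mathcal{P}(e_1\,\&\,e_2)-\mathcal{P}(e_1)\mathcal{P}(e_2)=\sum_{A\subseteq N(e_1)}\sum_{B\subseteq N(e_2)}c_A^{(e_1)}c_B^{(e_2)}\bigl[\langle\sigma_A\sigma_B\rangle_{\mathcal{P}_I}-\langle\sigma_A\rangle_{\mathcal{P}_I}\langle\sigma_B\rangle_{\mathcal{P}_I}\bigr],
\end{equation*}
which is a finite sum. For $|A|+|B|$ odd both sides vanish by spin-flip symmetry at $h=0$; for $|A|+|B|$ even each truncated correlator is well defined and independent of the choice of translation-invariant $\mathcal{P}_I$ by Theorem~2.7.

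Third, I will apply Lebowitz's inequality for ferromagnetic Ising at $h=0$, which in its four-point form reads
\begin{equation*}
\langle\sigma_i\sigma_j\sigma_k\sigma_l\rangle-\langle\sigma_i\sigma_j\rangle\langle\sigma_k\sigma_l\rangle\leq\langle\sigma_i\sigma_k\rangle\langle\sigma_j\sigma_l\rangle+\langle\sigma_i\sigma_l\rangle\langle\sigma_j\sigma_k\rangle,
\end{equation*}
and bounds every truncated correlator $\langle\sigma_A\sigma_B\rangle-\langle\sigma_A\rangle\langle\sigma_B\rangle$ above by a non-negative (by GKS) sum of products of two-point correlators $\langle\sigma_i\sigma_j\rangle$ in which each factor pairs an index of $A$ with an index of $B$. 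Since $i\in N(e_1)$ and $j\in N(e_2)$ satisfy $|i-j|\geq|e_1-e_2|-O(1)$, the exponential-decay hypothesis on $\mathcal{P}_I$ gives $\langle\sigma_i\sigma_j\rangle\leq C_1\exp(-c|e_1-e_2|)$, and any product of such factors decays at least as fast. Summing the finitely many Lebowitz bounds then yields the required exponential decay of $\mathcal{P}(e_1\,\&\,e_2)-\mathcal{P}(e_1)\mathcal{P}(e_2)$.

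The main technical obstacle is extending the four-point Lebowitz inequality to the higher truncated correlators that arise from internal gadget edges, where $|A|+|B|$ can reach $6$ or $8$. One either invokes the general Lebowitz-type inequalities of \cite{abf} directly or iteratively applies the four-point Lebowitz inequality together with GKS, reducing each higher truncated correlator to a sum of products of two-point functions while preserving the cross-pairing between $N(e_1)$ and $N(e_2)$ needed for the distance estimate. Once this bound is in place, the conclusion follows immediately from the finiteness of the sum over $(A,B)$.
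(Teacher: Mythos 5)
Your overall strategy coincides with the paper's: write the dimer edge indicators as bounded-degree local polynomials in the Ising spins, expand the edge--edge covariance into a finite sum of truncated Ising correlators, and bound each by products of two-point correlators that inherit the exponential decay. The difference is in the key inequality used to control the higher-order truncated correlators, and this is where your proposal has a real gap.

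The paper works in the lattice-gas variables $\rho_i=(\sigma_i+1)/2$ and proves, via two applications of FKG to the monotone functions $\rho_B$, $S_A-\rho_A$, $S_A$, $S_B-\rho_B$ (where $S_A=\sum_{i\in A}\rho_i$), the sandwich
\[
0\ \leq\ \langle\rho_A\rho_B\rangle-\langle\rho_A\rangle\langle\rho_B\rangle\ \leq\ \langle S_AS_B\rangle-\langle S_A\rangle\langle S_B\rangle\ =\ \tfrac14\sum_{i\in A}\sum_{j\in B}\bigl(\langle\sigma_i\sigma_j\rangle-\langle\sigma_i\rangle\langle\sigma_j\rangle\bigr),
\]
which holds for arbitrary finite $A,B$. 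This is strictly more robust than a Lebowitz-type inequality: it bounds \emph{any} truncated correlator of two lattice-gas monomials by a sum of two-point truncated spin correlators, with no restriction on $|A|+|B|$, and at no point requires an inequality beyond FKG.

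Your proposal instead goes through the $\sigma_A$-basis and invokes the four-point Lebowitz inequality, which handles $|A|+|B|\leq 4$ cleanly; but you then acknowledge, correctly, that for internal gadget edges $|A|+|B|$ can reach $6$ or $8$, and you do not actually produce the required bound there. The step ``iteratively applies the four-point Lebowitz inequality together with GKS, reducing each higher truncated correlator to a sum of products of two-point functions while preserving the cross-pairing'' is not a theorem: Lebowitz's inequality does not iterate in a straightforward way, and the available higher-order tools (Newman's Gaussian domination, the tree-graph/ABF inequalities) control $\langle\sigma_{A\cup B}\rangle$ by sums over all pairings, not the truncated correlator $\langle\sigma_A\sigma_B\rangle-\langle\sigma_A\rangle\langle\sigma_B\rangle$ with the clean cross-pairing structure you need. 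So as written the argument is incomplete precisely at the point you flagged. The fix is either to restrict to the weighted (dual) edges as the paper implicitly does, noting that a gadget's internal dimer configuration is a deterministic function of the incident dual edges so that exponential decay transfers; or, more directly, to replace the Lebowitz step by the paper's $\rho$-variable FKG sandwich, which closes the gap for all $|A|,|B|$ at once.
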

\begin{proof}
Let $\rho_i=\frac{\sigma_i+1}{2}$ be the lattice gas variable. Let
$A,B$ be two finite subsets of the vertices of the square grid.
Define
\begin{align*}
S_A=\sum_{i\in A}\rho_i,\\
\rho_A=\prod_{i\in A}\rho_i.
\end{align*}
Clearly $S_A$, $\rho_A$ and $S_A-\rho_A$ are nonnegative, monotone
functions of the $\{\rho_i\}$ and thus also of the $\{\sigma_i\}$.
According to the $F.K.G$ inequality, we have
\begin{align}
\langle\rho_B(S_A-\rho_A)\rangle\geq\langle\rho_B\rangle\langle
S_A-\rho_A\rangle\geq 0\label{inequality1},
\end{align}
and
\begin{align}
\langle S_A(S_B-\rho_B)\rangle\geq\langle S_A\rangle\langle
S_B-\rho_B\rangle\geq 0.\label{inequality2}
\end{align}
Combining (\ref{inequality1}) and (\ref{inequality2}), we have
\begin{align*}
0&\leq&
\langle\rho_A\rho_B\rangle-\langle\rho_A\rangle\langle\rho_B\rangle\leq\langle S_A\rho_B\rangle-\langle S_A\rangle\langle\rho_B\rangle\leq\langle S_AS_B\rangle-\langle S_A\rangle \langle S_B\rangle\\
&=&\sum_{i\in A}\sum_{j\in
B}\langle\rho_i\rho_j\rangle-\langle\rho_i\rangle\langle\rho_j\rangle=\frac{1}{4}\sum_{i\in
A}\sum_{j\in
B}\langle\sigma_i\sigma_j\rangle-\langle\sigma_i\rangle\langle\sigma_j\rangle.
\end{align*}
Let $e_1(e_2)$ be an edge with endpoints $u_1,v_1(u_2,v_2)$. According to the Fisher correspondence illustrated in Figure 2, we
have
\begin{align*}
\mathcal{P}(e_1\&e_2)-\mathcal{P}(e_1)\mathcal{P}(e_2)&=&\langle\rho_{u_1}(1-\rho_{v_1})\rho_{u_2}(1-\rho_{v_2})\rangle-\langle\rho_{u_1}(1-\rho_{v_1})\rangle\langle\rho_{u_2}(1-\rho_{v_2})\rangle\\
&&+\langle\rho_{v_1}(1-\rho_{u_1})\rho_{v_2}(1-\rho_{u_2})\rangle-\langle\rho_{v_1}(1-\rho_{u_1})\rangle\langle\rho_{v_2}(1-\rho_{u_2})\rangle\\
&&+\langle\rho_{u_1}(1-\rho_{v_1})\rho_{v_2}(1-\rho_{u_2})\rangle-\langle\rho_{u_1}(1-\rho_{v_1})\rangle\langle\rho_{v_2}(1-\rho_{u_2})\rangle\\
&&+\langle\rho_{v_1}(1-\rho_{u_1})\rho_{u_2}(1-\rho_{v_2})\rangle-\langle\rho_{v_1}(1-\rho_{u_1})\rangle\langle\rho_{u_2}(1-\rho_{v_2})\rangle,
\end{align*}
and
\begin{align*}
&&|\langle\rho_{u_1}(1-\rho_{v_1})\rho_{u_2}(1-\rho_{v_2})\rangle-\langle\rho_{u_1}(1-\rho_{v_1})\rangle\langle\rho_{u_2}(1-\rho_{v_2})\rangle|\\
&=&|(\langle\rho_{u_1}\rho_{u_2}\rangle-\langle\rho_{u_1}\rangle\langle\rho_{u_2}\rangle)-(\langle\rho_{u_1}\rho_{u_2}\rho_{v_2}\rangle-\langle\rho_{u_1}\rangle\langle\rho_{u_2}\rho_{v_2}\rangle)\\
&&-(\langle\rho_{u_1}\rho_{v_1}\rho_{u_2}\rangle-\langle\rho_{u_1}\rho_{v_1}\rangle\langle\rho_{u_2}\rangle)+(\langle\rho_{u_1}\rho_{v_1}\rho_{u_2}\rho_{v_2}\rangle-\langle\rho_{u_1}\rho_{v_1}\rangle\langle\rho_{u_2}\rho_{v_2}\rangle)|\\
&\leq&4|\langle\rho_{u_1}\rho_{u_2}\rangle-\langle\rho_{u_1}\rangle\langle\rho_{u_2}\rangle|+2|\langle\rho_{u_1}\rho_{v_2}\rangle-\langle\rho_{u_1}\rangle\langle\rho_{v_2}\rangle|\\
&&+2|\langle\rho_{u_2}\rho_{v_1}\rangle-\langle\rho_{u_2}\rangle\langle\rho_{v_1}\rangle|+|\langle\rho_{v_1}\rho_{v_2}\rangle-\langle\rho_{v_1}\rangle\langle\rho_{v_2}\rangle|.
\end{align*}
The other terms of $\mathcal{P}(e_1\&
e_2)-\mathcal{P}(e_1)\mathcal{P}(e_2)$ can be bounded in a similar
way. Hence if the spin-spin correlation decays exponentially, the
edge-edge correlation decays exponentially after the Fisher
correspondence described in Figure 2.
\end{proof}

\begin{lemma}Given interactions $\{J_k\}_{k=1}^{2n^2}$, when $\beta$
increases from $0$ to $+\infty$, $\beta_c$ corresponds to the
smallest $\beta$ that the spectral curve for dimer configurations
with weights $\tanh \beta J_k$ has a real node on $\mathbb{T}^2$.
\end{lemma}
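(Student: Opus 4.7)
The plan is to prove $\beta_c = \beta^*$, where
\[
\beta^* := \inf\bigl\{\beta>0 : \text{the Fisher-graph spectral curve with weights } \tanh(\beta J_k) \text{ has a real node on }\mathbb{T}^2\bigr\},
\]
by establishing both inequalities. The central object will be the toroidal squared two-point function $f(\beta) := \lim_{N\to\infty}\langle\sigma_{00}\sigma_{0N}\rangle_p^2$, which by Theorem~4.2 admits a block-Toeplitz determinant representation and by Lemma~4.7 is analytic in $\beta$ off the real-node locus.

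To obtain $\beta_c \geq \beta^*$, fix $\beta\in[0,\beta^*)$; by Lemma~4.7 the function $f$ is analytic on the connected interval $[0,\beta^*)$. At sufficiently small $\beta$ the high-temperature argument of Section~5 (or an ordinary high-temperature expansion) gives exponential decay of $\langle\sigma_{00}\sigma_{0N}\rangle_p$ in $N$, so $f=0$ on a neighborhood of $0$; analytic continuation then forces $f\equiv 0$ on all of $[0,\beta^*)$. By Theorem~2.7 the even spin correlation is the same under any translation invariant Gibbs measure, and F.K.G.\ plus extremality of the $(+)$-state yield $\lim_{N\to\infty}\langle\sigma_{00}\sigma_{0N}\rangle_+ = (m^*)^2$, so $f=(m^*)^4$. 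Hence $m^*(\beta)=0$ throughout $[0,\beta^*)$, giving $\beta_c\geq\beta^*$.

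To obtain $\beta_c\leq\beta^*$, the Kelly-Sherman inequality (Lemma~4.1) makes $f$ nondecreasing in $\beta$, while $f(\beta)\to 1$ as $\beta\to\infty$ because the ferromagnetic Gibbs measure then concentrates on aligned ground states. On each connected component of $(0,\infty)$ disjoint from the closed set of real-node values, $f$ is analytic and nonnegative; monotonicity together with $f(\infty)=1$ precludes $f$ from vanishing on any open subinterval of $(\beta^*,\infty)$. Consequently $f>0$, and hence $m^*>0$, on $(\beta^*,\infty)$, so $\beta_c\leq\beta^*$.

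The principal difficulties are (i) identifying the toroidal-BC correlation limit with $(m^*)^2$, which needs Theorem~2.7 paired with $(+)$-state clustering via F.K.G., and (ii) pushing the argument across $\beta^*$ without assuming a unique real-node $\beta$; here the monotonicity of $f$ together with the ground-state value $f(\infty)=1$ is what makes the argument go through, since otherwise one would have to rule out the spectral curve picking up further real nodes in $(\beta^*,\infty)$ by a separate argument.
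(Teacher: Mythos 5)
Your argument for $\beta_c \geq \beta^*$ is a correct variant: where the paper invokes Kelly--Sherman monotonicity to define the threshold $\beta_{c,p}$ and then shows it must lie on the real-node locus, you instead start from $\beta^*$, note $f$ is analytic on $[0,\beta^*)$ by Lemma~4.7, pin down $f\equiv 0$ near $0$ via a high-temperature expansion, and propagate by analytic continuation. Both routes reach $m^*=0$ for $\beta<\beta^*$ (the paper via the inequality $0\leq\langle\sigma_{00}\sigma_{0n}\rangle_{+}-\langle\sigma_{00}\rangle_{+}^2\leq\langle\sigma_{00}\sigma_{0n}\rangle_p$ plus Lemma~4.1; you via uniqueness of the even correlation plus clustering of the $(+)$-state).

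The genuine gap is in your argument for $\beta_c\leq\beta^*$. You assert that monotonicity of $f$ together with $f(\infty)=1$ ``precludes $f$ from vanishing on any open subinterval of $(\beta^*,\infty)$,'' but this does not follow. Suppose the real-node set in $(0,\infty)$ were $\{\beta^*,\beta^{**}\}$ with $\beta^*<\beta^{**}$. Then on the component $(\beta^*,\beta^{**})$, $f$ may be analytic, nonnegative, nondecreasing, and \emph{identically zero}, after which $f$ jumps to a positive value across $\beta^{**}$ and climbs to $1$. This is fully compatible with monotonicity and $f(\infty)=1$; analytic continuation gives no constraint across $\beta^{**}$ because that point is not in the domain of analyticity. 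So the scenario you explicitly set out to avoid --- the spectral curve ``picking up further real nodes in $(\beta^*,\infty)$'' --- is not in fact ruled out by the tools you invoke. The paper closes this direction with a genuinely external ingredient: the Aizenman--Barsky--Fern\'andez / Aizenman result (\cite{abf},\cite{ai2}) that spin-spin correlations decay exponentially for all $\beta<\beta_c$, which via Lemma~4.11 and Lemma~4.9 forces the spectral curve to avoid $\mathbb{T}^2$ on the entire interval $(0,\beta_c)$, not just up to the first node. (Alternatively, Lemma~4.13 in the paper proves the real-node $\beta$ is unique by a duality/Pfaffian monotonicity argument, which would also repair your proof, but then the whole ``without assuming a unique real-node $\beta$'' framing becomes moot.) As written, your proof is missing exactly the input that lets one cross potential further singularities, and the appeal to $f(\infty)=1$ does not substitute for it.
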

\begin{proof}
 Consider the weak limit of
Boltzmann measures of the Ising model under periodic boundary
conditions,
\begin{eqnarray*}
\langle\sigma_i\sigma_j\rangle_p&=&\langle\sigma_i\sigma_j\rangle_p-\langle\sigma_i\rangle_p\langle\sigma_j\rangle_p\\
&=&4(\langle\rho_i\rho_j\rangle_p-\langle\rho_i\rangle_p\langle\rho_j\rangle_p)=4\langle\rho_i\rho_j\rangle_p-1.
\end{eqnarray*}

Since $\rho_i\rho_j$ is an increasing function with respect to each
single variable $\sigma$, $\langle\sigma_i\sigma_j\rangle_{p}$ is an
increasing function with respect to $\beta$. In particular
$\lim_{n\rightarrow\infty}\langle\sigma_{00}\sigma_{0n}\rangle_p$ is
increasing in $\beta$. There is a unique critical temperature
$\beta_{c,p}$, for any given interactions $\{J_{i,j}\}_{i,j}$, such
that
\begin{eqnarray*}
\lim_{n\rightarrow\infty}\langle\sigma_{00}\sigma_{0n}\rangle_p=0\qquad
\mathrm{if}\ \beta<\beta_{c,p},\\
\lim_{n\rightarrow\infty}\langle\sigma_{00}\sigma_{0n}\rangle_p>0\qquad
\mathrm{if}\ \beta>\beta_{c,p}.
\end{eqnarray*}
Hence at the edge weights $\tanh \beta_{c,p}J$,
$\lim_{n\rightarrow\infty}\langle\sigma_{00}\sigma_{0n}\rangle_{p}$
fails to be an analytic function of $\beta$(complex value of $\beta$
are also allowed). By Lemma 4.7, $\tanh \beta_{c,p}J$ is a subset of
the edge weights such that $P(z,w)=0$ has a real node on
$\mathbb{T}^2$. By Theorem 1.1, the uniqueness of the translation
invariant Gibbs measure implies that
\[\lim_{n\rightarrow\infty}\langle\sigma_{00}\sigma_{0n}\rangle_{+}=\lim_{n\rightarrow\infty}\langle\sigma_{00}\sigma_{0n}\rangle_p.\]
Therefore for $\beta<\beta_{c,p}$
\[0\leq \lim_{n\rightarrow\infty}\langle\sigma_{00}\sigma_{0n}\rangle_{+}-\langle\sigma_{00}\rangle_{+}^2\leq\lim_{n\rightarrow\infty}\langle\sigma_{00}\sigma_{0n}\rangle=0,\]
where the first inequality follows from Lemma 4.1. Hence for
$\beta<\beta_{c,p}$, the spontaneous magnetization
\[m^*=\frac{1}{|\Lambda_1|}\sum_{i\in\Lambda_1}\langle\sigma_i\rangle_{+}=0.\]
$\Lambda_1$, the quotient graph consisting of one period. Therefore
$\beta_{c,p}\leq \beta_c$. $\beta_c$, as defined before, is the
reciprocal of the lowest temperature such that $m^*=0$.

It is proved in \cite{abf,ai2} that for $\beta<\beta_c$, the
spin-spin correlation decays exponentially, therefore after the
Fisher correspondence, the edge-edge correlation on a graph with
weights $e^{-2\beta J}$ decays exponentially. We derive that
$\beta_c\leq \beta_{c,p}$, because at $\beta_{c,p}$, $P(z,w)=0$ has
a real node on $\mathbb{T}^2$, the edge-edge correlation does not decay exponentially, see Lemma 4.9. Combining with the previous argument, we have
$\beta_{c,p}=\beta_c$. When $\beta<\beta_c$, the corresponding
spectral curve with weights $e^{-2\beta J}$ does not intersect
$\mathbb{T}^2$. By proposition 3.1, the spectral curve of dimer
configurations with weights $\tanh \beta J$ does not intersect
$\mathbb{T}^2$ either, if $\beta<\beta_c$. This completes the proof.
\end{proof}

\begin{lemma}Assume the Ising model has period $n\times n$, $n$ is
even. If the weights of all edges parallel to $\gamma_x,\gamma_y$
lie in the open interval $(0,1)$, and all the other edges have
weight 1. Given an orientation to the Fisher graph as illustrated in
Figure 1, the only possible real node of the spectral curve
$P(z,w)=0$ on $\mathbb{T}^2$ is $(1,1)$.
\end{lemma}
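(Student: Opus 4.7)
The plan is to combine Lemma 4.3 with the duality of Proposition 3.1 and a direct positivity argument for the torus Pfaffians. By Lemma 4.3 the only candidate real nodes on $\mathbb{T}^2$ are the four points $(\pm 1, \pm 1)$, so it suffices to exclude $(-1,1)$, $(1,-1)$, and $(-1,-1)$. Since the weights of $\tilde F_n$ are $\tanh J_e \in (0,1)$, I will pass to $F_n$ with weights $e^{-2J_e} \in (0,1)$ via Proposition 3.1, which says that $P_{\tilde F_n}((-1)^\theta,(-1)^\tau)=0$ iff $P_{F_n}((-1)^\theta,(-1)^\tau)=0$. So it is enough to show $\mathrm{Pf}\,K_{F_n}((-1)^\theta,(-1)^\tau) \neq 0$ for $(\theta,\tau)\neq(0,0)$.

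The main tool is the Kasteleyn/Tesler decomposition of the four Pfaffians in terms of the four constrained partition functions $Z_{F_n,D_{\theta'\tau'}}$ introduced in Section 3. Extending the sign pattern of (3.9) by the standard orthogonality identity on the torus (so that $2 Z_{F_n,D} = -\mathrm{Pf}\,K_{F_n}(1,1)+\mathrm{Pf}\,K_{F_n}(-1,1)+\mathrm{Pf}\,K_{F_n}(1,-1)+\mathrm{Pf}\,K_{F_n}(-1,-1)$), each of the three non-$(1,1)$ Pfaffians takes the form $Z_{00}+Z_{?}+Z_{?}-Z_{?}$, with the coefficient of $Z_{00}$ equal to $+1$ and a single $-1$ placed on $Z_{01}$, $Z_{10}$, or $Z_{11}$ respectively. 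This is the only structural fact about signs that I need.

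Next I will interpret each $Z_{F_n, D_{\theta'\tau'}}$ via the measure-preserving correspondence of Section 3 as the partition function of the ferromagnetic Ising model on $G_n$ with the appropriate combination of periodic/antiperiodic boundary conditions along $\gamma_x$ and $\gamma_y$. I then prove $Z_{F_n,D_{00}}>Z_{F_n,D_{\theta'\tau'}}$ for $(\theta',\tau')\neq (0,0)$ by the standard $\cosh/\sinh$ expansion along the defect loop $D$ that flips from periodic to antiperiodic: adding and subtracting the two boundary conditions gives
\[Z_{00}-Z_{\theta'\tau'}=\Bigl[\prod_{e\in D}2\sinh(\beta J_e)\Bigr]\,Z_{\mathrm{reg}}\,\bigl\langle \sigma_{\partial D}\bigr\rangle_{\mathrm{reg}},\]
where $Z_{\mathrm{reg}}$ is the Ising partition function with the defect bonds removed and $\partial D=\emptyset$ because $D$ is a cycle on the torus, so $\langle\sigma_{\partial D}\rangle_{\mathrm{reg}}=1$. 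Since $\beta J_e>0$, the right-hand side is strictly positive.

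Combining the two ingredients yields, for instance, $\mathrm{Pf}\,K_{F_n}(-1,1)=(Z_{00}-Z_{10})+Z_{01}+Z_{11}>0$, and analogously for the other two non-$(0,0)$ sign choices. Hence $P_{F_n}((-1)^\theta,(-1)^\tau)\neq 0$ for $(\theta,\tau)\neq(0,0)$, and by Proposition 3.1 the same holds for $\tilde F_n$. Together with Lemma 4.3 this leaves $(1,1)$ as the only possible real node on $\mathbb{T}^2$. The step I expect to require the most care is pinning down the exact sign pattern of the three off-diagonal Pfaffians for the particular orientation of Figure 1; this is essentially bookkeeping but must be consistent with the sign convention used to derive (3.9) and the duality formula (\ref{dualitysignchange}).
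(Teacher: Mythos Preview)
Your overall architecture matches the paper: reduce via Lemma~4.3 to the four corners, write the three non-$(1,1)$ Pfaffians as $Z_{00}+Z_{?}+Z_{?}-Z_{?}$ with the sign pattern of (\ref{pf11})--(\ref{pf10}), observe each $Z_{\theta'\tau'}>0$, and finish by proving $Z_{00}>Z_{\theta'\tau'}$ for $(\theta',\tau')\neq(0,0)$. The gap is in your argument for this last inequality.

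Your displayed identity for $Z_{00}-Z_{\theta'\tau'}$ is incorrect, and the mistake is conceptual, not just bookkeeping. The defect set $D$ is the set of \emph{primal} edges of $G_n$ whose couplings are negated to implement the antiperiodic boundary condition; these are the edges of $G_n$ crossed by the dual cycle $\gamma_x$ (or $\gamma_y$). That set is a cut of $G_n$, not a cycle, so $\partial D\neq\emptyset$: it consists of the two full rows (or columns) of vertices adjacent to the seam. Moreover, expanding $\prod_{e\in D}(\cosh\beta J_e\pm\sigma_u\sigma_v\sinh\beta J_e)$ and subtracting gives
\[
Z_{00}-Z_{\theta'\tau'}=2\,Z_{\mathrm{reg}}\sum_{\substack{S\subseteq D\\|S|\ \mathrm{odd}}}\Bigl(\prod_{e\in D\setminus S}\cosh\beta J_e\Bigr)\Bigl(\prod_{e\in S}\sinh\beta J_e\Bigr)\bigl\langle\sigma_{\partial S}\bigr\rangle_{\mathrm{reg}},
\]
a sum over all odd subsets, not the single term $S=D$. (And since $n$ is even, $|D|=n$ is even, so $S=D$ is not even present in the sum.) Your conclusion is nevertheless salvageable: each summand is nonnegative by the first Griffiths inequality applied to the ferromagnetic measure on $G_n\setminus D$, and taking $S=\{e\}$ for a single seam edge $e=uv$ gives a strictly positive term because $u$ and $v$ remain connected in the cylinder $G_n\setminus D$, so $\langle\sigma_u\sigma_v\rangle_{\mathrm{reg}}>0$ on the finite graph.

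For comparison, the paper proves $Z_{00}>Z_{\theta'\tau'}$ by a different and shorter route: it invokes the duality identity (\ref{duality}) to rewrite each $Z_{\theta'\tau'}$, up to the same positive constant, as the dimer partition function $Z_{\tilde F_n,D}((-1)^{\tau'},(-1)^{\theta'})$ on the dual Fisher graph. These are all weighted sums over the same set of dimer configurations with the same absolute term weights; only $Z_{\tilde F_n,D}(1,1)$ has all edge weights positive, so it strictly dominates the others. This avoids any correlation inequality and uses nothing beyond the sign structure already established in Section~3.
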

\begin{proof}It suffices to prove that under the assumption of the
lemma, $\mathrm{Pf}K(1,-1)\neq 0$, $\mathrm{Pf}K(-1,1)\neq 0$ and
$\mathrm{Pf}K(-1,-1)\neq 0$. Given an orientation as in Figure 1,
\begin{equation}
\mathrm{Pf}K(-1,-1)=Z_{D,00}+Z_{D,01}+Z_{D,10}-Z_{D,11},\label{pf11}
\end{equation}
where $Z_{D,01}$ is the partition function of dimer configurations
with an even number of occupied edges crossed by $\gamma_x$ and an
odd number of occupied edges crossed by $\gamma_y$.
$Z_{D,00},Z_{D,10},Z_{D,11}$ are defined similarly. The expression
of $P(-1,-1)$ follows from the fact that if we reverse the
orientation of all the edges crossed by $\gamma_x$ and $\gamma_y$,
the resulting orientation is an crossing orientation, as defined on
Page 4. Similarly, we have
\begin{eqnarray}
\mathrm{Pf}K(1,-1)&=&Z_{D,00}+Z_{D,01}-Z_{D,10}+Z_{D,11}\label{pf01},\\
\mathrm{Pf}K(-1,1)&=&Z_{D,00}-Z_{D,01}+Z_{D,10}+Z_{D,11}\label{pf10}.
\end{eqnarray}
$Z_{D,01}$ also corresponds to an Ising partition function with the
opposite spin configurations along the two boundaries parallel to
$\gamma_y$, and the same spin configuration along the two boundaries
parallel to $\gamma_x$. After the duality transformation, $Z_{D,01}$
becomes, up to a constant multiple, the partition function of dimer
configurations on the dual Fisher graph $\tilde{F}$, with weights of
edges crossed by $\gamma_x$ multiplied by $-1$, see (\ref{duality}). Since the duality transformation does not change the partition function, up to a multiplicative constant, and after the duality transformation $Z_{D,00}$, $Z_{D,01}$, $Z_{D,10}$, $Z_{D,11}$ are partition function of the dimer model on the same graph with the same edge weights except on the boundary edges, and $Z_{D,00}$ is the only one of them such that all the edge weights are positive,
as a result, $Z_{D,00}$ is the biggest of
$\{Z_{D,\theta\tau}\}_{\theta,\tau\in\{0,1\}}$, because all the
others are partition function of dimer configurations of the same
graph with negative weights. As a result of (\ref{pf11}),
(\ref{pf01}) and (\ref{pf10}), and the fact that each one of $\{Z_{D,\theta\tau}\}_{\theta,\tau\in\{0,1\}}$ is positive as the partition function of dimer model on a graph with only positive edge weights,
\[\mathrm{Pf}K((-1)^{\theta},(-1)^{\tau})>0,\ if\ (\theta,\tau)\neq (0,0).\]
\end{proof}

\begin{lemma}Assume $n$ is even. For any given
$\{J_k\}_{k=1}^{2n^2}$, $0<J_k<\infty$, the curve
\[\gamma(t)=(\tanh tJ_1,\cdots,\tanh tJ_{2n^2})\qquad 0\leq t\leq+\infty\]
in the edge weight space(a $2n^2$-dimensional vector space with
coordinates given by edge weights) intersects $P(1,1)=0$ at a unique
point $\gamma(t_0)$, $0<t_0<+\infty$.
\end{lemma}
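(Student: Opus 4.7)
The plan is to combine Lemmas~4.12 and 4.13 for existence and the subcritical range, and to handle the supercritical range by turning the node condition into exponential decay of a dimer correlation via the Fisher correspondence. Since $\tanh(tJ_k)\in(0,1)$ for every $t\in(0,\infty)$ and each $J_k>0$, Lemma~4.13 identifies the condition $\gamma(t)\in\{P(1,1)=0\}$ with the spectral curve of the Fisher-graph dimer model at weights $\tanh(tJ_k)$ having a real node on $\mathbb{T}^2$. Lemma~4.12 then asserts that the smallest such $t$ equals $\beta_c\in(0,\infty)$, so existence holds with $t_0=\beta_c$ and no intersection occurs in $(0,\beta_c)$.

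For $t>\beta_c$ the plan is to prove directly that no real node exists. I would work in the pure $+$-phase of the periodic ferromagnetic Ising model with interactions $\{tJ_k\}$ and invoke the classical two-dimensional supercritical estimate that the truncated two-point function $\langle\sigma_i\sigma_j\rangle_+-\langle\sigma_i\rangle_+\langle\sigma_j\rangle_+$ decays exponentially in $|i-j|$. Running the FKG chain from the proof of Lemma~4.10 with $\langle\cdot\rangle_+$ in place of the periodic-b.c.\ expectation promotes this to exponential decay of the four-point truncated $\rho$-correlations, hence to exponential decay of the edge-edge correlation of the associated dimer model on $F_n$ with weights $e^{-2tJ_k}$. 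Because the measure-preserving Fisher correspondence of Figure~2 is two-to-one and $\mathbb{Z}_2$-equivariant, the induced dimer Gibbs measure is the same whether built from the $+$-phase, the $-$-phase, or the symmetric periodic-b.c.\ mixture, and so coincides with the measure produced by Lemma~4.8. Lemma~4.9 therefore forces $P_{F_n}(z,w)\neq 0$ on $\mathbb{T}^2$ at weights $e^{-2tJ_k}$, and Proposition~3.1 transports this to $P_{\tilde F_n}(1,1)\neq 0$ along $\gamma(t)$ for every $t>\beta_c$. Combined with the subcritical step this singles out $t_0=\beta_c$ as the unique intersection.

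The hard part is the supercritical classical input---exponential decay of the truncated two-point function in the pure $+$-phase for $t>\beta_c$ in a general periodic ferromagnetic two-dimensional Ising model. The ABF reference used elsewhere in the paper only treats $t<\beta_c$, so the complementary supercritical statement has to be either borrowed from the classical two-dimensional Ising literature or argued independently, e.g.\ by enlarging the fundamental domain to reduce to a finite-range translation-invariant interaction. Once that ingredient is in hand, the remaining pieces---the FKG bound from Lemma~4.10, the two-to-one property of the Fisher correspondence, Lemma~4.9, and the duality of Proposition~3.1---combine mechanically to conclude the argument.
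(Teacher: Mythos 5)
Your proposal takes a route that is genuinely different from the paper's, and unfortunately it has a gap that the paper's own argument carefully avoids.

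The paper's proof is a self-contained \emph{finite-volume} algebraic argument on the torus $G_n$ that never invokes any infinite-volume Gibbs state or correlation-decay estimate. Existence of an intersection is obtained by evaluating the sign of $\mathrm{Pf}K(1,1)$ at the two ends of the curve: at $t=0$ the weights are all zero so $\mathrm{Pf}K(1,1)=1>0$, while at $t=\infty$ the weights are all $1$ and the duality identity~(\ref{dualitysignchange}) forces $\mathrm{Pf}K(1,1)<0$, giving a sign change. Uniqueness is then obtained by pushing the problem through the duality transformation and introducing $g(t)=2\prod_k\exp(tJ_k)\,\mathrm{Pf}K(1,1)\big|_{\text{weights }e^{-2tJ_k}}$, which is an explicit alternating sum of finite-volume Ising partition functions with twisted boundary conditions. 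A finite-graph Griffiths/FKG inequality, namely $|\langle\sigma_u\sigma_v\rangle_{\theta\tau}|<\langle\sigma_u\sigma_v\rangle_{00}$ for $(\theta,\tau)\neq(0,0)$, yields $g'(t)>0$ whenever $g(t)\geq 0$, so after the first zero of $g$ the function stays strictly positive, and the zero is unique. This is the mechanism you are missing.

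The gap in your argument is exactly the one you flag: the supercritical regime $t>\beta_c$. Exponential decay of the truncated two-point function in the pure $+$-phase for a general periodic ferromagnetic 2D Ising model is a substantial theorem in its own right; the paper cites ABF only for the subcritical side, and nothing in the paper supplies (or is replaced by) the supercritical estimate. Leaving that as ``borrow from the classical literature'' means the proof of this lemma is not complete within the framework the paper sets up, and undercuts the self-contained character of the result. There is also a subtler circularity risk in the existence step: to use Lemma~4.12 to place the first intersection at $t_0=\beta_c$, you implicitly need $\beta_c<\infty$, but within the paper the bound $\beta_c<\infty$ is precisely a byproduct of the present lemma's finite-volume sign-change argument, not of Lemma~4.12. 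Relying on Lemma~4.12 for existence therefore either assumes what is to be proved or imports yet another external input (a Peierls-type argument). In short, the strategy could in principle be completed, but only by adding two nontrivial classical inputs that the paper's own finite-volume argument renders unnecessary.
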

\begin{proof}
First of all, we claim that
\[\gamma(t)\bigcap \{P(1,1)=0\}\neq \emptyset.\]
Note that $P(1,1)=(\mathrm{Pf}K(1,1))^2$. Given edge weights
$\gamma(0)=(0,\cdots,0)$, $\mathrm{Pf}K(1,1)=1$. Consider the edge
weights $\lim_{t\rightarrow+\infty}\gamma(t)=(1,\cdots,1)$. After
the duality transformation, all the edges parallel to $\gamma_x$,
$\gamma_y$ have weight 0. Since $Z_{D,01}$, $Z_{D,10}$, $Z_{D,11}$
corresponds to dimer configurations by negating the weights
$z-$($w-$,$z-$ and $w-$) edges, and all such edges have weight 0, we
obtain
\[Z_{D,00}=Z_{D,01}=Z_{D,10}=Z_{D,11}=2^{n^2-1},\]
for weights $(1,\cdots,1)$, see (\ref{duality}). Hence
$\lim_{t\rightarrow+\infty}\mathrm{Pf}K(1,1)<0$. There exists
$0<t_0<\infty$, such that $\gamma(t_0)\in\{P(1,1)=0\}$.

Now we prove the uniqueness of the intersection. Since $P(1,1)=0$ is
invariant under the duality transformation, i.e, any edge weights
lie on the surface $P(1,1)=0$ if and only if the images after
duality transformation lie on the surface $P(1,1)=0$, it suffices to
prove that the image of $\gamma(t)$
\[\gamma^*(t)=(\exp(-2tJ_1),\cdots,\exp(-2tJ_{2n^2}))\]
intersects $P(1,1)=0$ at a unique point. Assume $f(t)$ is the
function obtained by plugging edge weights
$(\exp(-2tJ_1),\cdots,\exp(-2tJ_{2n^2}))$ into $\mathrm{Pf}K(1,1)$, and
$g(t)=2\prod_{k=1}^{2n^2}\exp(tJ_k)f(t)$, it suffices to prove that
there is a unique $t_0$ in $(0,+\infty)$ such that $g(t_0)=0$. Given
interactions $(tJ_1,\cdots,tJ_{2n^2})$ on edges of the quotient
graph of $G_n$, let $Z_{G_n,I}^{00}$ be the partition function of
the Ising model on $G_n$. Let
$Z_{G_n,I}^{10}(Z_{G_n,I}^{01},Z_{G_{n,I}}^{10},Z_{G_{n,I}}^{11})$
be the partition function of an Ising model, with the interactions
on edges crossed by $\gamma_x$($\gamma_y$, $\gamma_x$ and
$\gamma_y$) multiplied by $-1$. Then we have
\begin{eqnarray*}
g(t)&=&Z_{G_n,I}^{00}-Z_{G_n,I}^{01}-Z_{G_n,I}^{10}-Z_{G_n,I}^{11},\\
g'(t)&=&Z_{G_n,I}^{00}\sum_{e=uv\in
E(G_n)}J_e^{00}\langle\sigma_u\sigma_v\rangle_{00}-Z_{G_n,I}^{01}\sum_{e=uv\in
E(G_n)}J_e^{01}\langle\sigma_u\sigma_v\rangle_{01}\\&&-Z_{G_n,I}^{10}\sum_{e=uv\in
E(G_n)}J_e^{10}\langle\sigma_u\sigma_v\rangle_{10}-Z_{G_n,I}^{11}\sum_{e=uv\in
E(G_n)}J_e^{11}\langle\sigma_u\sigma_v\rangle_{11},
\end{eqnarray*}
where $\{J_e^{\theta,\tau}\}_{e\in E(G_n)}$ are corresponding
interactions for $Z_{G_n,I}^{\theta\tau}$ and
$\langle\quad\rangle_{\theta\tau}$ are the expected values with
respect to Boltzmann measure defined by $\{J_e^{\theta\tau}\}_{e\in
E(G_n)}$. Let $E_0$ be the subset of $E(G_n)$ consisting of edges
intersecting neither $\gamma_x$ nor $\gamma_y$, and $E_x(E_y)$ be
subsets of $E(G_n)$ consisting of edges crossed by
$\gamma_x(\gamma_y)$.
\begin{eqnarray*}
g'(t)&=&\sum_{e=uv\in
E_0}J_e^{00}(Z_{G_n,I}^{00}\langle\sigma_u\sigma_v\rangle_{00}-Z_{G_n,I}^{01}\langle\sigma_u\sigma_v\rangle_{01}-Z_{G_n,I}^{10}\langle\sigma_u\sigma_v\rangle_{10}-Z_{G_n,I}^{11}\langle\sigma_u\sigma_v\rangle_{11})\\
&&+\sum_{e=uv\in
E_x}J_e^{00}(Z_{G_n,I}^{00}\langle\sigma_u\sigma_v\rangle_{00}-Z_{G_n,I}^{01}\langle\sigma_u\sigma_v\rangle_{01}+Z_{G_n,I}^{10}\langle\sigma_u\sigma_v\rangle_{10}+Z_{G_n,I}^{11}\langle\sigma_u\sigma_v\rangle_{11})\\
&&+\sum_{e=uv\in
E_y}J_e^{00}(Z_{G_n,I}^{00}\langle\sigma_u\sigma_v\rangle_{00}+Z_{G_n,I}^{01}\langle\sigma_u\sigma_v\rangle_{01}-Z_{G_n,I}^{10}\langle\sigma_u\sigma_v\rangle_{10}+Z_{G_n,I}^{11}\langle\sigma_u\sigma_v\rangle_{11}).
\end{eqnarray*}
We claim that for $(\theta,\tau)\neq (0,0),$
\begin{equation}
|\langle\sigma_u\sigma_v\rangle_{\theta\tau}|<\langle\sigma_u\sigma_v\rangle_{00}\label{parityinequality}.
\end{equation}
First of all,
\begin{equation}
\langle\sigma_u\sigma_v\rangle_{\theta\tau}<\langle\sigma_u\sigma_v\rangle_{00},\label{parityinequality1}
\end{equation}
by F.K.G inequality. The inequality (\ref{parityinequality1}) is
strict because the graph is finite. Moreover,
\begin{eqnarray*}
\langle\sigma_u\sigma_v\rangle_{\theta\tau}+\langle\sigma_u\sigma_v\rangle_{00}&=&\frac{1}{Z_{G_n,I}^{00}Z_{G_n,I}^{\theta\tau}}\sum_{\sigma,\sigma'}[(\sigma_u
\sigma_v+\sigma'_u\sigma'_v)\exp(\sum_{e\in
E(G_n)}(J_e^{\theta\tau}+J_e^{00}))\sigma_u\sigma_v]\\
&=&\frac{1}{Z_{G_n,I}^{00}Z_{G_n,I}^{\theta\tau}}\sum_{t}(1+t_ut_v)\sum_{\sigma}[\sigma_u
\sigma_v\exp(\sum_{e\in
E(G_n)}(J_e^{\theta\tau}+J_e^{00}))\sigma_u\sigma_v],
\end{eqnarray*}
where we have introduced the Ising variables $\sigma'_i=\pm1$ an
$t_i=\sigma'_i\sigma_i$. Since $J_e^{00}+J_e^{\theta\tau}\geq 0$,
the ferromagnetic condition and finiteness of $G_n$ imply
\begin{equation}
\langle\sigma_u\sigma_v\rangle_{\theta\tau}+\langle\sigma_u\sigma_v\rangle_{00}>0.\label{parityinequality2}
\end{equation}
(\ref{parityinequality}) follows from (\ref{parityinequality1}) and
(\ref{parityinequality2}). Hence
\[g'(t)>(Z_{G_n,I}^{00}-Z_{G_n,I}^{01}-Z_{G_n,I}^{10}-Z_{G_n,I}^{11})\sum_{e=uv\in E(G_n)}J_e^{00}\langle\sigma_u\sigma_v\rangle_{00}.\]
Since $J_e^{00}>0$, $\langle\sigma_u\sigma_v\rangle_{00}>0$.
\begin{eqnarray*}
Z_{G_n,I}^{00}-Z_{G_n,I}^{01}-Z_{G_n,I}^{10}-Z_{G_n,I}^{11}=\frac{\mathrm{Pf}K(1,1)}{2\prod_{k=1}^{2n^2}\exp(tJ_k)}.
\end{eqnarray*}
We have if $\mathrm{Pf}K(1,1)\geq 0$, $g'(t)>0$, $g(t)$ is strictly
increasing as $t$ increases. Since the duality transformation changes the sign of $\mathrm{Pf}K(1,1)$, we have $g(0)<0$, see (\ref{dualitysignchange}), and $\lim_{t\rightarrow\infty}g(t)>0$. When $t$ increases from 0 to $\infty$, originally $g(t)$ is negative, and there $g(t)=0$ is possible for some $t$. Consider the first such $t$ that $g(t)=0$. From then on, $g(t)$ will always be positive since $g'(t)>0$. As a result, there is a unique $t_0$,
such that $g(t_0)=0$. This proves the lemma.
\end{proof}

\noindent \textbf{Proof of Theorem 1.1} Theorem 1.1 follows directly
from Lemmas 4.11, 4.12 and 4.13.

\begin{example}($1\times 2$ Ising model) Consider an Ising model on
a square grid whose interactions have period $1\times 2$, as
illustrated in Figure 5.
\begin{figure}[htbp]
  \centering
\includegraphics*{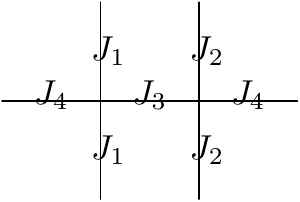}\qquad\rotatebox{40}{\includegraphics*{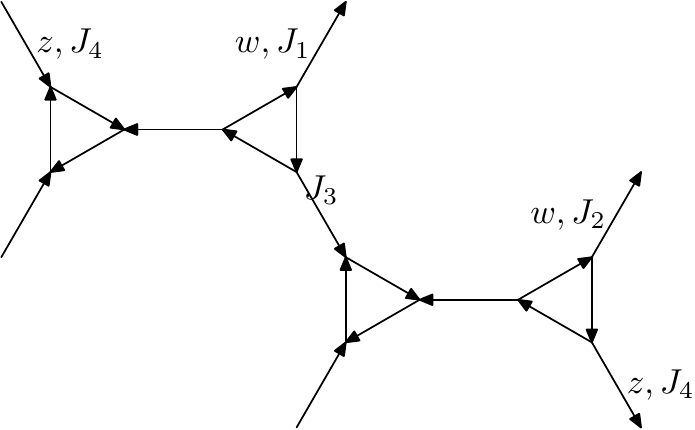}}
   \caption{$1\times 2$ Ising model and Fisher graph}
\end{figure}
The critical reciprocal temperature $\beta_c$ is the solution of the
following equation:
\begin{eqnarray*}
P(z=1,w=-1)&=&\det K(1,-1)\\
&=&{\left[b_2+c_1b_2c_2-b_1b_2+c_1c_2b_1b_2-1+c_2c_1+b_1+c_1b_1c_2\right]^2}=0,
\end{eqnarray*}
where
\begin{eqnarray*}
b_1=\tanh(\beta_cJ_4),\qquad b_2=\tanh(\beta_c J_3),\\
c_1=\tanh(\beta_cJ_1),\qquad c_2=\tanh(\beta_c J_2).
\end{eqnarray*}
In this case, $w=-1$, because the size of the period is odd$\times$even.
\end{example}

\section{Simple Proof for the Exponential Decay of Spin
Correlations with Symmetric Interactions}

The goal of this section is to prove the exponential decay of
spin-spin correlations for symmetric interactions above the critical
temperature. Namely
\begin{eqnarray*}
\langle\sigma_i\sigma_j\rangle\leq e^{-\alpha|i-j|},
\end{eqnarray*}
where $\alpha>0$ is a constant. The symmetric interactions are those who are invariant under the reflexion by the horizontal axis $y=0$ and the vertical axis $x=0$, see Figure 6.
\begin{figure}[htbp]
  \centering
\includegraphics{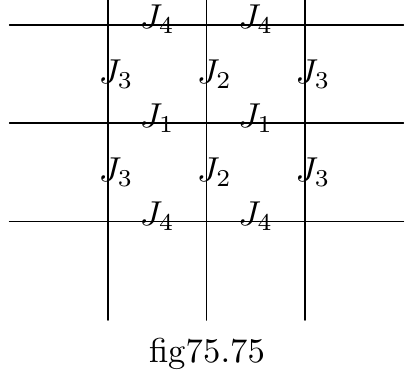}
   \caption{$1\times 2$ Ising model and Fisher graph}
\end{figure}
Lemma 5.1 proves the exponential decay along the lattice direction,
and Lemma 5.2 proves that the spin-spin correlations along arbitrary
directions are dominated by those along the lattice direction. Hence
the result follows.

\begin{lemma} Under the assumption that $\det K(z,w)\neq 0$
\[\langle\sigma_{00}\sigma_{0N}\rangle^2-\lim_{N\rightarrow\infty}\langle\sigma_{00}\sigma_{0N}\rangle^2\leq Ke^{-\alpha N},\]
where $0<K,\alpha<\infty$ are positive constants.
\end{lemma}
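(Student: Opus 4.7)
The plan is to combine the block-Toeplitz representation from Theorem~4.2 with a quantitative refinement of Widom's theorem (Lemma~4.4), exploiting the fact that in the non-critical regime $\det K(z,w)\neq 0$ on $\mathbb{T}^2$ the Toeplitz symbol $\psi$ is analytic on an annulus around $|\zeta|=1$, so Widom's asymptotic acquires an exponentially small error.

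First I would unpack the known formulas. With $M=N/l_0$ (one may pass to a subsequence so that this is an integer), Theorem~4.2 gives
\[
\langle\sigma_{00}\sigma_{0N}\rangle^{2}=\Pi^{M}\det T_M[\psi],\qquad \Pi:=\prod_{l=1}^{l_0}(1-\tau_{e_l}^2)^{2},
\]
while Lemma~4.5 yields $\det\psi(e^{i\theta})\equiv\Pi^{-1}$, so $G[\psi]=\Pi^{-1}$. Widom's theorem then identifies
\[
L:=\lim_{N\to\infty}\langle\sigma_{00}\sigma_{0N}\rangle^{2}=\Pi\cdot E[\psi]=\Pi\cdot\det\bigl(I-H[\psi]H[\tilde\psi^{-1}]\bigr).
\]
Thus the task reduces to the quantitative Szegő-type bound
\[
\bigl|\det T_M[\psi]/G[\psi]^{M+1}-E[\psi]\bigr|\le C\rho^{M}
\]
for some $\rho\in(0,1)$; multiplying by $\Pi^{M}G[\psi]=\Pi$ then turns this into $\langle\sigma_{00}\sigma_{0N}\rangle^{2}-L\le Ke^{-\alpha N}$ with $\alpha=-(\log\rho)/l_0$.

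Second, I would verify the analyticity of $\psi$. Since $\det K(z,w)$ is a Laurent polynomial in $z,w$ nonvanishing on the compact torus $\mathbb{T}^{2}$, it remains nonzero on a polyannulus $\{(z,w):r\le|z|,|w|\le R\}$ with $r<1<R$. Each entry of $K(z,w)^{-1}$ is therefore analytic on that region, and since integrating in $\phi_2$ around $|w|=1$ preserves analyticity in $\zeta$, the matrix $\psi(\zeta)$ extends analytically to $r<|\zeta|<R$; as $\det\psi$ is a nonzero constant, so does $\psi(\zeta)^{-1}$. Cauchy's estimate then produces exponentially decaying Fourier coefficients,
\[
\|\psi_k\|,\ \|(\psi^{-1})_k\|\le C_0\,\rho_0^{|k|},\qquad \rho_0:=\max(r,R^{-1})<1.
\]

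Finally, this exponential decay converts Widom's limit into a rate. The Hankel entries $(H[\psi])_{ij}=\psi_{i+j+1}$ then decay like $\rho_0^{i+j+1}$, so both $H[\psi]$ and $H[\tilde\psi^{-1}]$ are trace class, and the finite-section truncation $P_M$ onto the first $M$ block indices satisfies $\|H[\psi]-P_MH[\psi]P_M\|_1=O(\rho_0^{M})$. Combining the identity $T[\psi]T[\psi^{-1}]=I-H[\psi]H[\tilde\psi^{-1}]$ from the proof of Lemma~4.7, the block-determinant factorization expressing $\det T_M[\psi]/G[\psi]^{M+1}$ in terms of the compressed operator, and the trace-class Lipschitz bound $|\det(I-A)-\det(I-B)|\le\|A-B\|_1\exp(1+\|A\|_1+\|B\|_1)$ which follows from Lemma~4.6, I obtain the required estimate. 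The main obstacle is precisely this last step: the quantitative strong Szegő theorem for matrix-valued symbols requires careful bookkeeping of trace norms through the finite-section truncation, but once analyticity of $\psi$ in an annulus is in hand, the desired exponential rate is classical and all constants can in principle be extracted in terms of $r,R$ and the coefficients of $\det K$.
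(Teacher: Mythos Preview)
Your outline is correct and follows the same route as the paper: block-Toeplitz representation, Lemma~4.5 to identify $G[\psi]$, analyticity of $\psi$ on an annulus from the nonvanishing of $\det K$ on $\mathbb{T}^2$, exponential decay of Fourier coefficients, and a trace-class comparison of finite and infinite Fredholm determinants. The place where you are vague is exactly where the paper does the real work.

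The paper does not simply compare $\det T_M[\psi]$ to a compression $P_M(I-H[\psi]H[\tilde\psi^{-1}])P_M$; that compression is \emph{not} equal to the finite determinant. Instead it invokes the Wiener--Hopf factorization $\psi=\psi_+\psi_-$ (valid when $T[\tilde\psi]$ is invertible) and derives the exact finite-$n$ identity
\[
T_n[\psi]\,T_n[\psi_-^{-1}]\,T_n[\psi_+^{-1}]
= I_n-P_nH[\psi]H[\tilde\psi_-^{-1}]P_nT[\psi_+^{-1}]P_n,
\]
which, because $T_n[\psi_\pm^{-1}]$ are block-triangular, gives $\det T_n[\psi]/G[\psi]^{n+1}=\det(I-A_n)$ with $A_n=P_nH[\psi]H[\tilde\psi_-^{-1}]P_nT[\psi_+^{-1}]P_n$. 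Only then does the exponential decay of Hankel tails yield $\|A_n-A\|_1\le Ke^{-\alpha n}$ with $A=H[\psi]H[\tilde\psi^{-1}]$. Your phrase ``the block-determinant factorization expressing $\det T_M[\psi]/G[\psi]^{M+1}$ in terms of the compressed operator'' hides precisely this step; without the factorization you have no identity to which your Lipschitz bound applies. The paper also treats the case where $T[\tilde\psi]$ is not invertible (so no factorization exists) by an analytic perturbation $\psi\mapsto\psi+\varepsilon\phi$ with $\phi$ a trigonometric polynomial, a point your proposal omits.

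One minor remark: your Lipschitz bound $|\det(I-A)-\det(I-B)|\le\|A-B\|_1\exp(1+\|A\|_1+\|B\|_1)$ is a perfectly good substitute for the paper's contour--integral argument based on Lemma~4.6, but it is not itself a consequence of Lemma~4.6; it is a separate (standard) trace-class determinant estimate. Either route closes the proof once the Wiener--Hopf identity is in place.
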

\begin{proof} According to (\ref{spincorrelation}), it suffices to consider the asymptotic behavior of $\prod_{0\leq l\leq N-1}(1-\tau_{e_l}^2)^2\det
T_N[\psi]$, as $N$ large. This is the determinant of a block Toeplitz matrix, obtained from $T_N$ by multiplying the $j$th row ($1\leq j\leq N$) by $(1-\tau_{e_{j-1}}^2)^2$. The determinant of its
symbol is $1$, according to Lemma 4.5, since $\tau_{e_j}$ changes periodically.  In the proof of this lemma, we denote the new
Toeplitz matrix by the same notation $T_N$, and the new symbol by
the same notation $\psi$, which is the original $\psi$ multiplied by
a constant for each row. Assume
\begin{eqnarray*}
\tilde{\psi}&=&\psi(z^{-1}),\\
T[\psi]&=&\lim_{n\rightarrow\infty}T_n[\psi],\\
T[\tilde{\psi}]&=&\lim_{n\rightarrow\infty}T_n[\tilde{\psi}].
\end{eqnarray*}
Define $A_r\bigcap K_r$ the Banach algebra under the norm
\begin{eqnarray*}
\|\phi\|=\sum_{k=-\infty}^{\infty}\|\phi_k\|+\{\sum_{k=-\infty}^{\infty}|k|\|\phi_k\|^2\}^{\frac{1}{2}},
\end{eqnarray*}
where $\phi$ is an $r\times r$ matrix-valued function. $\|\cdot\|$
on the right side is any norm for $r\times r$ matrices, and $\phi_k$
is the $k$th Fourier coefficient of $\phi$. First consider the case
that the Toeplitz operator $T[\tilde{\psi}]$ is invertible, which is
equivalent to the matrix-valued function $\psi$ has a factorization
\[\psi=\psi_{+}\psi_{-},\]
where $\psi_{\pm}$ are invertible in $A_r\bigcap K_r$, and
$\psi_{+}$ and $\psi_{-}$ have Fourier coefficients that
vanish for negative, resp. positive indices. The Hankel operator
$H[\psi]$ is Hilbert-Schmidt, the product of any two such is of
trace class.
\begin{eqnarray}
H[\psi]
&=&(\psi_{i+j+1}),\ \ \ \ \ 0\leq i,j\leq \infty,\\
T_n[\psi\phi]-T_n[\psi]T_n[\phi]&=&P_nH[\psi]H[\tilde{\phi}]P_n+Q_nH[
\tilde{\psi}]H[\phi]Q_n, \label{toeplitzhankel}
\end{eqnarray}
where $P_n$ and $Q_n$ are defined by
\begin{eqnarray*}
P_n(f_0,f_1,...)=(f_0,...,f_n,0,...,0),\\
Q_n(f_0,f_1,...)=(f_n,...,f_0,0,...,0).
\end{eqnarray*}
If we replace $\phi$ by $\psi_{-}^{-1}$ in  (\ref{toeplitzhankel}),
since $H[\psi_{-}^{-1}]=0$, we have
\[T_n[\psi_{+}]-T_n[\psi]T_n[\psi_{-}^{-1}]=P_nH[\psi]H[\tilde{\psi}_{-}^{-1}]P_n.\]
Multiply the above equation by $T_n[\psi_{+}^{-1}]$, we have
\[T_n[\psi_{+}]T_n[\psi_{+}^{-1}]-T_n[\psi]T_n[\psi_{-}^{-1}]T_n[\psi_{+}^{-1}]=P_nH[\psi]H[\tilde{\psi}_{-}^{-1}]P_nT_n[\psi_{+}^{-1}].\]
Use $\psi_{+},\psi_{+}^{-1}$ to substitute $\psi,\phi$ in
(\ref{toeplitzhankel}), we have
\[I_n-T_n[\psi_{+}]T_n[\psi_{+}^{-1}]=P_nH[\psi_{+}]H[\tilde{\psi}_{+}^{-1}]P_n+Q_nH[\tilde{\psi}_{+}]H[\psi_{+}^{-1}]Q_n\]
Since $H[\tilde{\psi}_{+}^{-1}]=0$ and $H[\tilde{\psi}_{+}]=0$, we
have
\[T_n[\psi_{+}]T_n[\psi_{+}^{-1}]=I_n.\]
Therefore
\begin{eqnarray*}
T_n[\psi]T_n[\psi_{-}^{-1}]T_n[\psi_{+}^{-1}]&=&I_n-P_nH[\psi]H[\tilde{\psi}_{-}^{-1}]P_nT_n[\psi_{+}^{-1}]\\
&=&P_n(I-H[\psi]H[\tilde{\psi}_{-}^{-1}]P_nT[\psi_{+}^{-1}])P_n.
\end{eqnarray*}
Since $T_n[\psi_{\pm}^{-1}]$ are block triangular matrices, one sees
that the left side has determinant exactly
\[D_n[\psi]G[\psi_{-}^{-1}]^{n+1}G[\psi_{+}^{-1}]^{n+1}=\frac{D_n[\psi]}{G[\psi]^{n+1}},\]
where $D_n[\psi]$ is the determinant of $T_n[\psi]$, and $G$ is defined as in Lemma 4.4. Moreover, explicit computations show that
\[H[\psi]H[\tilde{\psi}_{-}^{-1}]T[\psi_{+}^{-1}]=H[\psi]H[\tilde{\psi}^{-1}].\]
Let $\|\cdot\|_{1}$ denote the norm of operators in
$\mathfrak{F}_1$(see Lemma 4.6, and the appendix). The uniform norm $|\cdot|$ of an operator $A$  is the number
\begin{eqnarray*}
|A|=\sup_{|\phi|=1}|A\phi|.
\end{eqnarray*}
We have
\begin{eqnarray*}
&&\|P_nH[\psi]H[\tilde{\psi}_{-}^{-1}]P_nT[\psi_{+}^{-1}]P_n-H[\psi]H[\tilde{\psi}^{-1}]\|_{1}\\
&\leq&\|P_nH[\psi]H[\tilde{\psi}_{-}^{-1}]P_nT[\psi_{+}^{-1}]P_n-P_nH[\psi]H[\tilde{\psi}_{-}^{-1}]T[\psi_{+}^{-1}]P_n\|_1\\
&&+\|P_nH[\psi]H[\tilde{\psi}^{-1}]P_n-H[\psi]H[\tilde{\psi}^{-1}]\|_1\\
&\leq&
Ke^{-\alpha n},
\end{eqnarray*}
where $0<K,\alpha<\infty$ are positive constants. To see the last bound, note that $H[\tilde{\psi}_{-}^{-1}]P_n-H[\tilde{\psi}_{-}^{-1}]$ has the form $\left(\begin{array}{cc}0&*\\0&*\end{array}\right)$, where the nonzero entries correspond to columns of $H[\tilde{\psi}_{-}^{-1}]$ with indices greater than $n$, which are exponentially small by definition, similarly for $P_nH[\tilde{\psi}]-H[\tilde{\psi}]$  .Define
\begin{eqnarray*}
A_n&=&P_nH[\psi]H[\tilde{\psi}_{-}^{-1}]P_nT[\psi_{+}^{-1}]P_n,\\
A&=&H[\psi]H[\tilde{\psi}^{-1}],
\end{eqnarray*}
then
\[\|A_n-A\|_1\leq Ke^{-\alpha n}.\]

Define
\begin{eqnarray*}
D_{A_n}(\mu)&=&\det(I-\mu A_n),\\
D_{A}(\mu)&=&\det(I-\mu A),\\
A(\mu)&=&A(I-\mu A)^{-1},\\
A_n(\mu)&=&A_n(I-\mu A_n)^{-1}.
\end{eqnarray*}
We shall call the complex number $\mu$ an F-regular point of the
operator $A$, if the operator $I-\mu A$ has an inverse. Let $\Gamma$
be a simple rectifiable contour, consisting of F-regular points of
the operator $A$, which encloses the point $\mu=0$ and the point
$\mu=1$. By virtue of the maximum modulus principle, if we can prove
that for any $\|A_n-A\|_1<Ke^{-\alpha n}$,
\[\max_{\mu\in\Gamma}|D_A(\mu)-D_{A_n}(\mu)|<CKe^{-\alpha n},\]
then
\[|D_A(1)-D_{A_n}(1)|<CKe^{-\alpha n}.\]
Let us denote by $L$ a simple rectifiable curve consisting of
F-regular points of the operator $A$, which connects the point
$\mu=0$ with some point of the contour $\Gamma$. Let
$\Gamma_\mu(\mu\in \Gamma\bigcup L)$ denote the shortest path along
the curves $\Gamma$ and $L$ which connects the point $\mu=0$ to the
point $\mu$. From the equality
\[(I-\mu A_n)^{-1}=(I-\mu A)^{-1}[I-\mu(A_n-A)(I-\mu A)^{-1}]^{-1},\]
it follows that when the condition
\[\|A_n-A\|_1<\min_{\mu\in\Gamma\bigcup L}[|\mu||(I-\mu A)^{-1}|]^{-1}\]
is fulfilled, all points $\mu$ of the curves $\Gamma\bigcap L$ are
F-regular points of the operator $A_n$ and
\[\max_{\mu\in\Gamma\bigcup L}|(I-\mu A_n)^{-1}|<C\max_{\mu\in \Gamma\bigcup L}|(I-\mu A)^{-1}|,\]
where $C$ is a constant depending only on the curves $\Gamma$ and
$L$ and the operator $A$. Since
\[A(\mu)-A_n(\mu)=(I-\mu A)^{-1}(A-A_n)(I-\mu A_n)^{-1},\]
we have
\[\|A(\mu)-A_n(\mu)\|_1\leq|(I-\mu
A)^{-1}|\|A-A_n\|_1|(I-\mu A_n)^{-1}|.\] for any $\mu_0\in
\Gamma\bigcup L$. Then
\[\left|\int_{\Gamma_{\mu_0}}tr(A(\mu)-A_n(\mu))d\mu\right|\leq \int_{\Gamma_{\mu_0}}\|A(\mu)-A_n(\mu)\|_1
d\mu<C\|A-A_n\|_1,\] where $C$ is another constant depending only on
the curves $\Gamma$ and $L$ and the operator $A$, and $tr(K)$ is the
sum of all eigenvalues of operator $K$(for details, see the appendix). 
$\Gamma_{\mu_0}$ is the shortest path along the curves $\Gamma$ and $L$ connecting the points $0$ and $\mu_0$. Then
\[|D_A(\mu_0)-D_{A_n}(\mu_0)|=\left|D_A(\mu_0)\left[1-\frac{D_{A_n}(\mu_0)}{D_A(\mu_0)}\right]\right|,\]
and
\begin{eqnarray*}
\left(\log\frac{D_{A_n}(\mu)}{D_A(\mu)}\right)'&=&\frac{D_{A_n}'(\mu)}{D_{A_n}(\mu)}-\frac{D_{A}'(\mu)}{D_{A}(\mu)}\\
&=&-\sum_{j=1}^{r(A)}\frac{\lambda_j(A)}{1-\mu\lambda_j(A)}+\sum_{j=1}^{r(A_n)}\frac{\lambda_j(A_n)}{1-\mu\lambda_j(A_n)}\\
&=&tr(A(\mu)-A_n(\mu)).
\end{eqnarray*}
where $r(K)$ is the dimension of operator $K$, namely, the dimension
of the closure of the range of $K$. Hence
\begin{eqnarray*}
|D_A(\mu)-D_{A_n}(\mu)|
=|D_A(\mu_0)\{1-\exp[\int_{\Gamma_{\mu_0}}tr(A(\mu)-A_n(\mu))]\}|<C\|A-A_n\|_1,
\end{eqnarray*}
where $C$ is another constant depending only on the curves $\Gamma$
and $L$ and the operator $A$,  then we have
\[|D_{A_n}(1)-D_{A}(1)|<CKe^{-\alpha n}.\]

Now we consider the case $T[\tilde{\psi}]$ is not invertible.
According to \cite{wi}, there is a $\phi$ with only finitely many
non-vanishing Fourier coefficients such that
$T[\tilde{\psi}+\varepsilon\tilde{\phi}]$ is invertible for all
sufficiently small nonzero $\varepsilon$. Let
$\eta_{\varepsilon}=\psi+\varepsilon\phi$, then the previous process
shows that
\[\|P_nH[\eta_\varepsilon]H[\tilde{\eta_\varepsilon}_{-}^{-1}]P_nT[{\eta_\varepsilon}_{+}^{-1}]P_n-H[\eta_\varepsilon]H[\tilde{\eta_\varepsilon}^{-1}]\|_{1}<Ke^{-\alpha n},\]
where $K,\alpha$ are independent of $\varepsilon$ since $\phi$ has
only finitely many non-vanishing Fourier coefficients. We consider
$\varepsilon$ on the boundary of a disk $D$
\[\max_{\varepsilon\in\partial D}|D_{A_n,\varepsilon}(1)-D_{A,\varepsilon}(1)|<CKe^{-\alpha n},\]
where
\begin{eqnarray*}
A_{\varepsilon}&=&H[\eta_{\varepsilon}]H[\tilde{\eta}_{\varepsilon}^{-1}],\\
A_{n,\varepsilon}&=&P_nH[\eta_\varepsilon]H[\tilde{\eta_\varepsilon}_{-}^{-1}]P_nT[{\eta_\varepsilon}_{+}^{-1}]P_n.
\end{eqnarray*}
Then the maximal modulus theorem says that
\[|D_{A_n}(1)-D_{A}(1)|<CKe^{-\alpha n}.\]
Since $G[\psi]=1$, we have
\[|\det T_n[\psi]-\lim_{n\rightarrow\infty}\det T[\psi]|< Ke^{-\alpha n},\]
where $0<K,\alpha<\infty$ are constants.
\end{proof}

\begin{lemma}Assume each fundamental domain has symmetric edge
weights with respect to a center vertex, as illustrated in Figure 8.
Assume $(p_0,q_0)$ is a vertex on the boundary of a fundamental
domain centered at $(0,0)$. If $i$ is even,
\[\langle\sigma_{p_0q_0}\sigma_{p_0+im,q_0+jn}\rangle_p\leq
\langle\sigma_{p_0q_0}\sigma_{p_0+im,q0}\rangle_p.\]
\end{lemma}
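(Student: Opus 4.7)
The plan is to deduce the inequality from reflection positivity of the periodic ferromagnetic Ising model through a vertical axis placed exactly halfway between $p_0$ and $p_0+im$. The parity hypothesis that $i$ is even, together with the assumption that each fundamental domain has symmetric edge weights about its center vertex, is precisely what makes the midpoint $x=p_0+im/2$ an axis of reflection symmetry of the full Hamiltonian. Indeed, since $p_0$ lies on the boundary of the fundamental domain centered at the origin, $p_0+im/2$ lies on the boundary of the fundamental domain centered at $((i/2)m,0)$; by the postulated reflection symmetry of each fundamental domain about its center, combined with horizontal periodicity, reflection $R$ across this line is a genuine symmetry of the interaction pattern. Note that without the parity assumption the midpoint would fall at a half-period translate of a fundamental-domain center, where no symmetry is guaranteed.

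Next I would invoke the Cauchy--Schwarz inequality provided by reflection positivity,
\begin{eqnarray*}
\langle F\cdot R(F')\rangle_p^2 \;\leq\; \langle F\cdot R(F)\rangle_p\,\langle F'\cdot R(F')\rangle_p,
\end{eqnarray*}
valid for functions $F,F'$ supported on spins strictly to one side of the reflection axis. Specializing to $F=\sigma_{p_0,q_0}$ and $F'=\sigma_{p_0,q_0+jn}$, and using $R(F)=\sigma_{p_0+im,q_0}$, $R(F')=\sigma_{p_0+im,q_0+jn}$, this yields
\begin{eqnarray*}
\langle\sigma_{p_0,q_0}\sigma_{p_0+im,q_0+jn}\rangle_p^2 \;\leq\; \langle\sigma_{p_0,q_0}\sigma_{p_0+im,q_0}\rangle_p\,\langle\sigma_{p_0,q_0+jn}\sigma_{p_0+im,q_0+jn}\rangle_p.
\end{eqnarray*}
The second factor on the right equals the first after translating vertically by $jn$, which preserves $\langle\cdot\rangle_p$; Griffiths' first inequality ensures that both factors are nonnegative. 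Taking square roots then gives the desired inequality.

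The main obstacle will be checking that the hypotheses of reflection positivity are genuinely met in this setting. The two ingredients to verify are: (i) that $R$ leaves the Hamiltonian invariant, which is a bookkeeping exercise using the symmetric-edge-weight assumption on each fundamental domain and the parity of $i$; and (ii) that the contribution to the Hamiltonian from bonds crossing the reflection axis can be written as a sum $-\sum_e J_e\,\sigma_u\sigma_{Ru}$ with $J_e\geq 0$, so that its kernel is positive semidefinite. The latter is immediate from the ferromagnetic hypothesis. Once (i) and (ii) are pinned down, the Cauchy--Schwarz inequality above is obtained by the standard argument: apply it first on finite tori $G_n$ of sufficiently large size so that both endpoints and the reflection axis sit well inside the torus, and then pass to the thermodynamic limit using the weak convergence of the finite-volume periodic Boltzmann measures $\mathcal{P}_n$ to $\langle\cdot\rangle_p$ established in Lemma~4.8.
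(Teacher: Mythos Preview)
Your argument via reflection positivity is correct, and at the mathematical level it is the same idea as the paper's proof, though the packaging is quite different. The paper works with the transfer matrix $Q_t$ in the horizontal direction: the symmetric-edge-weight hypothesis makes $Q_t$ symmetric, so after orthogonal diagonalization one writes the truncated correlation as $\lambda_{1,t}^{-i}\sum_{k\neq 1}\tilde F_t(1,k)\,\lambda_{k,t}^{i}\,\tilde G_t(k,1)$ and applies Cauchy--Schwarz, using that $\lambda_{k,t}^{i}\geq 0$ for even $i$ so that this sum is a genuine inner product. Your reflection-positivity Cauchy--Schwarz is exactly the abstract form of this computation: the symmetry of the transfer matrix is equivalent to the reflection invariance of the Hamiltonian, and the positivity of $Q_t$ (hence of its even powers) is what yields the reflection-positive bilinear form. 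The parity condition on $i$ plays the same role in both presentations---in the paper it guarantees $\lambda_{k,t}^i\geq 0$, in yours it places the midpoint axis at a genuine symmetry line.

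What each version buys: the paper's proof is self-contained and makes explicit why the symmetric-weight hypothesis and the evenness of $i$ are both needed; yours is shorter and conceptually cleaner but imports the reflection-positivity machinery as a black box. One technical point you should make precise when you fill in the details: on a finite torus one must reflect through a pair of parallel planes (the axis and its antipode), not a single plane, so the finite-volume inequality you cite requires the torus to be taken with an even number of fundamental domains in the horizontal direction and both spins to lie in the same half. This is routine but should be stated.
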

\begin{proof}Let $W$ be a layer of cylinder consisting of $t$
fundamental domains. In other words, if we use the number of
fundamental domains as a measure for the size of $W$, then the
circumference of $W$ is $t$, and the width of $W$ is 1. The transfer
matrix $Q_t$ is defined to be a square matrix whose rows (columns)
are labeled by the upper(lower) boundary configurations of $W$.
Namely, let $X$ be an upper boundary configuration, and $Y$ be an
lower boundary configuration, then
\begin{eqnarray*}
Q_t(X,Y)&=&\sum_{\{\sigma_r|r\in W\setminus\partial
W\}}\exp[-\beta(\frac{1}{2}\sum_{\{a,b|a\sim b,\sigma_a,\sigma_b\in
X\}}J_{ab}\sigma_a\sigma_b+\frac{1}{2}\sum_{\{c,d|c\sim
d,\sigma_c,\sigma_d\in Y\}}J_{cd}\sigma_c\sigma_d\\
&&+\sum_{e\in E(W\setminus\partial W)}J_e\sigma_u\sigma_v)].
\end{eqnarray*}
Without the external magnetic field,
\begin{eqnarray*}
\langle\sigma_{p_0q_0}\sigma_{p_0+im,q_0+jn}\rangle_p&=&\langle\sigma_{p_0q_0}\sigma_{p_0+im,q_0+jn}\rangle_p-\langle\sigma_{p_0q_0}\rangle_p\langle\sigma_{p_0+im,q_0+jn}\rangle_p\\
&=&4(\langle\rho_{p_0q_0}\rho_{p_0+im,q_0+jn}\rangle_p-\langle\rho_{p_0q_0}\rangle_p\langle\rho_{p_0+im,q_0+jn}\rangle_p).
\end{eqnarray*}
Hence it suffices to prove that
\[\langle\rho_{p_0q_0}\rho_{p_0+im,q_0+jn}\rangle_p-\langle\rho_{p_0q_0}\rangle_p\langle\rho_{p_0+im,q_0+jn}\rangle_p\leq\langle\rho_{p_0q_0}\rho_{p_0+im,q_0}\rangle_p-\langle\rho_{p_0q_0}\rangle_p\langle\rho_{p_0+im,q_0}\rangle_p.\]
Assume we have an $s\times t$ torus, where $s$ and $t$ are the
number of fundamental domains along each direction, then
\[\langle\rho_{p_0q_0}\rangle_{st}=\frac{\sum_{\{\sigma|\sigma_{p_0q_0}=1\}}\exp[-\beta\sum_{\{u,v|u\sim
v\}}J_{uv}\sigma_u\sigma_v]}{\sum_{\sigma}\exp[-\beta\sum_{\{u,v|u\sim
v\}}J_{uv}\sigma_u\sigma_v]}=\frac{tr(F_tQ_t^s)}{tr(Q_t^s)},\] where
$F_t$ is a diagonal matrix with
\[F_t(X,X)=\left\{\begin{array}{cc}1&\mathrm{if\ \sigma_{p_0q_0}=1\ in\ X}\\0&\mathrm{else}\end{array}.\right.\]
Since $Q_t$ is a symmetric matrix, all its eigenvalues are real,
there exists an orthogonal matrix $S_t$, such that
$\tilde{Q}_t=S^{-1}_tQ_tS_t$ is diagonal, and the modulus of its
eigenvalues decreases along the diagonal. Let
$\tilde{F}_t=S_t^{-1}F_tS_t$. Moreover, since $Q_t$ has strict
positive entries, the Perron-Frobenius Theorem says that the largest
eigenvalue in modulus of $Q_t$, $\lambda_{1,t}$, is simple and
strictly positive. Hence
\[\lim_{s\rightarrow\infty}\langle\rho_{p_0q_0}\rangle_{st}=\lim_{s\rightarrow\infty}\frac{tr(\tilde{F_t}\tilde{Q}_t^s)}{tr(Q_t^s)}=\tilde{F}_t(1,1).\]
Similarly,
\[\langle\rho_{p_0+im,q_0+jn}\rangle_{st}=tr(Q_t^{i}G_tQ_t^{s-i}),\]
and $G_t$ is a diagonal matrix with
\[G_t(X,X)=\left\{\begin{array}{cc}1&\mathrm{if \sigma_{p_0+im,q_0+jn}=1\ in\
X}\\0&\mathrm{else}\end{array}.\right.\] Assume
$\tilde{G}_t=S_t^{-1}G_tS_t$, then
\[\lim_{s\rightarrow\infty}\langle\rho_{p_0+im,q_0+jn}\rangle_{st}=\tilde {G}_t(1,1).\]
Finally we have
\begin{eqnarray*}
 \lim_{s\rightarrow\infty}\langle\rho_{p_0q_0}\rho_{p_0+im,q_0+jn}\rangle_{st}&=&\lim_{s\rightarrow\infty}\frac{tr(F_tQ_t^{i}G_tQ_t^{s-i})}{tr{Q_t^s}}\\
 &=&\frac{\tilde{F}_t\tilde{Q}_t^{i}\tilde{G}_t(1,1)}{\lambda_{1,t}^{i}}\\
 &=&\frac{\sum_k \tilde{F}_t(1,k)\lambda_{k,t}^{i}\tilde{G}_t(k,1)}{\lambda_{1,k}^{i}},
\end{eqnarray*}
therefore
\begin{eqnarray*}
&&\lim_{s\rightarrow\infty}(\langle\rho_{p_0q_0}\rho_{p_0+im,q_0+jn}\rangle_{st}-\langle\rho_{p_0q_0}\rangle_{st}\langle\rho_{p_0+im,q_0+jn}\rangle_{st})\\
&=&\frac{\sum_{k\neq1}\tilde{F}_t(1,k)\lambda_{k,t}^{i}{\tilde{G}_t(1,k)}}{\lambda_{1,k}^{i}}\\
&\leq&\frac{[\sum_{k\neq
1}\tilde{F}_t^2(1,k)\lambda_{k,t}^{i}\sum_{k\neq
1}\tilde{G}_t^2(1,k)\lambda_{k,t}^{i}]^{\frac{1}{2}}}{\lambda_{1,t}^{i}}\\
&=&\lim_{s\rightarrow\infty}\sqrt{(\langle\rho_{p_0q_0}\rho_{p_0+im,q_0}\rangle_{st}-\langle\rho_{p_0q_0}\rangle_{st}\langle\rho_{p_0+im,q_0}\rangle_{st})}\\
&&\sqrt{(\langle\rho_{p_0,q_0+jn}\rho_{p_0+im,q_0+jn}\rangle_{st}-\langle\rho_{p_0+im,q_0+jn}\rangle_{st}\langle\rho_{p_0,q_0+jn}\rangle_{st})}\\
&=&\lim_{s\rightarrow\infty}\langle\rho_{p_0q_0}\rho_{p_0+im,q0}\rangle_{st}-\langle\rho_{p_0q_0}\rangle_{st}\langle\rho_{p_0+im,q0}\rangle_{st}.
\end{eqnarray*}
The inequality follows from the Cauchy-Schwartz, it is an inner
product since $i$ is even, and the last equality follows from
translation invariance. The result follows by letting
$t\rightarrow\infty$, and the fact that the limit of the spin-spin
correlation is independent of the order of $s\rightarrow\infty$ and
$t\rightarrow\infty$.
\end{proof}

Assume we have a finite graph. Let $\gamma_{x,y}$ be a path
connecting $(0,0)$ and $(x,y)$, then
\begin{eqnarray*}
\langle\sigma_{00}\sigma_{xy}\rangle=Z^{-1}\prod_{e\in\gamma_{x,y}}\tanh
J_e\sum_{\sigma}\prod_{e\in\gamma_{x,y}}(1+\frac{1}{\tanh
J_e}\sigma_u\sigma_v)\prod_{e\in\gamma_{x,y}^c}(1+\tanh
J_e\sigma_u\sigma_v).
\end{eqnarray*}
The denominator (partition function $Z$) corresponds to the closed polygon configurations
while the numerator (all the other factors on the right) corresponds to configurations on the graph which
have an odd number of present edges at vertices $(0,0)$ and $(x,y)$,
and an even number of present edges at all the other vertices. To see that, first of all, the expansion$\sum_{\sigma}\prod_{e\in\gamma_{x,y}}(1+\frac{1}{\tanh
J_e}\sigma_u\sigma_v)\prod_{e\in\gamma_{x,y}^c}(1+\tanh
J_e\sigma_u\sigma_v)$ is the same as the weighted sum closed polygon configuration where all the edges except those on $\gamma_{xy}$ have weight $\tanh J_e$, while the edges on $\gamma_{xy}$ have weight $\frac{1}{\tanh{J_e}}$. When multiplied by $\prod_{e\in\gamma_{x,y}}\tanh
J_e$, this is the same as the weighted sum of configurations which can be obtained by a closed polygon configuration by changing all the present edge on $\gamma_{x,y}$ to be vacant, and all the vacant edges on $\gamma_{x,y}$ to be present, under the condition that all the edges in the graph have weight $\tanh_{J_e}$. For all the vertices which do not lie on $\gamma_{x,y}$ and in the interior of $\gamma_{x,y}$, there are still an even number of present adjacent edges after the change. However, for the two endpoints $(0,0)$ and $(x,y)$, an odd number of adjacent edges are present in the new configuration. The weighted sum of new configurations can be represented by a dimer model on graphs by removing an vertex of the gadgets corresponding to $(0,0)$ and $(x,y)$, where the original gadgets are illustrated as in Figure 2.
Namely, the 8 local configurations at a degree-4 vertex, with an even number of edges present, are in one-to-one correspondence with dimer configurations on two gadgets, each of which is a gadget form the right graph of Figure 2 by removing one vertex. Figure 7 and 8 give two examples of such correspondence.
\begin{figure}[htbp]
\centering
\includegraphics*{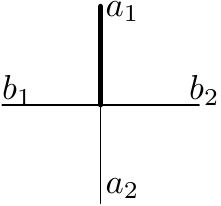}\qquad\qquad
\includegraphics*{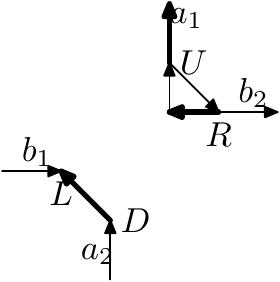}
\caption{odd-edge configuration and modified gadget: 1}
\end{figure}

\begin{figure}[htbp]
\centering
\includegraphics*{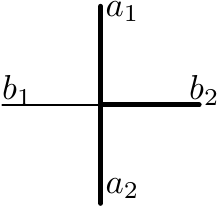}\qquad\qquad
\includegraphics*{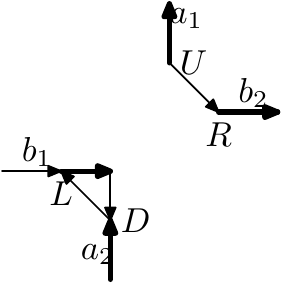}
\caption{odd-edge configuration and modified gadget: 2}
\end{figure}

Hence $\langle\sigma_{0,0}\sigma_{x,y}\rangle$ can be expressed as a sum
of monomer-monomer correlations the Fisher graph, that is, the
partition function of dimer configurations on a graph with two
vertices removed, divided by the partition function of the original
graph. 

We have the following corollary

\begin{corollary}Above the critical temperature, the monomer-monomer
correlation on symmetric, periodic, ferromagnetic Fisher graph
decays to zero exponentially fast. The decay rate is rotationally
invariant.
\end{corollary}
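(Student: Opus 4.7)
The strategy is to reduce the monomer-monomer correlation on the Fisher graph to a spin-spin correlation of the associated Ising model via the expansion stated immediately before the corollary, then to establish exponential decay of the spin-spin correlation by combining Lemma 5.1 (decay in a lattice direction) with Lemma 5.2 (direction-to-direction comparison), using the reflection symmetries of the interactions to obtain a rate independent of direction.

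Concretely, above $T_c$ we have $\beta<\beta_c$, and by Theorem 1.1 together with Lemma 4.13 this is exactly the regime in which the Fisher spectral curve has no real zero on $\mathbb{T}^2$, so the hypothesis of Lemma 5.1 is satisfied. Lemma 5.1 yields
\begin{equation*}
\bigl|\langle\sigma_{00}\sigma_{0N}\rangle^2 - \lim_{N\to\infty}\langle\sigma_{00}\sigma_{0N}\rangle^2\bigr|\leq K e^{-\alpha N},
\end{equation*}
and Lemma 4.12 identifies the limit as zero above $T_c$, so
\begin{equation*}
\langle\sigma_{00}\sigma_{0N}\rangle\leq \sqrt{K}\,e^{-\alpha N/2}.
\end{equation*}
The assumed reflection symmetry of the interactions exchanges the two coordinate axes, so the same bound (with the same rate) holds for $\langle\sigma_{00}\sigma_{N0}\rangle$. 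Lemma 5.2 then bounds the correlation in an arbitrary lattice direction $(im,jn)$ by the horizontal (respectively vertical) correlation whenever $i$ (resp.\ $j$) is even, and a translation by one period handles the odd-parity case. This gives $\langle\sigma_{00}\sigma_{xy}\rangle\leq C\,e^{-\alpha'\max(|x|,|y|)}$ with a rate $\alpha'$ independent of direction.

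To pass from spin-spin to monomer-monomer decay, I use the expansion preceding the corollary, which writes $\langle\sigma_{00}\sigma_{xy}\rangle$ as a finite sum (the number of terms depending only on the local combinatorics of the two endpoint gadgets) of nonnegative monomer-monomer correlations on the Fisher graph. Since every summand is dominated by the whole sum, each individual monomer-monomer correlation inherits the same exponential bound, and rotational invariance of the decay rate is inherited from the symmetry-based argument for the spin-spin correlation.

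The subtlest point, in my view, is the last step: verifying that each monomer-monomer correlation of interest really appears with a coefficient that is bounded below away from zero in the gadget expansion, so that the exponential bound transfers term-by-term rather than only to the aggregate sum. This is a finite combinatorial check at a single pair of gadgets and is independent of $|(x,y)|$; once carried out, rotational invariance of the decay rate follows automatically from the reflection symmetries of the interactions and the direction-uniformity already obtained at the spin level.
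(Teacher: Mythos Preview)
Your approach matches the paper's: the corollary is presented without a separate proof, as the immediate consequence of Lemmas~5.1 and~5.2 together with the expansion of $\langle\sigma_{00}\sigma_{xy}\rangle$ as a sum of monomer--monomer correlations. Two corrections are worth noting.

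First, you write that the reflection symmetry ``exchanges the two coordinate axes.'' It does not: the hypothesis is invariance under $x\mapsto -x$ and under $y\mapsto -y$ separately, not under $(x,y)\mapsto(y,x)$. Hence the rates furnished by Lemma~5.1 along the two lattice directions need not coincide. What the symmetry actually provides is that the transfer matrix $Q_t$ in the proof of Lemma~5.2 is symmetric (so real-diagonalizable), and that the analogue of Lemma~5.2 with the roles of the two axes swapped also holds. Combining both versions of Lemma~5.2 with the two independent applications of Lemma~5.1 yields
\[
\langle\sigma_{00}\sigma_{xy}\rangle\leq C\,e^{-\min(\alpha_x,\alpha_y)\max(|x|,|y|)},
\]
which is the direction-independent bound the corollary asserts.

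Second, your flagged worry about coefficients bounded below is unnecessary here. The correspondence described just before the corollary is a bijection between the eight local odd-degree configurations at a degree-four vertex and the single-vertex deletions from the gadget, so every monomer--monomer correlation enters the sum with a fixed positive coefficient depending only on the local edge weights of one gadget. The term-by-term upper bound is therefore immediate from nonnegativity.
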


\appendix
\section{Toeplitz and Hankel Operators}
If $\phi$ is a matrix-valued function defined on the unit circle with Fourier coefficients $\phi_k$, then $T[\phi]$, $H[\phi]$ are, respectively, the semi-infinite Toeplitz and Hankel matrices,
\begin{eqnarray*}
T[\phi]&=&(\phi_{i-j}),\qquad 0\leq i,j< \infty,\\
H[\phi]&=&(\phi_{i+j+1}),\qquad 0\leq i,j<\infty.
\end{eqnarray*}
$\phi$ is called the symbol of the Toeplitz and Hankel matrices. If $\phi$ is bounded, these may be thought as operators from the Hilbert space $\ell_2$ to $\ell_2$. In addition we writie
\[\tilde{\phi}(z)=\phi(z^{-1}).\]
There is a simply identity relating Toeplitz and Hankel operators 
\begin{equation}
T[\phi\psi]-T[\phi]T[\psi]=H[\phi]H[\tilde{\psi}^{-1}]\label{th}.
\end{equation}
Identity (\ref{th}) is trivial. The left side has (i,j) block
\begin{eqnarray*}
\sum_{k=-\infty}^{\infty}\phi_{i-k}\psi_{k-j}-\sum_{k=0}^{\infty}\phi_{i-k}\psi_{k-j}=\sum_{k=-\infty}^{-1}\phi_{i-k}\psi_{k-j}=\sum_{k=0}^{\infty}\phi_{i+k+1}\psi_{-k-j-1}.
\end{eqnarray*}
Two applications of the identity (\ref{th}) give (I=identity matrix)
\begin{eqnarray*}
T[\phi]T[\phi^{-1}]&=&I-H[\phi]H[\tilde{\phi}^{-1}],\\
T[\phi^{-1}]T[\phi]&=&I-H[\phi^{-1}]H[\tilde{\phi}].
\end{eqnarray*}

Let $X$ and $Y$ be Banach spaces. A bounded linear operator $T: X\rightarrow Y$ is called completely continuous if, for any weakly convergent sequence $(x_n)$ from $X$, the sequence $Tx_n$ is norm-convergent in $Y$.

A compact operator $L$ from $X$ to $Y$ is a linear operator such that the image under $L$ of any bounded subset of $X$ is a relatively compact subset of $Y$. Compact operators on Banach spaces are always completely continuous.

The collection $\mathfrak{F}_p(1\leq p<\infty)$ consists of all completely continuous operators $A$ for which
\begin{eqnarray*}
\sum_{j=1}^{\infty}s_j^p(A)<\infty,
\end{eqnarray*} 
where $s_j$ are singular values of $A$. The norm in $\mathfrak{F}_p$ is defined by 
\begin{eqnarray*}
\|A\|_p=(\sum_{j=1}^{\infty}s_j^p(A))^{\frac{1}{p}}.
\end{eqnarray*}
When $p=1$ the operator is called a trace-class operator. Here is an equivalent definition for the trace-class operator over a separable Hilbert space. A bounded linear operator $A$ over a separable Hilbert space $H$ is said to be in the trace class if for some orthonormal basis $\{e_k\}_{k=1}^{\infty}$ of $H$, the sum
\begin{eqnarray*}
\|A\|_1=\sum_{k}<(A^*A)^{\frac{1}{2}}e_k,e_k>
\end{eqnarray*}
is finite, where $A^*$ is the adjoint operator of $A$ under the inner product in $H$. In this case
\begin{eqnarray*}
\sum_{k}<Ae_k,e_k>
\end{eqnarray*}
is absolutely convergent and is independent of the choice of orthonormal basis.  This limit is denoted by $tr(A)$. When $A\in\mathfrak{F}_1$, this limit is equal to the sum of all eigenvalues of the operator $A$.

A Hilbert-Schmidt operator is an operator on Hilbert space satisfying $\|A\|_2<\infty$. A trace-class operator on Hilbert space is always a Hilbert-Schmidt operator. A Hilbert-Schmidt operator is always compact.

Given a Hankel matrix with continuous symbol $\phi$. The Fourier coefficients of $\phi$ decays exponentially fast. In this case, the Hankel operator is a Hilbert-Schmidt, and the product of two Hankel operators is a trace-class operator. Since $T[\phi]T[\phi^{-1}]=I-H[\phi]H[\tilde{\phi}^{-1}]$, we have $T[\phi]T[\phi^{-1}]$ is an operator differing from the identity by an operator of trace class, hence the determinant $E$ in Widom's theorem (Lemma 4.4) is well defined.

\section{Uniqueness of Gibbs Measure on Dimer Models of Non-bipartite Graphs}

One correspondence between the Ising spin configurations and dimer configurations is described in \cite{cim}. Consider the dual graph $\mathcal{G}^{*}$ of the graph $\mathcal{G}$ where the original Ising model is defined. Replace each vertex of the dual graph by a gadget. 

\begin{figure}[htbp]
\centering
\includegraphics*{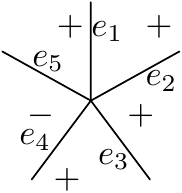}\qquad\qquad
\includegraphics*{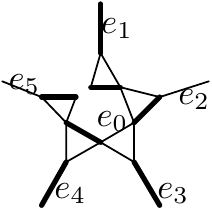}
\caption{Dual Ising Graph and Decorations}
\end{figure}

Assume all the edges which do not correspond to the edges of the dual graph have weight 1. 
For those edges corresponding to edges of the dual graph, assume an even number of such edges, surrounding each gadget, have weight less than 1, all the other dual edges have weights greater than or equal to 1. For example, in Figure 9, edges $e_1,e_2,e_3,e_4,e_5$ have corresponding edges in the dual graph(dual edges), while the edge $e_0$ does not correspond to any edge in the dual graph at all.
The two-to-one correspondence between the Ising spin configurations and the dimer configurations is described as follows:

\begin{enumerate}
\item If two adjacent spins have the same sign, and the weight of the edge
of $\mathcal{G}^{*}$ separating the two spins is bigger than 1, then the
edge  is present in the
dimer configuration;
\item If two adjacent spins have the same sign,
and the weight of the edge of $\mathcal{G}^{*}$ separating the two spins is less
than 1, then the  edge  is
not present in the dimer configuration;
\item If two adjacent spins have the opposite sign,
and the weight of the edge of $\mathcal{G}^{*}$ separating the two spins is less
than 1, then the edge  is
present in the dimer configuration;
\item If two adjacent spins have the opposite sign,
and the weight of the edge of $\mathcal{G}^*$ separating the two spins is bigger
than 1, then the  edge  is
not present in the dimer configuration;
\end{enumerate}

For example, in Figure 9, assume the weights satisfy  $w_{e_2}<1$, $w_{e_4}<1$, and $w_{e_1},w_{e_3},w_{e_5}>1$, then an Ising spin configuration on the left graph should correspond to a dimer configuration on the right graph, according to the 4 criteria listed above. 

\begin{corollary}For a graph obtained from the dual graph of a ferromagnetic Ising model as described above, there is a unique translation-invariant Gibbs measure defined on dimers.
\end{corollary}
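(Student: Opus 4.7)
The strategy is to transfer the uniqueness of even spin-spin correlation functions (Theorem 2.7) across the two-to-one measure-preserving correspondence between Ising spin configurations on $\mathcal{G}$ and dimer configurations on the decorated dual graph. Since the dimer configuration $D(\sigma)$ produced by rules (1)--(4) depends only on the unordered pair $\{\sigma,-\sigma\}$ (flipping all spins preserves ``same sign'' and ``opposite sign'' on every dual edge), every cylinder event in the dimer model pulls back to a set of spin configurations that is invariant under the global flip $\sigma\mapsto -\sigma$.

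First I would show that any translation-invariant Gibbs measure $\nu$ on the dimer model, pushed through the two-to-one correspondence, induces a flip-symmetric translation-invariant Gibbs measure $\mu$ for the ferromagnetic Ising model whose couplings $J_e$ are recovered from the dual edge weights $e^{-2J_e}$. This uses the measure-preserving property of Correspondence 1 from Section 3.2 on every gadget and the $\mathfrak{L}$-periodicity of the decoration. Next I would observe that, being flip-invariant on the spin side, the indicator of any cylinder dimer event expands as a polynomial in $\{\sigma_p\}$ containing only monomials $\sigma_A = \prod_{p \in A}\sigma_p$ with $|A|$ even. Therefore the $\nu$-probability of this dimer event is a finite linear combination of even correlations $\langle \sigma_A\rangle_\mu$, and by Theorem 2.7 each such $\langle \sigma_A\rangle_\mu$ is independent of the choice of translation-invariant Gibbs measure $\mu$. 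Thus $\nu$ is determined on cylinder events, and hence on the entire Borel $\sigma$-algebra.

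The main obstacle I expect is the pullback step: one must verify that $\mu$ really satisfies the DLR conditions for the Ising model on every finite region, starting from the DLR conditions satisfied by $\nu$ on the decorated graph. The two-to-one correspondence interacts with boundary data, since boundary spin conditions on a finite subregion of $\mathcal{G}$ translate into specific dimer boundary conditions on the associated decorated region, and one needs the Boltzmann weights on each gadget to match the Ising local specification (with the correct $e^{-2J_e}$ assignment on boundary dual edges) after summing over the internal degrees of freedom of the gadget. Once this local matching is established on a single period and extended by periodicity, translation invariance of $\mu$ follows directly from that of $\nu$, and the rest of the argument reduces to the already-proved Theorem 2.7.
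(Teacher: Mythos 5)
Your proposal is correct and follows essentially the same route as the paper: express the indicator of each dual-edge dimer variable as $\rho_e=\tfrac{1\pm\sigma_u\sigma_v}{2}$, note that non-dual-edge indicators are determined by dual ones, reduce dimer cylinder probabilities to finite linear combinations of even spin correlations $\langle\sigma_A\rangle$, and invoke Theorem~2.7. The only difference is that you explicitly flag the pullback/DLR-consistency step as a concern, which the paper passes over silently; that step is implicitly assumed there, so your version is actually the more careful rendering of the same argument.
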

\begin{proof} It suffices to prove the result on any finite cylindrical set. Let $\rho_e$ be the variable associated to an edge $e$, that is, if $e$ is present in the dimer configuration, $\rho_e=1$, otherwise $\rho_{e}=0$. If $e$ is not an dual edge, the configuration of $e$ is uniquely determined by dual edges. For example, in Figure
\[\rho_{e_0}=\rho_{e_3}\rho_{e_4}(1-\rho_{e_5})+\rho_{e_3}(1-\rho_{e_4})\rho_{e_5}\]
Hence it suffices to prove the result on an arbitrary cylindrical set consisting of dual edges. When $e$ is a dual edge
 let $\sigma_u,\sigma_v$ be
spins on endpoints of the dual edge $e^*$, then
\begin{eqnarray*}
\rho_e=\left\{\begin{array}{cc}\frac{1-\sigma_u\sigma_v}{2}&\mathrm{if}\
w_e<1\\ \frac{1+\sigma_u\sigma_v}{2}&\mathrm{if}\
w_e>1\end{array}\right.
\end{eqnarray*}
Hence $\langle\rho_{e_1}\cdots\rho_{e_k}\rangle$ depends only on spin-spin even correlation functions. The uniqueness of spin-spin even correlation functions implies the uniqueness of the translation-invariant Gibbs measure on dimers. 
\end{proof}

\end{document}